\tikzset{
        ->,  % makes the edges directed
        >=stealth, % makes the arrow heads bold
        node distance=4.5em, % specifies the minimum distance between two nodes. C
        % every state/.style={thick}, % sets the properties for each ’state’ node
        initial text=$ $, % sets the text that appears on the start arrow
%        solid/.style={circle,draw,inner sep = 0,fill=black} % for drawing solid black circles
        }
   \def\@citecolor{blue}%
   \def\@urlcolor{blue}%
   \def\@linkcolor{blue}%
\def\orcidID#1{\smash{\href{http://orcid.org/#1}{\protect\raisebox{-1.25pt}{\protect\includegraphics{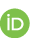}}}}}
\newtheorem{theorem}{Theorem}
\newtheorem{lemma}{Lemma}
\newtheorem{proposition}{Proposition}
\newtheorem{claim}{Claim}
\theoremstyle{definition}
\newtheorem{definition}{Definition}
\newtheorem{example}{Example}
\newcommand{\typec}[1]{{\color{RoyalBlue}{#1}}} % c for colour
\newcommand{\parc}[1]{\typec{#1}} % parameter
\newcommand{\blk}[1]{{\color{black}{#1}}}
\newcommand{\grmeq}{\; ::= \;}
\newcommand{\grmor}{\; \mid \;}
\newcommand{\keyword}[1]{\ensuremath{\mathsf{#1}}\xspace}
\newcommand{\tkeyword}[1]{\keyword{\typec{#1}}}
\newcommand{\pakeyword}[1]{\keyword{\typec{#1}}}
\newcommand{\identifier}[1]{\ensuremath{\mathrm{#1}}\xspace}
\newcommand{\Label}[1]{\identifier{#1}}	% another name for identifier 
\newcommand{\quitl}{\Label{quit}}
\newcommand{\gol}{\Label{go}}
\newcommand{\incl}{\Label{inc}}
\newcommand{\dumpl}{\Label{dump}}
\newcommand{\pushal}{\Label{addOut}}
\newcommand{\pushbl}{\Label{addIn}}
\newcommand{\popl}{\Label{pop}}
\newcommand{\leafl}{\Label{leaf}}
\newcommand{\nodel}{\Label{node}}
\newcommand{\leftl}{\Label{l}}
\newcommand{\rightl}{\Label{r}}
\newcommand{\al}{\Label{a}}
\newcommand{\bl}{\Label{b}}
\newcommand{\typename}[1]{\typec{T_{\mathrm{#1}}}}
\newcommand{\tloop}{\typename{loop}}
\newcommand{\tcounter}{\typename{counter}}
\newcommand{\tstack}{\typename{meta}}
\newcommand{\titer}{\typename{iter}}
\newcommand{\ttree}{\typename{tree}}
\newcommand{\tnested}{\typename{nest}}
\newcommand{\tkorhop}{\typename{KH}}
\newcommand{\Xin}{\typec{X_{\mathrm{in}}}}
\newcommand{\Xout}{\typec{X_{\mathrm{out}}}}
\newcommand{\operator}[1]{\operatorname{#1}}
\newcommand{\dom}{\operator{dom}} % Deprecated: use \notindom
\newcommand{\treeof}{\operator{treeof}}
\newcommand{\tree}{\treeof}
\newcommand{\Path}{\operator{path}}
\newcommand{\Eq}{\doteq} % The equal symbol in equations
\newcommand{\Empty}{\varepsilon} % The empty sequence (context, for example)
\newcommand{\teq}{\simeq} % type equivalence 
\newcommand{\Oplus}{\hspace{-.1ex}\oplus\hspace{-.1ex}} % internal choice with a better spacing
\newcommand{\pair}[2]{\langle{#1},{#2}\rangle}
\newcommand{\zero}{\pakeyword z}
\newcommand{\zerol}{\keyword{z}}
\newcommand{\succc}{\pakeyword s} % the constructor
\newcommand{\succl}{\keyword{s}}
\renewcommand{\succ}[1]{\succc\,#1}
\newcommand{\dl}{\keyword d}
\newcommand{\cl}{\keyword c}
\newcommand{\sysvariant}[1]{\mathsf{#1}}
\newcommand{\sysinfty}{\sysvariant \infty}
\newcommand{\sysf}{\sysvariant f}
\newcommand{\sysr}{\sysvariant r}
\newcommand{\sysone}{\sysvariant 1}
\newcommand{\sysp}{\sysvariant p}
\newcommand{\systwo}{\sysvariant 2}
\newcommand{\syscf}{\sysvariant c}
\newcommand{\sysn}{\sysvariant n}
\newcommand{\setname}[1]{\mathbb{#1}}
\newcommand{\types}{\setname{T}} % a class of types
\newcommand{\typesi}{\setname{T}_\sysinfty}
\newcommand{\typesf}{\setname{T}_\sysf}
\newcommand{\typesr}{\setname{T}_\sysr}
\newcommand{\typeso}{\setname{T}_\sysone}
\newcommand{\typesp}{\setname{T}_\sysp}
\newcommand{\typest}{\setname{T}_\systwo}
\newcommand{\typescf}{\setname{T}_\syscf}
\newcommand{\typesn}{\setname{T}_\sysn}
\newcommand{\labels}{\setname{L}}
\newcommand{\nbb}{\mathbb{N}}
\newcommand{\E}{\mathcal{E}}
\newcommand{\lcal}{\mathcal{L}}
\newcommand{\ocal}{\mathcal{O}}
\newcommand{\T}{\mathcal{T}}
\newcommand{\X}{\mathcal X} % program var set
\newcommand{\xcal}{\mathcal{X}} % program var set
\newcommand{\dual}{\overline}
\newcommand{\foralllabel}[2]{(\forall{#1}\in{#2})}
\newcommand{\foralllinL}{\foralllabel \ell L}
\newcommand{\declrel}[2]{#1\hfill\fbox{{#2}}}
\newcommand{\inductive}{(\emph{ind.})\xspace}
\newcommand{\coinductive}{(\emph{coind.})\xspace}
\newcommand{\End}{\tkeyword{end}}
\newcommand{\Skip}{\tkeyword{skip}}
\newcommand{\INn}[1]{\typec{?{#1}}}		% no trailing dot
\newcommand{\OUTn}[1]{\typec{!{#1}}}	% no trailing dot
\newcommand{\MSGn}[1]{\typec{\sharp\,{#1}}}	% no trailing dot
\newcommand{\IN}[2]{\typec{\INn{#1}.{#2}}}
\newcommand{\OUT}[2]{\typec{\OUTn{#1}.{#2}}}
\newcommand{\MSG}[2]{\typec{\MSGn{#1}.{#2}}}
\newcommand{\reck}{\tkeyword{rec}}
\newcommand{\REC}[2]{\typec{\reck\,{#1}.{#2}}}
\newcommand{\recordf}[3]{\{{#1}\colon {#2}\}_{{#1}\in{#3}}} % Full
\newcommand{\record}[3]{\recordf{#1}{#2_#1}{#3}}
\newcommand{\intchoice}{\typec\Oplus} % internal choice
\newcommand{\extchoice}{\typec\&}	% external choice
\newcommand{\choice}{\typec\star}
\newcommand{\semit}[2]{\typec{{#1};{#2}}}
\newcommand{\bool}{\tkeyword{bool}}
\newcommand{\Int}{\tkeyword{int}}
\newcommand{\inty}{\IN TU} % \int is taken
\newcommand{\outt}{\OUT TU}
\newcommand{\recordt}[3]{\typec{\record{#1}{#2}{#3}}}
\newcommand{\extct}{\extchoice\recordt \ell T L}
\newcommand{\intct}{\intchoice\recordt \ell T L}
\newcommand{\choicet}{\choice\recordt \ell T L}
\newcommand{\TT}{\typec{T}}
\newcommand{\UT}{\typec{U}}
\newcommand{\VT}{\typec{V}}
\newcommand{\WT}{\typec{W}}
\newcommand{\XT}{\typec X}
\newcommand{\YT}{\typec Y}
\newcommand{\Endl}{\keyword{end}}
\newcommand{\errorl}{\keyword{error}}
\newcommand{\badl}{\keyword{bad}}
\newcommand{\CALL}[2]{{#1}\langle{#2}\rangle}
\newcommand{\CALLT}[2]{\typec{#1}\typec\langle\parc{#2}\typec\rangle}
\newcommand{\CALLNT}[2]{\typec{#1}\typec\langle\typec{#2}\typec\rangle} % nested
\newcommand{\CALLTT}[3]{\typec{#1}\typec\langle\parc{#2}\typec,\parc{#3}\typec\rangle} % Two params
\newcommand{\subs}[2]{\blk[{\typec{#1}}\blk/{\typec{#2}}\blk]} 
\newcommand{\axiom}[1]{{#1}}
\newcommand{\judgementlabel}[1]{\mathrm{#1}} % labels in judgements 
\newcommand{\finlabel}[1]{\judgementlabel{{#1}_\sysf}} % Recursive
\newcommand{\reclabel}[1]{\judgementlabel{{#1}_\sysr}} % Recursive
\newcommand{\onelabel}[1]{\judgementlabel{{#1}_\sysone}} % 1-counter
\newcommand{\pushtlabel}[1]{\judgementlabel{{#1}_\sysp}} % 1-counter
\newcommand{\twolabel}[1]{\judgementlabel{{#1}_\systwo}} % 2-counter
\newcommand{\inflabel}[1]{\judgementlabel{{#1}_\sysinfty}} % Infinite
\newcommand{\conlabel}[1]{\judgementlabel{{#1}_\syscf}} % Context-free
\newcommand{\neslabel}[1]{\judgementlabel{{#1}_\sysn}} % Nested
\newcommand{\judgement}[2]{{#1} \: \judgementlabel{#2}}
\newcommand{\judgementrel}[3]{{#1} \; {#2} \; {#3}}
\newcommand{\isType}[2]{\judgement{\typec{#1}}{#2}} % In general
\newcommand{\istype}[1]{\isType{#1}{type}} % General
\newcommand{\istypef}[1]{\isType{#1}{\finlabel{type}}} % Finite
\newcommand{\istyper}[1]{\isType{#1}{\reclabel{type}}} % Recursive
\newcommand{\istypeo}[1]{\isType{#1}{\onelabel{type}}} % 1-counter
\newcommand{\istyped}[1]{\isType{#1}{\pushtlabel{type}}} % Pushdown
\newcommand{\istypet}[1]{\isType{#1}{\twolabel{type}}} % 2-counter
\newcommand{\istypei}[1]{\isType{#1}{\inflabel{type}}} % Infinite
\newcommand{\istypec}[1]{\isType{#1}{\conlabel{type}}} % Context-free
\newcommand{\istypen}[1]{\isType{#1}{\neslabel{type}}} % Nested
\newcommand{\isnat}[1]{\judgement{\parc{#1}}{nat}}
\newcommand{\isdone}[1]{\judgement{\typec{#1}}{\checkmark}}
\newcommand{\isnotdone}[1]{\judgement{\typec{#1}}{\not\!\checkmark}}
\newcommand{\isDual}[3]{\judgementrel{\typec{#1}}{#2}{\typec{#3}}}
\newcommand{\isdual}[2]{\isDual{#1}{\bot}{#2}}
\newcommand{\isdualr}[2]{\isDual{#1}{\reclabel{\bot}}{#2}}
\newcommand{\isdualo}[2]{\isDual{#1}{\onelabel{\bot}}{#2}}
\newcommand{\isduald}[2]{\isDual{#1}{\pushtlabel{\bot}}{#2}}
\newcommand{\isdualt}[2]{\isDual{#1}{\twolabel{\bot}}{#2}}
\newcommand{\isEquiv}[3]{\judgementrel{\typec{#1}}{#2}{\typec{#3}}}
\newcommand{\isequiv}[2]{\isEquiv{#1}{\teq}{#2}}
\newcommand{\isequivr}[2]{\isEquiv{#1}{\reclabel{\teq}}{#2}}
\newcommand{\isequivo}[2]{\isEquiv{#1}{\onelabel{\teq}}{#2}}
\newcommand{\isequivd}[2]{\isEquiv{#1}{\pushtlabel{\teq}}{#2}}
\newcommand{\isequivt}[2]{\isEquiv{#1}{\twolabel{\teq}}{#2}}
\newcommand{\issequiv}[2]{\isequiv{#1}{#2}}
\newcommand{\embeds}[2]{\judgementrel{\typec{#1}}{\hookrightarrow}{\typec{#2}}}
\newcommand{\iseqt}[2]{\judgementrel{\typec{#1}}{\Eq}{\typec{#2}}} % type equation
\newcommand{\iseq}[2]{\judgementrel{#1}{\Eq}{#2}} % type equation
\newcommand{\lhs}[2]{#1\langle{\parc{#2}}\rangle}
\newcommand{\lhst}[2]{\typec{#1}\typec\langle{\parc{#2}}\typec\rangle}
\newcommand{\lhstt}[3]{\lhst{\typec{#1}}{{#2}\typec{,}{#3}}} % two params
\newcommand{\ispeqt}[3]{\iseqt{\lhs{#1}{#2}}{#3}} % is parameter type
\newcommand{\iscontrt}[1]{\judgement{\typec{#1}}{contr}} % types
\newcommand{\iscontrd}[1]{\judgement{#1}{\pushtlabel{contr}}} % Pushdown
\newcommand{\vv}[1]{\marginpar{\textcolor{blue}{#1}}}
\renewcommand{\vv}[1]{\marginpar{\textcolor{blue}{#1}}}
\newcommand{\ie}{i.e.,\xspace} % comma
\newcommand{\eg}{e.g.\xspace}  % no comma
\newcommand{\etal}{et al.\xspace} % al abbreviates aliae, no need for am \emph,
\newcommand{\cf}{cf.~} % Latin confer
\newcommand{\Endrulename}{End} % \endrulename is taken by LaTeX
\newcommand{\msgrulename}{Msg}
\newcommand{\choicerulename}{Choice}
\newcommand{\skiprulename}{Skip}
\newcommand{\semirulename}{Semi}
\newcommand{\tidrulename}{Id}
\newcommand{\zrulename}{z}
\newcommand{\srulename}{s}
\newcommand{\rulename}[1]{\text{\small\sc #1}\xspace}
\newcommand{\dualrulename}[1]{\rulename{D-{#1}}}
\newcommand{\dend}{\dualrulename{\Endrulename}}
\newcommand{\dmsg}{\dualrulename{\msgrulename}}
\newcommand{\dchoice}{\dualrulename{\choicerulename}}
\newcommand{\typeequivrulename}[1]{\rulename{E-{#1}}}
\newcommand{\eqend}{\typeequivrulename{\Endrulename}}
\newcommand{\eqmsg}{\typeequivrulename{\msgrulename}}
\newcommand{\eqchoice}{\typeequivrulename{\choicerulename}}
\newcommand{\eqidl}{\typeequivrulename{ConsL}}
\newcommand{\eqidr}{\typeequivrulename{ConsR}}
\newcommand{\eqzl}{\typeequivrulename{\zrulename L}}
\newcommand{\eqsl}{\typeequivrulename{\srulename L}}
\newcommand{\eqskip}{\typeequivrulename{\skiprulename}}
\newcommand{\eqsemi}{\typeequivrulename{\semirulename}}
\newcommand{\eqmsgskipl}{\typeequivrulename{MsgSkipL}}
\newcommand{\eqmsgsemil}{\typeequivrulename{MsgSemiL}}
\newcommand{\eqneutl}{\typeequivrulename{NeutL}}
\newcommand{\eqneutronel}{\typeequivrulename{Neut1L}}
\newcommand{\eqneutrtwol}{\typeequivrulename{Neut2L}}
\newcommand{\eqneutrtwor}{\typeequivrulename{Neut2R}}
\newcommand{\eqidsemil}{\typeequivrulename{IdSemiL}}
\newcommand{\eqassocl}{\typeequivrulename{AssocL}}
\newcommand{\eqassocr}{\typeequivrulename{AssocR}}
\newcommand{\eqdistl}{\typeequivrulename{DistL}}
\newcommand{\typeformrulename}[1]{\rulename{T-{#1}}}
\newcommand{\wend}{\typeformrulename{\Endrulename}}
\newcommand{\wmsg}{\typeformrulename{\msgrulename}}
\newcommand{\wchoice}{\typeformrulename{\choicerulename}}
\newcommand{\wid}{\typeformrulename{\tidrulename}}
\newcommand{\wz}{\typeformrulename{\zrulename}}
\newcommand{\ws}{\typeformrulename{\srulename}}
\newcommand{\wskip}{\typeformrulename{Skip}}
\newcommand{\wsemi}{\typeformrulename{Semi}}
\newcommand{\termrulename}[1]{\rulename{$\checkmark$-{#1}}}
\newcommand{\teskip}{\termrulename{Skip}}
\newcommand{\tesemi}{\termrulename{Semi}}
\newcommand{\teid}{\termrulename{Id}}
\newcommand{\embedrulename}[1]{\rulename{Emb-{#1}}}
\newcommand{\embskip}{\embedrulename{\skiprulename}}
\newcommand{\embmsg}{\embedrulename{\msgrulename}}
\newcommand{\embchoice}{\embedrulename{\choicerulename}}
\newcommand{\embid}{\embedrulename{\tidrulename}}
\newcommand{\embsemiskip}{\embedrulename{\semirulename\skiprulename}}
\newcommand{\embsemimsg}{\embedrulename{\semirulename\msgrulename}}
\newcommand{\embsemichoice}{\embedrulename{\semirulename\choicerulename}}
\newcommand{\embsemiid}{\embedrulename{\semirulename\tidrulename}}
\newcommand{\embsemisemi}{\embedrulename{\semirulename\semirulename}}
\newcommand{\contrcrulename}[1]{\rulename{C-{#1}}}
\newcommand{\cend}{\contrcrulename{\Endrulename}}
\newcommand{\cskip}{\contrcrulename{\skiprulename}}
\newcommand{\cchoice}{\contrcrulename{\choicerulename}}
\newcommand{\ctid}{\contrcrulename{\tidrulename}}
\newcommand{\ctz}{\contrcrulename{\zrulename}}
\newcommand{\cts}{\contrcrulename{\srulename}}
\newcommand{\csemid}{\contrcrulename{Semi1}}
\newcommand{\csemind}{\contrcrulename{Semi2}}
\newcommand{\cmsg}{\contrcrulename{\msgrulename}}
\newcommand{\premspace}{\quad\;}
\newcommand{\rulewend}{
  \axiom{\istype{\End}}
}
\newcommand{\rulewmsg}{
  \frac{\istype T\premspace\istype U}{\istype{\MSG TU}}
}
\newcommand{\rulewchoice}{
  \frac{\istype{T_\ell}\premspace\foralllinL}{\istype{\choicet}}
}
\newcommand{\ruleeqend}{\axiom{\issequiv \End \End}}
\newcommand{\ruleeqmsg}{
  \frac{
    \isequiv T U
    \premspace
    \issequiv V W
  }{
    \issequiv{\MSG TV}{\MSG UW}
  }
}
\newcommand{\ruleeqchoice}{
  \frac{
    \issequiv{T_l}{U_l}
    \premspace
    \foralllinL
  }{
    \issequiv{\choicet}{\choice{\record \ell UL}}
  }
}
\newcommand{\ruledualend}{\axiom{\isdual\End\End}}
\newcommand{\ruledualmsg}{
  \frac{
    \isequiv T U 
    \premspace
    \isdual V W
  }{
    \isdual{\MSG TV}{\blk{\dual{\typec\sharp}}\,U.W}
  }
}
\newcommand{\ruledualchoice}{
  \frac{
    \isdual{T_\ell}{U_\ell}
    \premspace
    \foralllinL
  }{
    \isdual{\choice{\record \ell TL}}{\blk{\dual{\typec\star}}{\record \ell UL}}
  }
}
\begin{document}

\title{The Different Shades of Infinite Session Types\thanks{Supported by EPSRC
    EP/T014628/1 ``Session Types for Reliable Distributed Systems'', by FCT
    PTDC/CCI-CIF/6453/2020 ``Safe Concurrent Programming with Session Types''
    and by the LASIGE Research Unit UIDB/00408/2020 and UIDP/00408/2020.}}

%% Author information

\author{Simon J. Gay
\thanks{School of Computing Science, University of Glasgow, UK. 
{\tt\href{mailto:simon.gay@glasgow.ac.uk}{\nolinkurl{simon.gay@glasgow.ac.uk}}}}
\orcidID{0000-0003-3033-9091}
\and
Diogo Poças
\thanks{LASIGE, Faculdade de Ciências, Universidade de Lisboa, Portugal.
{\tt\{\href{mailto:dmpocas@ciencias.ulisboa.pt}{\nolinkurl{dmpocas}},\href{mailto:vmvasconcelos@ciencias.ulisboa.pt}{\nolinkurl{vmvasconcelos}}\}@ciencias.ulisboa.pt}}
\orcidID{0000-0002-5474-3614}
\and
Vasco T. Vasconcelos
\footnotemark[3]
\orcidID{0000-0002-9539-8861}
}
\maketitle

% !TeX root = main.tex

%\begin{abstract}
%  There are different shades of types, session types in particular. On the one
%  end there is darkness, the world can only be grasped at an arm's length. One
%  can move one step forward but that will only yield another arm's length of
%  understanding. The world remains finite, inductive. At the other extreme there
%  is absolute light; one can see at unimaginable distances and beyond. This is
%  the infinite, coinductive world. In the between there are different shades of
%  grey. We identify four such hues and show that they are all dissimilar by
%  establishing correspondences with classes of automata: recursive, 1-counter,
%  pushdown and 2-counter. This allows us to establish decidability and
%  undecidability results for the problems of type formation, type equivalence
%  and duality under the different classes of types.
%\end{abstract}

\begin{abstract}
  Many type systems include infinite types. In session type systems, which are
  the focus of this paper, infinite types are important because they allow the
  specification of communication protocols that are unbounded in time. Usually
  infinite session types are introduced as simple finite-state expressions
  $\REC X T$ or by non-parametric equational definitions $\iseqt{X}{T}$.
  Alternatively, some systems of label- or value-dependent session types go
  beyond simple recursive types. However, leaving dependent types aside, there
  is a much richer world of infinite session types, ranging through various
  forms of parametric equational definitions, all the way to arbitrary infinite
  types in a coinductively defined space. We study infinite session types across
  a spectrum of shades of grey on the way to the bright light of general
  infinite types. We identify four points on the spectrum, characterised by
  different styles of equational definitions, and show that they form a strict
  hierarchy by establishing bidirectional correspondences with classes of
  automata: finite-state, 1-counter, pushdown and 2-counter. This allows us to
  establish decidability and undecidability results for the problems of type
  formation, type equivalence and duality in each class of types. We also
  consider previous work on context-free session types (and extend it to
  higher-order) and nested session types, and locate them on our spectrum of
  infinite types.
\end{abstract}

%%% Local Variables:
%%% mode: latex
%%% TeX-master: "main"
%%% End:

% !TeX root = main.tex 

\section{Introduction}
\label{sec:introduction}

Session types
\cite{DBLP:conf/concur/Honda93,DBLP:conf/esop/HondaVK98,DBLP:journals/csur/HuttelLVCCDMPRT16,DBLP:conf/parle/TakeuchiHK94}
are an established approach to specifying communication protocols, so that
protocol implementations can be verified by static typechecking or dynamic
monitoring. The simplest protocols are finite: for example,
$\IN{\Int}{\OUT{\bool}\End}$ describes a protocol in which an integer is
received, then a boolean is sent, and that's all. Most systems of session types,
however, include equi-recursive types for greater expressivity. A type that endlessly
repeats the simple send-receive protocol is $\XT$ such that
$\iseqt{X}{\IN{\Int}{\OUT{\bool}X}}$, which can also be specified by
$\REC{X}{\IN{\Int}{\OUT{\bool}X}}$. More realistic examples usually combine
recursion and choice, as in $\YT$ such that
$\iseqt{Y}{\typec{ \extchoice\{\gol\colon\IN{\Int}{\OUT{\bool}\YT},
  \quitl\colon\End\}}}$ which offers a choice between $\typec\gol$ and
$\typec\quitl$ operations, each with its own protocol. A natural observation is
that session types look like finite-state automata, but some systems from the
literature go beyond the finite-state format: for example, context-free
session types \cite{DBLP:conf/icfp/ThiemannV16} and nested session types
\cite{DBLP:conf/esop/DasDMP21,DBLP:journals/corr/abs-2103-15193}, as well as label-dependent session types
\cite{DBLP:journals/pacmpl/ThiemannV20} and value-dependent session types
\cite{DBLP:conf/ppdp/ToninhoCP11}.

Even without introducing dependent types, a range of definitional formats can be
considered for session types, presumably with varying degrees of expressivity,
but they have never been systematically studied. That is the aim of the present
paper. We consider various forms of parameterised equational definitions,
illustrated by six running examples. Because our formal system only has one base
type, the terminated channel type $\End$, the running examples simply use $\End$
(or $\Skip$ for context-free session types) as a representative basic message
type that could otherwise be $\bool$ or $\Int$.

Our study of classes of infinite types should be generally applicable; we make
it concrete by concentrating on session types where (potential) infinite types
occur naturally.
For the sake of uniformity, all our non-finite session types are introduced by
equations, rather than, say, $\reck$-types. Equations may be further
parameterized, thus accounting for types that go beyond recursive types. The
examples below illustrate the different kinds of parameterized equations we use.

\begin{example}[No parameters]
  \label{exa:rec}
  Type $\tloop$ is $\typec X$ with equation $\iseqt X {\OUT \End X}$.
  Intuitively $\tloop=\OUT\End{\OUTn\End\dots}$ continuously outputs values of type $\End$.
\end{example}

\begin{example}[One natural number parameter]
  \label{exa:onecounter}
  Assuming $\zero$ and $\succc$ as the natural number constructors and $\typec N$
  as a variable over natural numbers, type $\tcounter$ is $\CALLT X \zero$ with
  equations
  \begin{align*}
    \lhst X \zero \Eq&\; \typec{\& \{\incl\colon \CALLT X {\succ\zero}, \dumpl\colon \CALLT Y\zero\}}
    &&&
        \lhst Y \zero \Eq&\; \End
    \\
    \lhst X {\succ N} \Eq&\; \typec{\& \{\incl\colon \CALLT
                           X{\succ{\succ N}}, \dumpl\colon \CALLT Y{\succ N}\}}
    &&&
        \lhst Y {\succ N} \Eq&\; \OUT \End {\CALLT Y N}
  \end{align*}
A sequence of $n$ $\typec\incl$ operations followed by a $\typec\dumpl$
triggers a reply of $n$ $\End$ output messages.\footnote{The final $\End$ at
$\iseqt {\lhst Y \zero}\End$ closes the channel and does not count as a message.}
\end{example}

\begin{example}[Context-free types]
  \label{exa:context-free}
  With type $\Skip$ used either to finish a session or to move to the next
  operation, type $\ttree$ is $\XT$ with equation
  \begin{equation*}
    \iseqt{X}{\&\{\leafl\colon\Skip, \nodel\colon \semit X {\semit {\INn\Skip} X}\}}
  \end{equation*}
  The $\typec\leafl$ choice terminates the reception of a binary tree of $\Skip$
  values and the $\typec\nodel$ choice triggers the reception of a (left) tree,
  followed by $\INn\Skip$ (root), followed by a (right) tree. Even though
  the development in the rest of the paper considers higher-order types (where
  messages may convey arbitrary types rather than $\Skip$ alone), for simplicity
  our example is first-order.
\end{example}

\begin{example}[One list parameter]
  \label{exa:pushdown}
  Assuming $\typec\sigma$ and $\typec\tau$ as symbols and $\typec S$ as a
  variable over sequences of symbols (with $\typec\Empty$ the empty sequence), type
  $\tstack$ is $\CALLT X\Empty$ with equations
  \begin{align*}
    \lhst X \Empty \Eq&\; \typec{\&\{
                        \pushal\colon \CALLT X \sigma,
                        \pushbl\colon \CALLT X \tau\}}
    \\
    \lhst X {\sigma S} \Eq&\; \typec{\&\{
                            \pushal\colon \CALLT X {\sigma\sigma S},
                            \pushbl\colon \CALLT X {\tau\sigma S},
                            %\topl\colon \OUT \End {\CALLT X {\sigma S}},
                            \popl\colon \OUT \End {\CALLT X {S}}\}}
    \\
    \lhst X {\tau S} \Eq&\; \typec{\&\{
                           \pushal\colon \CALLT X {\sigma\tau S},
                           \pushbl\colon \CALLT X {\tau\tau S},
                           %\topl\colon \IN \End {\CALLT X {\tau S}},
                           \popl\colon \IN \End {\CALLT X {S}}\}}
  \end{align*}
  Type $\tstack$ records simple protocols composed of $\OUTn\End$ and $\INn\End$
  messages. Symbol $\typec\sigma$ in a parameter to a type constructor $\XT$
  denotes an output message and symbol $\typec\tau$ an input message. The
  protocol behaves as a stack with two distinct push operations ($\typec\pushal$
  and $\typec\pushbl$). The symbol ($\typec\sigma$ or $\typec\tau$) at top of
  the stack determines whether a $\typec\popl$ operation triggers 
%an $\OUTn\End$  or an $\INn\End$ message, 
$\OUTn\End$  or $\INn\End$,
respectively.
\end{example}

\begin{example}[Nested types]
  \label{exa:nested}
  Taking $\typec\alpha$ as a variable over types, type $\tnested$ is $\typec{X_\Empty}$ with
  equations
  \begin{align*}
    \typec{X_\Empty} \Eq&\; \typec{\&\{
      \pushal\colon \CALLT \Xout {X_\Empty},
      \pushbl\colon \CALLT \Xin {X_\Empty}
    \}}
    \\                                 
    \lhst \Xout \alpha \Eq&\; \typec{\&\{
      \pushal\colon \CALLT \Xout {\CALLT \Xout \alpha},
      \pushbl\colon \CALLT \Xin {\CALLT \Xout \alpha},
      %\topl\colon \OUT \End {\CALLT \Xout \alpha},
      \popl\colon \OUT \End \alpha
    \}}
    \\                                 
    \lhst \Xin \alpha \Eq&\; \typec{\&\{
      \pushal\colon \CALLT \Xout {\CALLT \Xin \alpha},
      \pushbl\colon \CALLT \Xin {\CALLT \Xin \alpha},
      %\topl\colon \IN \End {\CALLT \Xin \alpha},
      \popl\colon \IN \End \alpha
    \}}
  \end{align*}
  Type constructors such as $\typec{X_\Empty}, \Xout, \Xin$ take an arbitrary but fixed
  number of arguments. Type $\tnested$ behaves as $\tstack$ in
  \cref{exa:pushdown}. The alignment should be clear if we take, \eg
  $\CALLT \Xout {\CALLT \Xin \alpha}$ for $\CALLT X {\sigma\tau S}$, with
  $\typec\sigma$ denoting output and $\typec\tau$ denoting input. Type
  constructors $\Xout$ and $\Xin$ play the roles of stack symbols (symbols at
  the top of the stack, $\typec\sigma$ or $\typec\tau$); type variable
  $\typec\alpha$ denotes the lower part of the stack ($\typec S$ in
  \cref{exa:pushdown}).
\end{example}

\begin{example}[Two natural number parameters]
  \label{exa:turing}
  Type $\titer$ is $\CALLTT X\zero\zero$ with %equations
  \begin{align*}
    \lhstt{X}{\zero}{N'} \Eq&\; {\IN\End {\CALLTT Y \zero{\succ{N'}}}}
    &
        \lhstt{Y}{N}{\zero} \Eq&\; {\CALLTT X N \zero}
    \\
    \lhstt{X}{\succ N}{N'} \Eq&\; {\OUT\End {\CALLTT X N {\succ{N'}}}}
    &
        \lhstt{Y}{N}{\succ{N'}} \Eq&\; {\CALLTT Y {\succ N}{N'}}
  \end{align*}
  Informally, writing $\OUTn{\End^n}$ for a sequence of $n$ output $\End$
  messages, these definitions give
  $\titer={}
  \IN\End{\OUT{\End^1}{\IN\End{\OUT{\End^2}{\IN\End{\OUTn{\End^3}}}}}} \typec\dots$
\end{example}

It is intuitively clear that \cref{exa:onecounter,exa:nested,exa:pushdown,exa:turing}
cannot be expressed without parameters. It is perhaps less clear that each
definitional style in \cref{exa:rec,exa:onecounter,exa:pushdown,exa:turing} is strictly more expressive than the
previous one. This is the main result of the paper. We establish a hierarchy
from
%It is intuitively clear that context-free session types are more expressive than finite-state recursive session types. But what about parameterised definitions such as $T$ and $U$? And can the expressivity be increased by allowing more parameters, or by allowing parameters of different types? What kind of hierarchy is there, from 
finite session types all the way up to non-computable types that have no representation at all. The latter certainly exist, because for every infinite binary expansion of a real number between zero and one there is a session type derived by mapping 0 to send and 1 to receive --- and we know for cardinality reasons that almost all of these types are non-computable.
%For illustrative purposes we describe a spectrum from black (finite) through several shades of grey to white (non-computable).

Our methodology is to develop the connection between session types and automata,
in particular between progressively more expressive definitional styles and
progressively more powerful classes of automata. We also consider the formal
language class corresponding to each class of automata, and the decidability of
important properties such as contractiveness, type formation, type equivalence and type duality. Our
results are summarised in the table below, establishing a hierarchy of session
types in parallel to the Chomsky hierarchy of languages, where by a 1-counter
language, we mean a language accepted by a (deterministic) 1-counter automaton
and where DCFL abbreviates deterministic context-free languages.
In the final row of the table we make it clear that it is impossible to give an
explicit example of a non-computable type or to even state the decision
problems.
%
%We should mention here 

Context-free and 1-counter types are incomparable. Essentially, both models lie between levels 2 and 3 of the Chomsky hierarchy and correspond to different restrictions of deterministic pushdown automata. Context-free types correspond to constraining automata with a single state, whereas 1-counter types correspond to constraining the stack to have a single symbol.
\begin{center}
  \begin{tabular}{c|cccc}
    % Type class & Example & Type formation & Type duality/equivalence & Language model\\
Type class & Example & Contractiveness & Type duality / & Language model\\
           &         &                 & equivalence    &               \\
    \hline
    Finite & {\small $\OUT\End\End$} & Polytime & Polytime & Finite languages\\
    Recursive & $\tloop$ & Polytime & Polytime & Regular languages\\
    1-counter & $\tcounter$ & Polytime & Polytime & 1-counter languages\\
    HO context-free & $\ttree$ & Polytime & Decidable & Open\footnotemark\\
    Pushdown & $\tstack$ & Polytime & Decidable & DCFL\\
    Nested & $\tnested$ & Polytime & Decidable & DCFL\\
    2-counter & $\titer$ & Undecidable & Undecidable & Decidable languages\\
    Non-computable & --- & --- & --- & General languages
  \end{tabular}
\end{center}
\footnotetext{Possibly languages accepted by a single-state pushdown automata with empty stack acceptance.}

Our main contributions can be summarized as follows.

\begin{itemize}
  \item We propose three novel formal systems for representing session types (1-counter, pushdown, 2-counter), show that they are strictly more expressive than recursive session types, and that each system is strictly more expressive than the previous one (\cref{thm:inclusions}).
  \item We show that nested session types \cite{DBLP:conf/esop/DasDMP21} are
    equivalent to pushdown session types (\cref{thm:inclusions}).
  \item We introduce higher-order context-free session types and show that they
    stand between recursive and pushdown types, strictly (\cref{thm:inclusions}).
  \item We characterize each of the novel session types in our paper by a
    corresponding class in the Chomsky hierarchy of languages. Notably, we show
    that each model captures precisely the power of the corresponding class of
    automata (\cref{thm:equivalencetypesautomata}).
    This is in contrast with the results of Das \etal \cite{DBLP:conf/esop/DasDMP21}, who only show (in one direction) that nested session types can be simulated by deterministic pushdown automata.
  \item We prove that the problems of type formation, type equivalence and type
    duality are decidable up to pushdown session types (\cref{thm:typeformation,thm:typeequivalence,thm:typeduality}), but undecidable
    for 2-counter session types (\cref{thm:undecidability}). This implies, in
    particular, that equivalence for higher-order context-free session types is
    decidable. The decidability results are not entirely unexpected, given that
    type equivalence for nested session types was recently shown to be decidable
    \cite{DBLP:conf/esop/DasDMP21}, and that these are equivalent to pushdown
    types. However, 
%the proofs here 
our proofs 
are independent of Das \etal 
    \cite{DBLP:conf/esop/DasDMP21}.
  \item Finally, we show a technical result in formal language theory of independent interest: every (deterministic 1-counter, deterministic pushdown, deterministic 2-counter) automaton that accepts a prefix-closed language can be converted into an automaton with a single non-accepting state, which acts as a sink (\cref{thm:obviouslyprefixclosed}). 
\end{itemize}

% \todo{clean up redundant information in paragraph before the table.}

\paragraph{Organization of the paper}
In \cref{sec:types-procs} we introduce the various classes of session types. In
\cref{sec:treeslanguages} we explain how to associate to each given type a
labelled infinite tree, as well as a set which we call the language of traces of
that type. We also state our main results on the strict hierarchy of types and
on how previously studied classes of types fit into this hierarchy
(\cref{thm:inclusions}). In \cref{sec:systemstoautomata} we describe
%the different classes of automata and 
how to convert a type into an automaton
accepting its traces. In \cref{sub:automatatosystems} we travel in the converse
direction, \ie from an automata into the corresponding type, and present a
characterisation theorem of the different types in our hierarchy
(\cref{thm:equivalencetypesautomata}).
In \cref{sec:hierarchy,sec:hierarchy2} we provide the details in the proof of \cref{thm:inclusions}; \cref{sec:hierarchy} proves the main hierarchy and \cref{sec:hierarchy2} proves the results for context-free and nested session types.
%In \cref{sec:hierarchy}, we prove some of
%the relevant inclusions from \cref{thm:inclusions}.
% In \cref{sec:automata} we go into the more
% technical aspects of the paper. We recall several different models of
% computation based on automata, and show the different equivalences between types
% and automata
% (\cref{thm:recursivetypeautomaton,thm:1countertypeautomaton,thm:pushdowntypeautomaton,thm:2countertypeautomaton}).
%In \cref{sec:automata}, we present
In \cref{sec:decidability} we present
our main algorithmic results: type formation, type equivalence
and type duality are all decidable up to pushdown types (\cref{thm:typeformation,thm:typeequivalence,thm:typeduality}),
and undecidable for 2-counter types (\cref{thm:undecidability}).
In \cref{sec:related} we give an overview of related work and \cref{sec:conclusion} concludes the paper.

% One result of independent interest, which is worth mentioning here, is \cref{thm:obviouslyprefixclosed} in \cref{sub:automatatosystems}; there we show that a (deterministic 1-counter, deterministic pushdown) automaton that accepts a prefix-closed language can be converted into an automaton with a single non-accepting state, which acts as a sink.

%%% Local Variables:
%%% mode: latex
%%% TeX-master: "main"
%%% End:

% !TeX root = main.tex
\section{Shades of types}
\label{sec:types-procs}

This section introduces the various %shades of 
session types in a uniform framework.

% FINITE

\begin{figure}[t!]
  \begin{multicols}{2}
    Polarity and view\hfill{}
    \begin{align*}
      \typec\sharp \grmeq{} \typec? \grmor{} \typec!
      &&
      \choice \grmeq \extchoice \grmor \intchoice
    \end{align*}
    \declrel{Type formation}{$\istype{T}$}
    \begin{gather*}
      \tag\wend\rulewend
      \\
      \tag\wmsg\rulewmsg
      \\
      \tag\wchoice\rulewchoice
    \end{gather*}  
    \declrel{Type equivalence}{$\isequiv TT$}
    \begin{gather*}
      \tag\eqend\ruleeqend % subsumed by T-Refl
    \end{gather*}
    \begin{gather*}
      \\
      \tag\eqmsg\ruleeqmsg
      \\
      \tag\eqchoice\ruleeqchoice
    \end{gather*}
    \declrel{Duality}{$\isdual SS$}
    \begin{align*}
      \dual{\typec?} ={} \typec!
      &&
      \dual{\typec!} ={} \typec?
      &&
      \dual{\typec\&} ={} \typec\oplus
      &&
      \dual{\typec\oplus} ={} \typec\&
    \end{align*}
    \begin{gather*}
      \tag\dend\ruledualend
      \\
      \tag\dmsg\ruledualmsg
      \\
      \tag\dchoice\ruledualchoice
    \end{gather*}
    % 
    % \declrel{Process formation}{$\isproc\Gamma P$}
    % %
    % \begin{gather*}
    %   \tag\tinact\ruleprocinact
    %   \\
    %   \tag\trecv\ruleprocrecv
    %   \\
    %   \tag\tsend\ruleprocsend
    % \end{gather*}
    % \begin{gather*}
    %   % \\
    %   \tag\tbra\ruleprocbra
    %   \\
    %   \tag\tsel\ruleprocsel
    %   \\
    %   \tag\tpar\ruleprocpar
    %   \\
    %   \tag\tres\ruleprocres
    %   \\
    %   % \tag\tequiv\ruleprocequiv % No need, incorporated in T-Send
    %   % \\
    %   \tag\tweak\ruleprocweak
    %   \\
    %   \tag\tcontr\ruleproccontr
    % \end{gather*}
    
    % \declrel{Structural congruence \inductive}{$\isscong PP$}
    % \begin{gather*}
    %   \tag\sccomm\rulesccomm
    %   \\
    %   \tag\scassoc\rulescassoc
    %   \\
    %   \tag\scneutral\rulescneutral
    %   \\
    %   \tag\scscope\rulescscope
    %   \\
    %   \tag\scgc\rulescgc
    %   \\
    %   \tag\scswapc\rulescswapc
    %   \\
    %   \tag\scswapb\rulescswapb
    % \end{gather*}
    % % 
    % \declrel{Reduction \inductive}{$\isred PP$}
    % \begin{gather*}
    %   \tag\rcom\ruleredcom
    %   \\
    %   \termc{\NU xy (\SELECT x k P \PAR \BRANCH y \ell {P'} L)}
    %   \osred \qquad\qquad\\
    %   \tag\rchoice
    %   \termc{\NU xy (P \PAR P'_k)}
    %   \quad
    %   (k\in L)
    %   \\
    %   \tag\rres\ruleredres
    %   \\
    %   \tag\rpar\ruleredpar
    %   \\
    %   \tag\rstruct\ruleredstruct
    % \end{gather*}
  \end{multicols}
  \caption{Finite and infinite types.}
  \label{fig:finite-abbr}
\end{figure}

%%% Local Variables:
%%% mode: latex
%%% TeX-master: "main"
%%% End:

\paragraph{The finite world}
% \paragraph{Black: the finite world}

Finite types %and processes
are in \cref{fig:finite-abbr}. The syntax of types is
introduced via formation rules, paving the way for infinite types. Session types
comprise the terminated type $\End$, the input type $\inty$ (input a value of
type $\TT$ and continue as $\UT$), the output type $\outt$ (output a value of
type $\TT$ and continue as $\UT$), external choice $\extct$ (receive a label
$k\in L$ and continue as $\typec{T_k}$) and internal choice $\intct$ (select a
label $k\in L$ and continue as $\typec{T_k}$). To avoid repeating similar rules,
we use the symbol $\typec\sharp$ to denote either $\typec?$ or $\typec!$, and
the symbol $\choice$ to denote either $\extchoice$ or $\intchoice$.
At this point type equivalence is essentially syntactic equality, but the rule
format allows for seamless extensions to infinite settings.
% Duality incorporates type equivalence via rules \deql and \deqr (the latter
% not shown) in order to simplify extensions. % WHAT FOR?
Types, type equivalence and duality are all
standard~\cite{DBLP:journals/acta/GayH05,DBLP:conf/esop/HondaVK98,DBLP:journals/iandc/Vasconcelos12}.
Note that rule \dmsg defines duality with respect to type equivalence: $\OUT TV$ and $\IN UW$ are dual types iff the type being exchanged is the same ($\isequiv TU$) and the continuations are dual ($\isdual VW$).

For finite types % and processes
all judgements in \cref{fig:finite-abbr} are
interpreted \emph{inductively}.
For example, we can show that $\OUT{(\IN\End\End)}{\OUT\End\End}$ is a type by
exhibiting a finite derivation ending with this judgement.

% RECURSIVE

\begin{figure}[t!]
  \begin{multicols}{2}
    \declrel{Type contractivity \inductive}{$\iscontrt T$}
    \begin{gather*}
      \tag\cend
      \axiom{\iscontrt \End}
      \\
      \tag\cmsg
      \axiom{\iscontrt {\MSG TU}}
      \\
      \tag\cchoice
      \axiom{\iscontrt \choicet}
      \\
      \tag\ctid
      \frac{
        \iseqt XT
        \premspace
        \iscontrt T
      }{
        \iscontrt X
      }
    \end{gather*}
    \declrel{New type formation rules \coinductive}{$\istype{T}$}
    \begin{gather*}
      \tag\wid
      \frac{
        \iseqt XT
        \premspace
        \iscontrt T
        \premspace
        \istype T
      }{
        \istype X
      }
    \end{gather*}
    % 
    % \declrel{Type normalisation \inductive}{$\isnorm TT$}
    % % 
    % \begin{gather*}
    % \frac{
    %   \TT=\End,\MSG TU,\choicet
    % }{
    %   \isnorm TT
    % }
    % \\
    %   \frac{
    %     \iseqt XT
    %     \premspace
    %     \iscontrt T
    %     \premspace
    %     \isnorm TU
    %   }{
    %     \isnorm XU
    %   }
    % \end{gather*}
    % 
    \declrel{New type equivalence rules \coinductive}{$\isequiv TT$}
    \begin{gather*}
      \tag\eqidl
      \frac{
        \iseqt XU
        \premspace 
        \iscontrt U
        \premspace
        \isequiv UT
      }{
        \isequiv XT
      }
      \\
      \tag\eqidr
      \frac{
        \iseqt XU
        \premspace 
        \iscontrt U
        \premspace
        \isequiv TU
      }{
        \isequiv TX
      }
      % \tag\eqnorml\ruleeqnorml
      % \\
      % \tag\eqnormr\ruleeqnormr
      % \\
      % \tag\eqnorm\ruleeqnorm
      % \\      
    \end{gather*}
  %
  % \declrel{New duality rules \coinductive}{$\isdual TT$}
  % %
  % \begin{gather*}
  %   \frac{
  %     \isnorm TU
  %     \premspace
  %     \isdual UV
  %   }{
  %     \isdual TV
  %   }
  %   \\
  %   \frac{
  %     \isdual TU
  %     \premspace
  %     \isnorm UV
  %   }{
  %     \isdual TV
  %   }
  % \end{gather*}
    % 
    % PROCESSES
    % 
    % \declrel{Process contractivity \inductive}{$\iscontrp P$}
    % % 
    % \begin{gather*}
    %   \tag\cinact
    %   \axiom{\iscontrp \INACT}
    % \end{gather*}
    % \begin{gather*}
    %   % \\
    %   \tag\crecv
    %   \axiom{\iscontrp {\receivep}}
    %   \\
    %   \tag\csend
    %   \axiom{\iscontrp {\sendp}}
    %   \\
    %   \tag\cbra
    %   \axiom{\iscontrp {\branchp}}
    %   \\
    %   \tag\csel
    %   \axiom{\iscontrp {\selp}}
    %   \\
    %   \tag\cres
    %   \frac{\iscontrp P}{\iscontrp {\resp}}
    %   \\
    %   \tag\cpar
    %   \frac{
    %     \iscontrp{P_1}
    %     \premspace
    %     \iscontrp{P_2}
    %   }{
    %     \iscontrp{P_1\PAR P_2}
    %   }
    %   \\
    %   \tag\cpid
    %   \frac{
    %     \iseqp AP
    %     \premspace
    %     \iscontrp P
    %   }{
    %     \iscontrp A
    %   }
    % \end{gather*}
    % % 
    % \declrel{New proc formation rules \coinductive}{$\isproc\Gamma P$}
    % % 
    % \begin{gather*}
    %   \tag\tpid
    %   \frac{
    %     \iseqp AP
    %     \premspace
    %     \iscontrp P
    %     \premspace
    %     \isproc \Gamma P
    %   }{
    %     \isproc \Gamma A
    %   }
    % \end{gather*}
    % %
    % \declrel{New structural congruence rules \inductive}{$\isscong PP$}
    % \begin{gather*}
    %   \tag\scidl
    %   \frac{
    %   \iseqp AP
    %   \premspace
    %   \iscontrp P
    % }{
    %   \isscong A P
    % }      
    % \end{gather*}
  \end{multicols}
\vspace{-3ex}
  \caption{Recursive types. Extends \cref{fig:finite-abbr}.
  %Right version of \scidl omitted.
  }
  \label{fig:recursive-abbr}
\end{figure}

%%% Local Variables:
%%% mode: latex
%%% TeX-master: "main"
%%% End:

\paragraph{The recursive world}
% \paragraph{Very dark grey: the recursive world}

Recursive types suggest the first glimpse of infinity. The details are in
\cref{fig:recursive-abbr}. Recursion is given via equations, rather than
$\mu$-types for example, for easier extension. Towards this end, we introduce
type constructors $\XT$ and % process constructors $\termc A$, as well as
equations of the form $\iseqt XT$. % for types and $\iseqp AP$ for processes.
The
set of type % and process
constructors is finite. We further assume at most one
equation for each type, % or process constructor,
so that there are finitely many
type %and process
equations.
Every valid type $\TT$ % and processes $\PP$
is required to be contractive, that is
$\iscontrt T$. % and $\iscontrp P$.
Contractiveness ensures that types reveal a type constructor after finitely many unfolds, and excludes undesirable cycles that don't describe any behaviour, \eg cycles of the form
$\{\iseqt XY, \iseqt YZ, \iseqt ZX\}$. Contractiveness is inductive: we look for
finite derivations for $\iscontrt T$ % and for $\iscontrp P$
judgements.
A coinductive interpretation of the rules would allow to conclude $\iscontrt X$
given an equation $\iseqt XX$.
In contrast, type formation, type equivalence and duality are now
interpreted \emph{coinductively}. 
%Structural congruence and reduction (which
%remains untouched from \cref{fig:finite-abbr}) are still inductive, for we want
%each reduction step to be always given by a finite derivation.

For example, no finite derivation would allow showing that $\istype\tloop$.
Instead we proceed by showing that set $\{\End, \OUT \End X, \typec X\}$ is
\emph{backward closed}~\cite{sangiorgi:bisimulation-coinduction} for the rules for $\istype T$ in
\cref{fig:recursive-abbr}, given that
$\OUT\End X$, the right-hand side of the equation for $\XT$, is contractive.

% ONE COUNTER

% Motto: finite (inductive) syntax; infinite (coinductive) semantics.

\begin{figure}[t!]
  \begin{multicols}{2}
    % 
    % NATURAL NUMBERS
    %
    Natural numbers\hfill{}
    \begin{equation*}
      \parc n \grmeq \zero \grmor \succ{\typec n}
    \end{equation*}
    %
    % \declrel{Natural number formation \inductive}{$\isnat n$}
    % % 
    % \begin{equation*}
    %   \axiom{\isnat {\zero}}
    %   \qquad
    %   \frac{\isnat n}{\isnat{\succ n}}
    % \end{equation*}
    % 
    % TYPES
    % 
    \declrel{New type contractivity rules \inductive}{$\iscontrt T$}
    \begin{gather*}
      \tag\ctz
      \frac{
        \ispeqt X{\zero}T
        \premspace
        \iscontrt T
      }{
        \iscontrt {\CALLT X{\zero}}
      }
      \\
      \tag\cts
      \frac{
        \ispeqt X{\succ N}T
        \premspace
        \iscontrt {T\subs nN}
      }{
        \iscontrt {\CALLT X{\succ n}}
      }
    \end{gather*}
    \declrel{New type formation rule \coinductive}{$\istype{T}$}
    \begin{gather*}
      \tag\wz
      \frac{
        \ispeqt X \zero T
        \premspace
        \iscontrt T
        \premspace
        \istype T
      }{
        \istype{\CALLT X\zero}
      }
    \end{gather*}
    \begin{gather*}
      \tag\ws
      \frac{
        \ispeqt X {\succ N} T
        \;\;\;
        \iscontrt {T\subs nN}
        \;\;\;
        \istype {T\subs nN}
      }{
        \istype{\CALLT X{\succ n}}
      }
    \end{gather*}
    % 
    % \declrel{New type normalisation rules \inductive}{$\isnorm TT$}
    % % 
    % \begin{gather*}
    %   \tag\nz
    %   \frac{
    %     \ispeqt X{\zero}T
    %     \premspace
    %     \iscontrt T
    %     \premspace 
    %     \isnorm TU
    %   }{
    %     \isnorm {\CALLT X{\zero}} U
    %   }
    %   \\
    %   \tag\ns
    %   \frac{
    %     \ispeqt X{\succ N}T
    %     \premspace
    %     \iscontrt {T\subs nN}
    %     \premspace
    %     \isnorm {T\subs nN}U
    %   }{
    %     \isnorm {\CALLT X{\succ n}} U
    %   }
    % \end{gather*}
    % 
    \declrel{New type equivalence rules \coinductive}{$\isequiv TT$}
    \begin{gather*}
      \tag\eqzl
      \frac{
        \ispeqt X{\zero}U
        \premspace
        \iscontrt U
        \premspace 
        \isequiv UT
      }{
        \isequiv {\CALLT X{\zero}} T
      }
      \\
      \tag\eqsl
      \frac{
        \ispeqt X{\succ N}U
        \premspace
        \iscontrt {U\subs nN}
        \premspace
        \isequiv {U\subs nN} T
      }{
        \isequiv {\CALLT X{\succ n}} T
      }
    \end{gather*}
    % 
    % PROCESSES
    % 
    % \declrel{New proc contractive rules \inductive}{$\iscontrp P$}
    % % 
    % \begin{gather*}
    %   \tag\cpz
    %   \frac{
    %     \ispeqp A{{\zerop}}P
    %     \premspace
    %     \iscontrp P
    %   }{
    %     \iscontrp {\CALLP A{\zerop}}
    %   }
    %   \\
    %   \tag\cps
    %   \frac{
    %     \ispeqp A{\succp N}P
    %     \premspace 
    %     \iscontrp {P\subsp nN}
    %   }{
    %     \iscontrp {\CALLP A{\succp n}}
    %   }
    % \end{gather*}
    % % 
    % \declrel{New proc form rules \coinductive}{$\isproc \Gamma P$}
    % \begin{gather*}
    %   \tag\tz
    %   \frac{
    %     \ispeqp A{\zerop}P
    %     \premspace
    %     \iscontrp P
    %     \premspace
    %     \isproc \Gamma P
    %   }{
    %     \isproc \Gamma {\CALLP A{\zerop}}
    %   }
    %   \\
    %   \tag\ts
    %   \frac{
    %     \ispeqp A{\succp N}P
    %     \;\;\;
    %     \iscontrp {P\subsp nN}
    %     \;\;\;
    %     \isproc \Gamma {P\subsp nN}
    %   }{
    %     \isproc \Gamma {\CALLP A{\succp n}}
    %   }
    % \end{gather*}
    % %
    % \declrel{New structural congruence \coinductive}{$\isscong PP$}
    % % 
    % \begin{gather*}
    %   \tag\sczl
    %   \frac{
    %     \ispeqp A{\zerop}P
    %     \premspace
    %     \iscontrp P
    %   }{
    %     \isscong{\CALLP A{\zerop}}{P}
    %   }      
    %   \\
    %   \tag\scsl
    %   \frac{
    %     \ispeqp A{\succp N}P
    %     \premspace
    %     \iscontrp {P\subsp nN}
    %   }{
    %     \isscong{\CALLP A{\succp n}}{P\subsp sN}
    %   }      
    % \end{gather*}
  \end{multicols}
  \caption{1-counter types. Extends \cref{fig:recursive-abbr};
    removes $\typec X$; adds $\CALLT Xn$. Right versions of rules \eqzl and \eqsl
    %\sczl and \scsl
    omitted.}
  \label{fig:onecounter-abbr}
\end{figure}

%%% Local Variables:
%%% mode: latex
%%% TeX-master: "main"
%%% End:

\paragraph{The 1-counter world}
% \paragraph{Dark grey: the 1-counter world}

The next step takes us to equations parameterised on natural numbers. The
details are in \cref{fig:onecounter-abbr}. Natural numbers are built from the
nullary constructor $\zero$ and the unary constructor $\succc$. We discuss the
changes from the recursive world in \cref{fig:recursive-abbr}.
% with an emphasis
%on types; those for processes follow a similar reasoning. 
Given a variable
$\parc N$ on natural numbers, to each type constructor $\typec X$ we associate
at most two equations, $\ispeqt X{\zero}T$ and $\ispeqt X{\succ N}U$. % , and
% similarly for process constructors $\termc A$.
The rules for recursive types 
%and processes 
are naturally adapted to 1-counter types.
%and processes. 
Here again,
type %and process 
formation requires a suitable notion of contractiveness to
exclude cycles of equations that never reach a type constructor, \eg cycles of
the form
$\{\iseqt{\CALLT X {\succ N}}{\CALLT Y{\succ{\succ N}}}, \iseqt{\CALLT Y{\succ
    N}}{\CALLT X N}\}$. Notice that the right-hand-side of an equation
$\ispeqt X{\succ N}T$ is not necessarily a type for it may contain natural
number variables ($\parc N$ in particular).
%, and similarly for objects in the
%righthand side of process equations. 
However, if $\parc n$ is a natural number,
then $\typec T\subs nN$ (that is, $\typec T$ with occurrences of $\parc N$
replaced by $\parc n$) should be a type (\cf rule
\ws). % The type formation rules cannot be applied with equations of the form $\CALL X \zero\doteq T$ if $T$ contains variable $N$, which is to be expected as there is no value to replace for $N$.
Again, to prove that $\istype\tcounter$, we show that the set $
\{
\CALLT X n,
\CALLT Y n,
\End,
\OUT\End {\CALLT Y n},
\typec{\&\{\incl\colon \CALLT X{\succ n}, \dumpl\colon \CALLT Y n}
\}
\mid \isnat n
\}$ is
backward closed.
%
% From this point on we omit the required extensions for processes. The rules for
% process contractivity and process formation can be easily inferred from those
% for type contractiviy and type formation, by following the example of recursive
% processes (\cref{fig:recursive-abbr}).

% CONTEXT -FREE

\paragraph{Higher-order context-free session types}
% !TeX root = main.tex
\begin{figure}[t!]
  \begin{multicols}{2}
  \declrel{Is terminated predicate \inductive}{$\isdone T$}
  \begin{gather*}
    \tag\teskip\axiom{\isdone\Skip}
    \\
    \tag\tesemi
    \frac{
      \isdone{T}
      \premspace
      \isdone{U}
    }{
      \isdone{\semit TU}
    }
    \\
    \tag\teid
    \frac{
      \iseqt XT
      \premspace
      \isdone T
    }{
      \isdone{X}
    }
  \end{gather*}
  \declrel{New type contractive rules \inductive}{$\iscontrt T$}
  \begin{gather*}
    \tag\cskip
    \axiom{\iscontrt \Skip}
    \\
    \tag\cmsg
    \axiom{\iscontrt {\MSGn T}}
    % \\
    % \tag\cout
    % \axiom{\iscontrt {\OUTn\Skip}}
    \\
    \tag\csemid
    \frac{
      \isdone T
      \premspace
      \iscontrt U
    }{
      \iscontrt{T;U}
    }
    \\
    \tag\csemind
    \frac{
      \isnotdone T
      \premspace
      \iscontrt T
    }{
      \iscontrt{T;U}
    }
  \end{gather*}
  \declrel{New type formation rules \coinductive}{$\istype{T}$}
  \begin{gather*}
    \tag\wskip\axiom{\istype\Skip}
    \\
    \tag\wmsg\frac{\istype T}{\istype{\MSGn T}}
    % \\
    % \tag\wout\axiom{\istype{\OUTn\Skip}}
    \\
    \tag\wsemi\frac{\istype T \premspace \istype U}{\istype{\semit TU}}
  \end{gather*}
  \declrel{Type equivalence \coinductive}{$\isequiv TT$}
  \begin{gather*}
    \tag\eqskip
    \axiom{\isequiv \Skip \Skip}
    \\
    \tag\eqmsg
    \frac{\isequiv TU}{\isequiv{\MSGn T}{\MSGn U}}
    % \\
    % \tag\eqmsgsemil
    % \axiom{\isequiv{\semit{\MSGn \Skip}{\Skip}}{\MSGn \Skip}}
    \\
    % \tag\eqsemi
    % \frac{
    %   \isequiv TV
    %   \premspace
    %   \isequiv UW
    % }{
    %   \isequiv{\semit TU}{\semit VW}
    % }
    \tag\eqneutl
    \frac{\isequiv TU}{\isequiv {\semit \Skip T} U}
  \end{gather*}
  \begin{gather*}
    % \tag\eqneutrtwol
    % \frac{\isequiv TU}{\isequiv {\semit T \Skip} U}
    \tag\eqmsgskipl
    \frac{
      \isequiv TV
      \premspace
      \isequiv U\Skip
    }{
      \isequiv{\semit{\MSGn T}U}{\MSGn V}
    }
    \\
    \tag\eqmsgsemil
    \frac{
      \isequiv TV
      \premspace
      \isequiv UW
    }{
      \isequiv{\semit{\MSGn T}U}{\semit{\MSGn V}W}
    }
    \\
    \tag\eqdistl
    \frac{
      \isequiv{\choice\recordf \ell {\semit{T_\ell}{U}} L}{V}
    }{
      \isequiv {\semit \choicet U} V
    }
    \\
    \tag\eqassocl
    \frac{
      \isequiv {\semit T {(\semit UV)}}{W}
    }{
      \isequiv {\semit {(\semit TU)} V}{W}
    }
    \\
    \tag\eqidsemil
    \frac{
      \iseqt XT
      \premspace 
      \iscontrt T
      \premspace
      \isequiv {\semit TU}{V}
    }{
      \isequiv {\semit XU}{V}
    }
  \end{gather*}
  \declrel{Embedding \coinductive}{$\embeds TU$}
  \begin{gather*}
    \tag\embskip
    \axiom{\embeds{\Skip}{\End}}
    \\
    \tag\embmsg
    \frac{\embeds TU}{\embeds{\MSGn T}{\MSG U \End}}
    \\
    \tag\embchoice
    \frac{\embeds{T_\ell}{U_\ell}
      \premspace
      \foralllinL
    }{
      \embeds{\choicet}{\choice\recordf \ell {U_\ell} L}
    }
    \\
    \tag\embid
    \frac{
      \iseqt XT
      \premspace
      \iscontrt T
      \premspace
      \embeds TU
    }{
      \embeds XU
    }
    \\
    \tag\embsemiskip
    \frac{
      \embeds TU
    }{
      \embeds {\semit \Skip T}{U}
    }
    \\
    \tag\embsemimsg
    \frac{
      \embeds TV
      \premspace
      \embeds UW
    }{
      \embeds {\semit {\MSGn T} U}{\MSG VW}
    }
    \\
    \tag\embsemichoice
    \frac{
      \embeds {\semit {T_{\ell}} T}{U_\ell}
      \premspace
      \foralllinL
    }{
      \embeds {\semit \choicet T}{\choice\recordf \ell {U_\ell} L}
    }
    \\
    \tag\embsemiid
    \frac{
      \iseqt XT
      \premspace
      \iscontrt T
      \premspace
      \embeds {\semit TU}{V}
    }{
      \embeds {\semit XU}{V}
    }
    \\
    \tag\embsemisemi
    \frac{
      \embeds {\semit T {(\semit{U}{V})}}{W}
    }{
      \embeds {\semit {(\semit{T}{U})}{V}}{W}
    }
  \end{gather*}
  \end{multicols}
  \caption{Higher-order context-free types. Extends \cref{fig:recursive-abbr}; removes
    $\End$, $\MSG TU$; adds $\Skip$, $\MSGn T$, $\semit TT$. Right versions
    of rules \eqneutl, \eqmsgskipl, \eqmsgsemil, \eqdistl, \eqassocl, \eqidsemil
    omitted.}
  \label{fig:cfst}
\end{figure}

%%% Local Variables:
%%% mode: latex
%%% TeX-master: "main"
%%% End:

  %
  % \declrel{Type normalisation \inductive}{$\isnorm TT$}
  % %
  % \begin{gather*}
  %   \axiom{\isnorm\Skip\Skip}
  %   \\
  %   \axiom{\isnorm{\MSGn\Skip}{\MSGn\Skip}}
  %   \\
  %   \frac{
  %     \isnorm{T;(U;V)}{W}
  %   }{
  %     \isnorm{(T;U);V}{W}
  %   }
  %   \\
  %   \frac{
  %     \isnorm TU
  %   }{
  %     \isnorm{\semit \Skip T}{U}
  %   }
  %   \\
  %   % \frac{
  %   %   \isnorm TU
  %   % }{
  %   %   \isnorm{\semit T \Skip}{U}
  %   % }
  %   % \\
  %   \frac{
  %     \isnorm{\semit{T_\ell}{U}}{V_\ell}
  %     \premspace
  %     \foralllinL
  %   }{
  %     \isnorm{\semit{\choicet}{U}}{\choice{\record \ell V L}}
  %   }
  %   \\
  %   \frac{
  %     \TT \neq \Skip,\choicet
  %     \premspace
  %     \isnorm{T}{U}
  %   }{
  %     \isnorm{\semit TV}{\semit UV}
  %   }
  %   % \\
  %   % \frac{
  %   %   \iseqt XT
  %   %   \premspace
  %   %   \iscontrt T
  %   %   \premspace
  %   %   \isnorm TU
  %   % }{
  %   %   \isnorm XU
  %   % }
  % \end{gather*}

A little detour takes us to context-free session types, proposed by Thiemann and
Vasconcelos~\cite{DBLP:conf/icfp/ThiemannV16} (see also Almeida et
al.~\cite{DBLP:journals/corr/abs-2106-06658}). Here we follow the distilled
presentation of Almeida et al.~\cite{DBLP:conf/tacas/AlmeidaMV20}, extending to
the higher-order setting (that is, allowing $\INn T$ and $\OUTn T$ for an
arbitrary type $\TT$ instead of just basic type $\Skip$). 
%The details are in \cref{sec:appendixCFNST}.
The syntax for context-free session types (presented in \cref{fig:cfst}) slightly departs from the main classes analysed in this paper; the
distinguishing aspects are as follows.
\begin{itemize}
\item There is a new type constructor for sequential composition of session
  types: the sequential composition of $\TT$ and $\UT$ is denoted by $\semit TU$.
\item Type $\End$ is replaced by a new type $\Skip$ with a distinct behaviour.
  Intuitively, $\End$ is used to finish a session type, while $\Skip$ merely
  moves to the next operation.
\item 
  The constructors for sending and receiving are now simply $\INn T$ and
  $\OUTn T$, rather than $\IN TU$ and $\OUT TU$.\footnote{Traditional
    (first-order) context-free session types restrict messages to $\INn \Skip$ and
  $\OUTn \Skip$, with $\Skip$ representing a basic type.}
\end{itemize}

In order to align the presentation with the other classes of types, we use
equations rather than $\mu$-types as in the original work.

We discuss the main differences with respect to recursive types (\cref{fig:recursive-abbr}).
For $\XT$ to be a type under equation $\iseqt X T$, the right-hand side $\TT$ must
be contractive, meaning that successive unfoldings either reach $\Skip$ or one
of the type constructs $\typec?, \typec!, \typec\&, \typec\Oplus$ after finitely
many steps. This excludes non-types $\XT$ such as those defined under equations
$\iseqt X {\semit \Skip X}$, or $\iseqt X {\semit X\Skip}$.
Contractiveness for sequential composition makes use of a new `is terminated'
predicate. Judgement $\isdone T$ denotes a type that exibits no behaviour.
Terminated types are composed solely of constructors $\Skip$, $\XT$ and
sequential composition.

On what concerns type equivalence, the first three rules (\wskip to \wsemi) are
the congruence rules for the new type constructors. The last six rules
constitute the novelty of context-free types.
Sequential composition provides an associative monoidal structure, with $\Skip$
acting as the identity (rules \eqneutl and \eqassocl). Rule
\eqdistl introduces distributivity of choice over sequencing.
The definition is again coinductive: $\isequiv TU$ looks at the top constructors
of $\TT$ and $\UT$. If either $\TT$ or $\UT$ are sequential compositions, then
one of the six left or six right (not shown) rules apply.
%
%We can easily show that rules \eqsemi and \eqneutrtwol are admissible.

A formulation of type equivalence that explicitly incorporates the rules of an
equivalence relation would allow reducing the number of rules while simplifying
the remaining ones. For example the four $\Skip$-is-neutral rules would be reduced to
two axioms: $\isequiv{\semit \Skip T}{T}$ and $\isequiv{\semit T \Skip}{T}$.
Unfortunately, scaling this approach to the coinductive setting would make every
element related to every
other~\cite{DBLP:conf/mpc/DanielssonA10,DBLP:journals/jfp/GapeyevLP02}~\cite[Section 21.4]{DBLP:books/daglib/0005958}.

% PUSHDOWN

\begin{figure}[t!]
  \begin{multicols}{2}
    % 
    % STRINGS
    %
    Strings\hfill{}
    \begin{equation*}
      \parc s \grmeq \parc\Empty \grmor \parc{\sigma s}
    \end{equation*}
    % \declrel{String formation \inductive}{$\isstr s$}
    % % 
    % \begin{equation*}
    %   \axiom{\isstr {\Empty}}
    %   \qquad
    %   \frac{\isstr s}{\isstr{\sigma s}}
    % \end{equation*}
    % 
    % TYPES
    % 
    \declrel{New type contractive rules \inductive}{$\iscontrt T$}
    \begin{gather*}
      \tag\ctz
      \frac{
        \ispeqt X{\Empty}T
        \premspace
        \iscontrt T
      }{
        \iscontrt {\CALLT X{\Empty}}
      }
      \\
      \tag\cts
      \frac{
        \ispeqt X{\sigma S}T
        \premspace
        \iscontrt {T\subs s S}
      }{
        \iscontrt {\CALLT X{\sigma s}}
      }
    \end{gather*}
    \declrel{New type formation rules \coinductive}{$\istype{T}$}
    \begin{gather*}
      \tag\wz
      \frac{
        \ispeqt X \Empty T
        \premspace
        \iscontrt T
        \premspace
        \istype T
      }{
        \istype{\CALLT X\Empty}
      }
    \end{gather*}
    \begin{gather*}
      \tag\ws
      \frac{
          \ispeqt X {\sigma S} T
          \premspace
          \iscontrt {T\subs s S}
          \premspace
          \istype {T\subs s S}
      }{
        \istype{\CALLT X{\sigma s}}
      }
    \end{gather*}
    \declrel{New type equivalence rules \coinductive}{$\isequiv TT$}
    \begin{gather*}
      \tag\eqzl
      \frac{
        \ispeqt X{\Empty}U
        \premspace
        \iscontrt U
        \premspace 
        \isequiv UT
      }{
        \isequiv {\CALLT X{\Empty}} T
      }
      \\
      \tag\eqsl
      \frac{
          \ispeqt X{\sigma S}U
          \premspace
          \iscontrt {U\subs s S}
          \premspace
          \isequiv {U\subs s S}T
      }{
        \isequiv {\CALLT X{\sigma s}} T
      }
    \end{gather*}
    % 
    % PROCESSES
    % 
    % \declrel{New proc contractive rules \inductive}{$\iscontrt P$}
    % % 
    % \begin{gather*}
    %   \tag\cpz
    %   \frac{
    %     \ispeqp A{{\Empty}}P
    %     \premspace
    %     \iscontrp P
    %   }{
    %     \iscontrp {\CALLP A{\Empty}}
    %   }
    %   \\
    %   \tag\cps
    %   \frac{
    %     \ispeqp A{\sigma S}P
    %     \premspace 
    %     \iscontrp A{P\subsp sS}
    %   }{
    %     \iscontrp {\CALLP A{\sigma s}}
    %   }
    % \end{gather*}
    % 
    % \declrel{New proc form rules \coinductive}{$\isproc \Gamma P$}
    % \begin{gather*}
    %   \tag\tz
    %   \frac{
    %     \ispeqp A{\Empty}P
    %     \premspace
    %     \iscontrp P
    %     \premspace
    %     \isproc \Gamma P
    %   }{
    %     \isproc \Gamma {\CALLP A{\Empty}}
    %   }
    %   \\
    %   \tag\ts
    %   \frac{
    %       \ispeqp A{\sigma S}P
    %       \premspace
    %       \iscontrp {P\subsp sS}
    %       \premspace
    %       \isproc \Gamma {P\subsp sS}
    %   }{
    %     \isproc \Gamma {\CALLP A{\sigma s}}
    %   }
    % \end{gather*}
    % 
    % \declrel{New structural congruence \inductive}{$P \equiv P$}
    % % 
    % \begin{gather*}
    %   \tag\scz
    %   \frac{
    %     \ispeqp A{\Empty}P
    %     \premspace
    %     \iscontrp P
    %   }{
    %     \CALLP A{\Empty} \equiv P
    %   }      
    %   \\
    %   \tag\scs
    %   \frac{
    %     \ispeqp A{\sigma S}P
    %     \premspace
    %     \iscontrp {P\subsp sS}
    %   }{
    %     \CALLP A{\sigma s} \equiv P\subsp sS
    %   }      
    % \end{gather*}
  \end{multicols}
  \caption{Pushdown %processes and
    types. Extends \cref{fig:recursive-abbr}; removes $\typec X$; adds
    $\CALLT Xs$. Right versions of rules \eqzl and \eqsl omitted.}
  \label{fig:pushdown-abbr}
\end{figure}

%%% Local Variables:
%%% mode: latex
%%% TeX-master: "main"
%%% End:

\paragraph{The pushdown world}
% \paragraph{Grey: the pushdown world}

The next extension replaces natural numbers by finite sequences $\typec s$ of
symbols $\typec\sigma$ taken from a given stack alphabet. The details are in
\cref{fig:pushdown-abbr}. We use $\typec{\Empty}$ to denote the empty sequence. The extension from 1-counter is straightforward.
%; we
%concentrate on types, omitting the extensions for processes which can be
%inferred from those for types in \cref{fig:pushdown-abbr} and those for processes in
%\cref{fig:onecounter-abbr}. 
Parameters to type 
%and process 
constructors are now
sequences of symbols, rather than natural numbers; all the rest remains the
same.
Once again, to show that $\istype \tstack$, we proceed coinductively.

% NESTED

\paragraph{Nested session types}
\begin{figure}[t!]
  \begin{multicols}{2}
  \declrel{New type contractiveness rules \inductive}{$\iscontrt T$}
  \begin{gather*}
    \tag\cts
    \frac{
      \ispeqt X{\overline\alpha}U
      \premspace
      \iscontrt {U\subs{\overline T}{\overline{\alpha}}}
    }{
      \iscontrt {\CALLNT X{\overline T}}
    }
  \end{gather*}
  \declrel{New type formation rules \coinductive}{$\istype{T}$}
  \begin{gather*}
    \tag\ws
    \frac{
      \ispeqt X{\overline\alpha}U
      \premspace
      \iscontrt {U\subs{\overline T}{\overline{\alpha}}}
      \premspace
      \istype {U\subs{\overline T}{\overline{\alpha}}}
    }{
      \istype {\CALLNT X{\overline T}}
    }
  \end{gather*}
  \declrel{New type equivalence rules \coinductive}{$\isequiv TT$}
  \begin{gather*}
    \tag\eqidl
    \frac{
      \ispeqt X{\overline\alpha}V
      \premspace
      \iscontrt {V\subs{\overline T}{\overline{\alpha}}}
      \premspace
      \isequiv {V\subs{\overline T}{\overline{\alpha}}} U
    }{
      \isequiv {\CALLNT X{\overline T}} U
    }
  \end{gather*}
  %
  % \declrel{New type normalisation rules \inductive}{$\isnorm TT$}
  % %
  % \begin{gather*}
  %   \frac{
  %     \ispeqt X{\overline\alpha}U
  %     \premspace
  %     \iscontrt {U\subs{\overline T}{\overline{\alpha}}}
  %     \premspace
  %     \isnorm {U\subs{\overline T}{\overline{\alpha}}} V
  %   }{
  %     \isnorm {\CALLNT X{\overline T}} V
  %   }
  % \end{gather*}
  % %
  \end{multicols}
  \caption{Nested types. Extends \cref{fig:recursive-abbr};
    removes $\typec X$; adds $\CALLNT X{\overline T}$. Right
    version of rule \eqidl omitted.}
  \label{fig:nested}
\end{figure}

%%% Local Variables:
%%% mode: latex
%%% TeX-master: "main"
%%% End:

A class of types that turns out to be equivalent to pushdown types was recently
proposed by Das et al.~\cite{DBLP:conf/esop/DasDMP21}.
% , combining a nested use of polymorphic type constructors with prenex
% polymorphism.
The main idea is to have type constructors that are applied not to natural numbers or to sequences of symbols
but to types themselves; and to let type constructors take a variable (but fixed) number of parameters. The syntax rules for nested session types is given in \cref{fig:nested}, where $\typec\alpha$ denotes a variable on types, and
$\typec{\overline\alpha}$ a possibly empty sequence of variables (once again,
$\typec\Empty$ denotes the empty sequence).
There are two differences with respect to 1-counter and pushdown types
(\cref{fig:onecounter-abbr,fig:pushdown-abbr}): on the one hand type
constructors are now applied not to natural numbers or to sequences of symbols
but to types themselves; on the other hand, type constructors take a variable
(but fixed) number of parameters, so that each type constructor $\XT$ has an associated arity $n\in \nbb$. 
Type constructors are unfolded according to an equational definition of the form
$\iseqt {\CALL X {\alpha_1,\ldots,\alpha_n}} T$, where
$\typec{\alpha_1},\ldots,\typec{\alpha_n}$ are distinct type variables that
parameterise the type definition.

% TWO COUNTER

\paragraph{The 2-counter world}
% \paragraph{Light grey: the 2-counter world}

2-counter types extend the 1-counter types by introducing equations
parameterised on two natural numbers, rather than one. The new rules are a
straightforward adaptation of those in \cref{fig:onecounter-abbr} for 1-counter
types and are thus omitted.
To show that $\istype \titer$, we proceed coinductively.

% INFINITE

\paragraph{The infinite world}
% \paragraph{White: the infinite world}

The final destination takes us to arbitrary, coinductive, infinite types. The details are in
\cref{fig:finite-abbr}, except that all judgements not explicitly marked are
taken coinductively. 
%The exceptions are structural congruence and reduction
%which remain inductive. 
No equations (of any sort) are needed, just plain
infinite types.
%and processes. %At this point, 
We also allow choices with an
infinite number of branches.

Infinite types arise by interpreting the syntax rules coinductively, which gives rise to potentially infinite chains of interactions. The structure of these arbitrary, coinductively defined, infinite types does not need to follow any pattern (e.g. it does not need to repeat itself), and arguably, the best way to think about these objects are as labelled infinite trees (\cref{sec:treeslanguages}). Such objects do not have in general a finite representation (or finite encoding), which can be shown by a simple cardinality argument (\cref{lem:hierarchyinfinite}). Hence the need for finding suitable subclasses of infinite types that can be represented and can be used in practice.

We can think of a type in two possible ways: as (one of) its representation(s), which is great for practical purposes as we can reason about types by reasoning about their representations; or as the underlying, possibly infinite, coinductive object which is being represented, which is suitable for developing a theory of types, in particular for comparing different models with one another.

\paragraph{Embedding context-free types into infinite types}

In order to compare context-free session types with the other classes in our
hierarchy, we must convert context-free types into infinite types. We do this by
defining an embedding $\embeds TU$ (\cref{fig:cfst}), where $\TT$ is a context-free session type
and $\UT$ is a corresponding infinite session type. The rules for the embedding
essentially unfold equational definitions, sequential composition, and
non-terminal occurrences of $\Skip$, until a lone $\Skip$ or one of the type
constructs $\typec?, \typec!, \typec\&, \typec\Oplus$ is found. This takes
finitely many steps due to contractiveness. Type $\End$ appears either from a
lone $\Skip$, or from a message $\MSGn T$ without a continuation.

\begin{theorem}[Embedding]
\label{lem:embedding}~
\begin{enumerate}
  \item If $\embeds TU$, then $\TT$ is a context-free type and $\UT$ is an infinite type.
  \item For every context-free type $\TT$, there exists $\UT$ with $\embeds TU$.
  \item Suppose $\embeds TU$ and $\embeds VW$. Then $\isequiv TV$ iff $\isequiv UW$.
\end{enumerate}
\end{theorem}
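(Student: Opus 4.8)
The plan is to handle all three parts coinductively, matching the coinductive definitions of the embedding and of both equivalences, while exploiting the \emph{inductive} character of contractiveness to justify the finite ``head-normalisation'' that the embedding performs. For part~1 I would argue by coinduction on $\embeds TU$. Reading each rule downward, its premises are embedding judgements either on immediate subcomponents (\embmsg, \embchoice, \embsemimsg, \embsemichoice) or on a rearranged/unfolded type guarded by an equation $\iseqt XT$ with a contractiveness side condition (\embid, \embsemiid, and the administrative rules \embsemiskip, \embsemisemi). I would therefore take $R_{\mathrm{cf}} = \{\TT \mid \exists \UT.\ \embeds TU\}$ and $R_\infty = \{\UT \mid \exists \TT.\ \embeds TU\}$, and check that $R_{\mathrm{cf}}$ is backward closed for context-free type formation (\wskip, \wmsg, \wsemi, \wid) and that $R_\infty$ is backward closed for the coinductive infinite type-formation rules \wend, \wmsg, \wchoice; the latter is essentially immediate, since infinite types carry no well-formedness constraint beyond their coinductive syntax, and the contractiveness premises of \embid and \embsemiid supply exactly the side conditions demanded by \wid.

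Part~2 rests on a \emph{head-normalisation lemma}: every contractive context-free type $\TT$ can be rewritten, by finitely many administrative embedding steps (\embsemiskip, \embsemiid, \embsemisemi, plus unfolding via \embid), into a form whose outermost constructor is a lone $\Skip$, a message $\MSGn{T'}$, or a choice $\choicet$. I would prove termination by well-founded induction on the derivation of $\iscontrt T$, using \cskip, \cmsg, \cchoice, \csemid, \csemind together with the $\isdone$ predicate (\teskip, \tesemi, \teid): a contractive $\semit TU$ either has $\isdone T$, so \csemid lets us drop the head and continue with $\UT$, or $\isnotdone T$, so \csemind exposes a head inside $\TT$; a secondary measure on left-nested associativity handles \embsemisemi. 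Granting the lemma, I would build $\UT$ \emph{corecursively}: at each position, head-normalise, emit the matching infinite head (namely $\End$, an input/output message, or a choice), and recurse on the residuals. By construction $\embeds TU$, and $\UT$ is an infinite type by part~1.

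Part~3 is the core of the theorem. I would prove each implication by a bisimulation. For the forward implication I would use
\[
R = \{(\UT,\WT) \mid \exists \TT,\VT.\ \embeds TU,\ \embeds VW,\ \isequiv TV\}
\]
(where $\isequiv TV$ is context-free equivalence) and show $R$ is backward closed for the infinite equivalence rules \eqend, \eqmsg, \eqchoice. The crux is to prove that context-free equivalence is captured, after head-normalisation, by ``equal exposed head constructor plus recursively equivalent residuals,'' so that the witnesses $(\TT,\VT)$ descend into $R$ alongside the subcomponents of $\UT$ and $\WT$. The converse implication uses the symmetric relation on context-free types and must be shown backward closed for \emph{all} the context-free equivalence rules (\eqskip, \eqmsg, \eqneutl, \eqmsgskipl, \eqmsgsemil, \eqdistl, \eqassocl, \eqidsemil), reconstructing them from a single matching of infinite heads.

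The main obstacle is the \emph{coherence} lemma underlying both directions: the head exposed by head-normalisation is well defined up to context-free equivalence, i.e.\ $\embeds TU$ and $\embeds T{U'}$ imply $\isequiv U{U'}$ (infinite equivalence), and dually the infinite head determines the context-free head-normal form uniquely. This is precisely where the monoidal structure of sequential composition must be reconciled with the branching of infinite trees: distributivity \eqdistl is mirrored by \embsemichoice, associativity \eqassocl by \embsemisemi, and $\Skip$-neutrality \eqneutl by \embsemiskip, and one must verify that these syntactic rearrangements induce the \emph{same} infinite type. Once head-normalisation is shown confluent in this sense, both backward-closure verifications collapse to a routine case analysis on the three possible exposed heads.
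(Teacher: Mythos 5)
Your proposal is correct and follows essentially the same route as the paper's proof: coinduction on the embedding derivation for part~1, coinduction on type formation (with the inductive contractiveness derivation justifying the finitely many administrative unfolding steps) for part~2, and a bisimulation-style coinduction with case analysis on the last equivalence rule applied for part~3. The only real difference is presentational: you factor out as named lemmas (head-normalisation, coherence of the exposed head) what the paper establishes inline through its case analysis on pairs of embedding rules.
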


\begin{proof}
Sketched in \cref{sub:proof-embedding}.
\end{proof}

To be absolutely precise, we could explicitly define an embedding from each of the shades of types into the class of infinite (coinductive) types, in order to compare them with each other. However, for most cases this embedding is obvious and follows from the type formation rules. Only for context-free types, whose syntax is significantly different, did we feel the need to provide the rules for $\embeds TU$.

%\paragraph{Preservation of typing}

% Preservation of typing holds for all systems in this section. 

% \begin{restatable}[Preservation of typing]{theorem}{soundness}
%   \label{thm:soundness}
%   If $\isproc \Gamma P$ and $\isred PQ$, then $\isproc \Gamma Q$.
% \end{restatable}
% %
% \begin{proof}
%   Sketched in \cref{sec:proof-soundness}.
% \end{proof}

%Suggestion: Omit the notion of runtime errors and the associated theorem.

%%% Local Variables:
%%% mode: latex
%%% TeX-master: "main"
%%% End:

% !TeX root = main.tex

\section{Types, trees and traces}
\label{sec:treeslanguages}

It should be clear that the constructions defined in \cref{sec:types-procs} form
some sort of type hierarchy; this section studies the hierarchy. In any case,
every type lives in the largest universe; that of arbitrary, coinductively
defined, infinite types.

To each type one can associate a labelled infinite
tree~\cite{DBLP:journals/jfp/GapeyevLP02,DBLP:books/daglib/0005958}. This tree
can in turn be expressed by the language of words encoding its paths.
Let $\labels$ be the set of labels used in choice types. 
Following Pierce \cite[Definition 21.2.1]{DBLP:books/daglib/0005958}, a tree is
a partial function
$t \in (\{\dl,\cl\}\cup \labels)^* \rightarrow \{\Endl,?,!,\&_L,\oplus_L \mid L
\subseteq \labels\}$ subject to the following constraints (below, $\pi$ ranges over strings of symbols whereas $\sigma$ ranges over symbols):
  \begin{itemize}
  \item $t(\Empty)$ is defined;
  \item if $t(\pi\sigma)$ is defined, then $t(\pi)$ is defined;
  \item if $t(\pi)={}?$ or $t(\pi)={}!$, then $t(\pi\sigma)$ is defined for $\sigma\in\{\dl,\cl\}$ and undefined for all other $\sigma$;
  %\item if $t(\pi)={}?$ or $t(\pi)={}!$, then $t(\pi\dl)$ and $t(\pi\cl)$ are defined;
  \item if $t(\pi)=\&_L$ or $t(\pi)=\Oplus_L$, then $t(\pi\sigma)$ is defined for $\sigma\in L$ and undefined for all other $\sigma$;
  %\item if $t(\pi)=\&_L$ or $t(\pi)=\Oplus_L$, then $t(\pi \ell)$ is defined for $\ell\in L$;
  \item if $t(\pi)=\Endl$, then $t(\pi\sigma)$ is undefined for all $\sigma$.
  \end{itemize}
The labels $\dl$ and $\cl$ are abbreviations for \emph{data} and \emph{continuation}, corresponding to the components of a session type.
% \end{definition}

If all sets $L$ in a tree are finite, the tree is finitely branching.
The tree generated by a (finite or infinite) type is coinductively defined as follows.
\begin{align*}
  \treeof(\MSG {T_\dl}{T_\cl})(\Empty) =&\ \sharp
  &&&
  \treeof(\choicet)(\Empty) =&\ \star_L
  \\
  \treeof(\MSG {T_\dl}{T_\cl})(\dl\pi) =&\ \treeof(\typec{T_\dl})(\pi)
  &&&
  \treeof(\choicet)(\ell\pi) =&\ \treeof(\typec{T_\ell})(\pi)
  \\
  \treeof(\MSG {T_\dl}{T_\cl})(\cl\pi) =&\ \treeof(\typec{T_\cl})(\pi)
  &&&
  \treeof(\End)(\Empty) =&\ \Endl
\end{align*}

A \emph{path} in a tree $t$ is a word obtained by combining the symbols in the
domain and the range of $t$. Given a symbol $\sigma\in\{?,!,\&_L,\oplus_L \mid L\subseteq \labels\}$ in the codomain of $T$ (but different from $\Endl$), and a symbol $\tau\in\{\dl,\cl\}\cup\labels$, let $\pair\sigma\tau$ denote the combination of both symbols, viewed as a letter over the alphabet $\{?,!,\&_L,\oplus_L \mid L\subseteq \labels\}\times(\{\dl,\cl\}\cup\labels)$. For simplicity in exposition, we often drop the angular brackets and the subscript $L$ on the label set, and write, for example, $?\cl$ instead of $\pair?\cl$, $\oplus l$ instead of $\pair{\oplus_L}{l}$, etc.

Given a string $\pi$ in the domain of a tree $t$, we can define the word
$\Path_t(\pi)$ recursively as $\Path_t(\Empty)=\Empty$ and
$\Path_t(\pi\tau)=\Path_t(\pi)\cdot\pair{t(\pi)}{\tau}$. We say that a string
\emph{$\pi$ is terminal wrt to $t$} if $t(\pi)=\Endl$. For terminal strings, we
can further define $\overline{\Path}_t(\pi)=\Path_t(\pi)\cdot\Endl$.

%\begin{definition}[Language of a type]
%  \label{def:languageofatype}
Finally, we can define the language of (the paths in) a tree $t$ as the set
  $\{\Path_t(\pi) \mid
  \pi\in\dom(t)\}\cup\{\overline{\Path}_t(\pi) \mid \pi\in\dom(t), \pi\text{ is terminal wrt }t\}$.
  The language of (the traces of) a type $\TT$, denoted by $\lcal(\TT)$, is the
  language of $\treeof(\TT)$.
  %
  % The language of a class of types $\types$, notation $\lcal(\types)$, is the
  % set $\cup_{T\in\types} \lcal(T)$.
%\end{definition}
%
Note that the traces of types are defined over the following alphabet. % (henceforth denoted as $\Sigma$):
\begin{equation}\label{eq:alphabet}
\Sigma = \{?,!,\&_L,\oplus_L \mid L\subseteq \labels\}\times(\{\dl,\cl\}\cup\labels)\cup\{\Endl\}
\end{equation}

% EXAMPLES

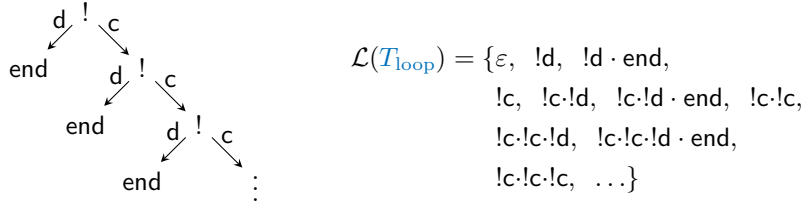
\begin{figure}[t]
  \begin{minipage}{0.45\textwidth}
    \begin{center}
      \begin{tikzpicture}[node distance = 3em]
        \node (bang1) {$!$};
        \node[below left of = bang1] (end1) {$\Endl$};
        \node[below right of = bang1] (bang2) {$!$};
        \node[below left of = bang2] (end2) {$\Endl$};
        \node[below right of = bang2] (bang3) {$!$};
        \node[below left of = bang3] (end3) {$\Endl$};
        \node[below right of = bang3] (dots) {$\vdots$};

        \draw (bang1) edge[above] node {$\dl$} (end1);
        \draw (bang1) edge[above] node {$\cl$} (bang2);
        \draw (bang2) edge[above] node {$\dl$} (end2);
        \draw (bang2) edge[above] node {$\cl$} (bang3);
        \draw (bang3) edge[above] node {$\dl$} (end3);
        \draw (bang3) edge[above] node {$\cl$} (dots);
      \end{tikzpicture}
    \end{center}
  \end{minipage}
  \begin{minipage}{0.50\textwidth}
    \begin{center}
      \begin{align*}
        \lcal(\tloop) = \{
        &\varepsilon,\;\;
          !\dl,\;\;
          !\dl\cdot\Endl,\;\;\\
        &!\cl,\;\;
          !\cl\cdot!\dl,\;\;
          !\cl\cdot!\dl\cdot\Endl,\;\;
          !\cl\cdot!\cl,\;\;\\
        &!\cl\cdot!\cl\cdot!\dl,\;\;
          !\cl\cdot!\cl\cdot!\dl\cdot\Endl,\;\;\\
        &!\cl\cdot!\cl\cdot!\cl,\;\;
          \ldots\}
      \end{align*}
    \end{center}
  \end{minipage}
  \caption{The tree and the language of type $\tloop$.}
  \label{fig:treeloop}
\end{figure}

%%% Local Variables:
%%% mode: latex
%%% TeX-master: "main"
%%% End:

% !TeX root = main.tex
\begin{figure}[t!]
  \begin{minipage}{0.25\textwidth}
    \begin{center}
      \vspace{2em}
      \begin{tikzpicture}[node distance = 3em]
        \node (choice1) {$\&$};
        \node[below left of = choice1, xshift=-1em] (choice2) {$\&$};
        \node[below right of = choice1, xshift=+1em] (end1) {$\Endl$};
        \node[below left of = choice2, xshift=-1em] (choice3) {$\&$};
        \node[below right of = choice2, xshift=+1.5em] (bang1) {$!$};
        \node[below left of = choice3] (dots) {$\vdots$};
        \node[below right of = choice3] (bang2) {$!$};
        \node[below left of = bang1] (end2) {$\Endl$};
        \node[below right of = bang1] (end3) {$\Endl$};
        \node[below left of = bang2] (end4) {$\Endl$};
        \node[below right of = bang2] (bang3) {$!$};
        \node[below left of = bang3] (end5) {$\Endl$};
        \node[below right of = bang3] (end6) {$\Endl$};

        \draw (choice1) edge[above] node[xshift=-0.2cm] {$\incl$} (choice2);
        \draw (choice1) edge[above] node[xshift=+3ex, yshift=-1ex] {$\dumpl$} (end1);
        \draw (choice2) edge[above] node[xshift=-0.2cm] {$\incl$} (choice3);
        \draw (choice2) edge[above] node[xshift=+3ex, yshift=-1ex] {$\dumpl$} (bang1);
        \draw (choice3) edge[above] node[xshift=-0.2cm] {$\incl$} (dots);
        \draw (choice3) edge[above] node[xshift=+3ex, yshift=-0.5ex] {$\dumpl$} (bang2);
        \draw (bang1) edge[above] node[xshift=-0.1cm, yshift=-0.5ex] {$\dl$} (end2);
        \draw (bang1) edge[above] node[xshift=+0.1cm, yshift=-0.5ex] {$\cl$} (end3);
        \draw (bang2) edge[above] node[xshift=-0.1cm, yshift=-0.5ex] {$\dl$} (end4);
        \draw (bang2) edge[above] node[xshift=+0.1cm, yshift=-0.5ex] {$\cl$} (bang3);
        \draw (bang3) edge[above] node[xshift=-0.1cm, yshift=-0.5ex] {$\dl$} (end5);
        \draw (bang3) edge[above] node[xshift=+0.1cm, yshift=-0.5ex] {$\cl$} (end6);
      \end{tikzpicture}
    \end{center}
  \end{minipage}
  \begin{minipage}{0.65\textwidth}
    \begin{center}
      \begin{align*}
        \lcal(\tcounter) = \{
        &\varepsilon,\;\;
          \&\incl,\;\;
          \&\dumpl,\;\;
          \&\dumpl\cdot\Endl,\\
        &\&\incl\cdot\&\incl,\;\;
          \&\incl\cdot\&\dumpl,\;\;
          \&\incl\cdot\&\dumpl\cdot!\dl,\\
        & \&\incl\cdot\&\dumpl\cdot!\dl\cdot\Endl,\;\;
          \&\incl\cdot\&\dumpl\cdot!\cl,\\
        & \&\incl\cdot\&\dumpl\cdot!\cl\cdot\Endl,\;\;
          \&\incl\cdot\&\incl\cdot\&\incl,\\
        &\&\incl\cdot\&\incl\cdot\&\dumpl,\;\;
          \&\incl\cdot\&\incl\cdot\&\dumpl\cdot!\dl,\\
        &\&\incl\cdot\&\incl\cdot\&\dumpl\cdot!\dl\cdot\Endl,\\
        &\&\incl\cdot\&\incl\cdot\&\dumpl\cdot!\cl,\\
        &\&\incl\cdot\&\incl\cdot\&\dumpl\cdot!\cl\cdot!\dl,\\
        &\&\incl\cdot\&\incl\cdot\&\dumpl\cdot!\cl\cdot!\dl\cdot\Endl,\;
        % &\&\incl\cdot\&\incl\cdot\&\dumpl\cdot!\cl\cdot!\cl,
        % &\&\incl\cdot\&\incl\cdot\&\dumpl\cdot!\cl\cdot!\cl\cdot\Endl,\\
        \ldots\}
      \end{align*}
    \end{center}
  \end{minipage}
  \caption{The tree and the language of type $\tcounter$.}
  \label{fig:treecounter}
\end{figure}

%%% Local Variables:
%%% mode: latex
%%% TeX-master: "main"
%%% End:

% !TeX root = main.tex
\begin{figure}[t!]
  \begin{minipage}{0.33\textwidth}
    \begin{center}
      \vspace{2em}
      \begin{tikzpicture}[node distance = 3em]
        \node (choice1) {$\&$};
        \node[below left of = choice1, xshift=-1em] (end1) {$\Endl$};
        \node[below right of = choice1, xshift=+1.5em] (choice2) {$\&$};
        \node[below left of = choice2, xshift=-3em] (in1) {$?$};
        \node[below right of = choice2, xshift=+3em] (choice3) {$\&$};
        \node[below left of = choice3, xshift=-0em] (in2) {$?$};
        \node[below right of = choice3, xshift=+0em] (dots1) {$\vdots$};
        \node[below left of = in1, xshift=-0em] (end2) {$\Endl$};
        \node[below right of = in1, xshift=+0em] (choice4) {$\&$};
        \node[below left of = choice4, xshift=-0em] (end3) {$\Endl$};
        \node[below right of = choice4, xshift=+0em] (dots2) {$\vdots$};
        \node[below left of = in2, xshift=-0em] (end4) {$\Endl$};
        \node[below right of = in2, xshift=+0em] (dots3) {$\vdots$};

        \draw (choice1) edge[above] node[xshift=-0.2cm] {$\leafl$} (end1);
        \draw (choice1) edge[above] node[xshift=+2ex, yshift=-0ex] {$\nodel$} (choice2);
        \draw (choice2) edge[above] node[xshift=-0.2cm] {$\leafl$} (in1);
        \draw (choice2) edge[above] node[xshift=+2ex, yshift=-0ex] {$\nodel$} (choice3);
        \draw (choice3) edge[above] node[xshift=-0.2cm] {$\leafl$} (in2);
        \draw (choice3) edge[above] node[xshift=+2ex, yshift=-0ex] {$\nodel$} (dots1);
        \draw (in1) edge[above] node[xshift=-0.1cm, yshift=-0.5ex] {$\dl$} (end2);
        \draw (in1) edge[above] node[xshift=+0.1cm, yshift=-0.5ex] {$\cl$} (choice4);
        \draw (choice4) edge[above] node[xshift=-0.2cm] {$\leafl$} (end3);
        \draw (choice4) edge[above] node[xshift=+2ex, yshift=-0ex] {$\nodel$} (dots2);
        \draw (in2) edge[above] node[xshift=-0.1cm, yshift=-0.5ex] {$\dl$} (end4);
        \draw (in2) edge[above] node[xshift=+0.1cm, yshift=-0.5ex] {$\cl$} (dots3);
      \end{tikzpicture}
    \end{center}
  \end{minipage}
  \begin{minipage}{0.65\textwidth}
      \begin{align*}
        \lcal(\ttree) = \{
        & \varepsilon,\;
          \&\leafl,\;
          \&\leafl\cdot\Endl,\;
          \&\nodel,\\
        & \&\nodel\cdot\&\leafl,\;
          \&\nodel\cdot\&\nodel,\\
        & \&\nodel\cdot\&\leafl\cdot?\dl,\;
          \&\nodel\cdot\&\leafl\cdot?\dl\cdot\Endl,\\
        & \&\nodel\cdot\&\leafl\cdot?\cl,\;
          \&\nodel\cdot\&\leafl\cdot?\cl\cdot\&\leafl,\\
        &\&\nodel\cdot\&\leafl\cdot?\cl\cdot\&\nodel,\\
        &\&\nodel\cdot\&\nodel\cdot\&\leafl,\;
        \&\nodel\cdot\&\nodel\cdot\&\nodel,\\
        &\&\nodel\cdot\&\nodel\cdot\&\leafl\cdot?\dl,\\
        &\&\nodel\cdot\&\nodel\cdot\&\leafl\cdot?\cl,\;
        \ldots\}
      \end{align*}
  \end{minipage}
  \caption{The tree and the language of type $\ttree$.}
  \label{fig:treetree}
\end{figure}

%%% Local Variables:
%%% mode: latex
%%% TeX-master: "main"
%%% End:

\Cref{fig:treeloop} depicts (a finite fragment of) the tree corresponding to
$\treeof(\tloop)$ (\cref{exa:rec}) and (some of the words in) its language $\lcal(\tloop)$.
Type $\tcounter$ (\cref{exa:onecounter}) describes an interaction that keeps
track of a counter.
% A sequence of $n$ $\incl$ operations followed by a $\dumpl$ triggers a reply
% of $n$ messages.
Finite fragments of the corresponding tree and language are depicted in \cref{fig:treecounter}.
Type $\ttree$ (\cref{exa:context-free}) describes the reception of a binary tree
of $\End$ values.
%The type is first embedded into an infinite type (according to \cref{fig:cfst}).
Finite fragments of the corresponding tree and language are
depicted in \cref{fig:treetree}.

% To illustrate \cref{def:languageofatype}, consider the type
% $\TT={\intchoice\typec{\{\mathrm{send}\colon\OUT{\End}\End,\mathrm{stop}\colon\End\}}}$.
% The tree generated by $\TT$ is $t = \treeof(\TT)$ such that $t(\Empty)=\oplus$,
% $t(\mathrm{send})=\ !$, $t(\mathrm{send}\cdot\dl)=\Endl$,
% $t(\mathrm{send}\cdot\cl)=\Endl$ and $t(\mathrm{stop})=\Endl$. Its language is
% $$\lcal(T)=\{\Empty,\ \oplus\mathrm{send},\ \oplus\mathrm{send}\cdot!\dl,\ \oplus\mathrm{send}\cdot!\dl\cdot\Endl,\ \oplus\mathrm{send}\cdot!\cl,\ \oplus\mathrm{send}\cdot!\cl\cdot\Endl,\ \oplus\mathrm{stop},\ \oplus\mathrm{stop}\cdot\Endl\}.$$

In the above examples, the language $\lcal(\TT)$ is closed under prefixes. This holds for a general type $\TT$, since elements of $\lcal(\TT)$ correspond to paths in $\treeof(\TT)$.

\begin{proposition}
  \label{prop:typeclosedprefix}
% Let $\lcal(\TT)$ be the language of traces of a type $\TT$. 
  $\lcal(\TT)$ is prefix closed, that is, if $w\in\lcal(\TT)$ and $u$ is a
  prefix of $w$, then $u\in\lcal(\TT)$.
\end{proposition}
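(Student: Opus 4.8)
The plan is to reduce the statement to two structural facts about the tree $t=\treeof(\TT)$ and the path-extraction function $\Path_t$, after which a short case analysis finishes the argument. The first fact is that $\dom(t)$ is prefix closed, which is essentially built into the definition of a tree: the second defining constraint says that if $t(\pi\sigma)$ is defined then $t(\pi)$ is defined, and iterating this one-symbol statement along a word shows that every prefix of an element of $\dom(t)$ again lies in $\dom(t)$. The second fact is that $\Path_t$ turns prefixes of $\pi$ into prefixes of $\Path_t(\pi)$ in a length-preserving way. This follows from the recursion $\Path_t(\pi\tau)=\Path_t(\pi)\cdot\pair{t(\pi)}{\tau}$: each extension of $\pi$ by one symbol appends exactly one letter of the alphabet $\Sigma$ of \eqref{eq:alphabet}, so $|\Path_t(\pi)|=|\pi|$ and, for every $k\le|\pi|$, the length-$k$ prefix of $\Path_t(\pi)$ equals $\Path_t(\pi')$ where $\pi'$ is the length-$k$ prefix of $\pi$.

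First I would establish this second fact by induction on $|\pi|$, reading off the base case $\Path_t(\Empty)=\Empty$ and the inductive step directly from the recursive clause above. With both facts in hand, I would take $w\in\lcal(\TT)$ and a prefix $u$ of $w$, and proceed by a case analysis on the shape of $w$ dictated by the definition of $\lcal(\TT)$.

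If $w=\Path_t(\pi)$ with $\pi\in\dom(t)$, then $u$, being a prefix of $\Path_t(\pi)$, equals $\Path_t(\pi')$ for the length-$|u|$ prefix $\pi'$ of $\pi$; prefix-closedness of $\dom(t)$ gives $\pi'\in\dom(t)$, and hence $u\in\lcal(\TT)$. If instead $w=\overline{\Path}_t(\pi)=\Path_t(\pi)\cdot\Endl$ with $\pi$ terminal, I would split once more: either $u=w$, which lies in $\lcal(\TT)$ by hypothesis, or $u$ is a prefix of $\Path_t(\pi)$ and the previous case applies verbatim, using that $\pi\in\dom(t)$. These exhaust the possibilities, establishing that $\lcal(\TT)$ is prefix closed.

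I expect this proposition to be essentially routine; the only point that needs care is the terminal case, where the trailing $\Endl$ must be recognised as a single alphabet symbol, so that dropping it returns us to a word of the form $\Path_t(\pi)$ already known to be in the language. No genuine obstacle arises, precisely because $\lcal(\TT)$ is defined as the set of path labels of a tree whose domain is prefix closed by construction, and $\Path_t$ merely relabels the edges of a prefix-closed domain.
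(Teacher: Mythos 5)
Your proof is correct and follows exactly the route the paper has in mind: the paper states the proposition without a formal proof, justifying it only by the remark that elements of $\lcal(\TT)$ correspond to paths in $\treeof(\TT)$, whose domain is prefix closed by definition. Your write-up simply makes that one-line argument explicit (prefix-closedness of $\dom(t)$, the prefix-preserving recursion for $\Path_t$, and the separate handling of the trailing $\Endl$ symbol), so there is nothing to add.
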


Another immediate observation is that $\treeof$ (resp.\ $\lcal$) is an embedding from the class of all types to the class of all trees (resp.\ all languages).

\begin{proposition}\label{prop:typeequivalence}
  Let $\TT$ and $\UT$ be two types. % in $\typesi$.
  The following are equivalent:
\begin{enumerate}
  \item $\isequiv TU$;
  \item $\treeof(\TT)=\treeof(\UT)$;
  \item $\lcal(\TT)=\lcal(\UT)$.
\end{enumerate}
\end{proposition}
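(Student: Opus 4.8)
The plan is to prove the two equivalences $(2)\Leftrightarrow(3)$ and $(1)\Leftrightarrow(2)$; chaining them yields the three-way equivalence. Throughout I rely on contractiveness to guarantee that $\treeof(\TT)$ is well defined on \emph{every} type: unfolding a leading type constructor $\CALLT X{\cdots}$ via its defining equation leaves the associated tree unchanged, and after finitely many such unfoldings a genuine constructor ($\End$, a message, or a choice) is exposed, at which point the defining equations for $\treeof$ apply. I refer to the type so obtained as the head normal form of $\TT$.

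For $(2)\Leftrightarrow(3)$ it suffices to observe that a tree and its path language determine one another. The implication $(2)\Rightarrow(3)$ is immediate, since $\lcal(\TT)$ is by definition a function of $\treeof(\TT)$. For $(3)\Rightarrow(2)$ I would reconstruct a tree $t$ from the language $\lcal(t)$ by recursion on the length of a node $\pi\in\dom(t)$: the second components of the letters of a path spell out the node reached, so $\dom(t)$ is recovered as the set of these direction sequences; and the label $t(\pi)$ is read off either from the first component of the letter labelling any outgoing edge of $\pi$, or, when $\pi$ is terminal, from the trailing $\Endl$ recorded by $\overline{\Path}_t(\pi)$. This exhibits $\lcal$ as injective on trees, so that $\lcal(\TT)=\lcal(\UT)$ forces $\treeof(\TT)=\treeof(\UT)$.

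For $(1)\Rightarrow(2)$ I would prove, by induction on $|\pi|$, that $\isequiv TU$ implies $\treeof(\TT)(\pi)=\treeof(\UT)(\pi)$ for every $\pi$ (with both sides defined, or both undefined). The engine is a head-matching lemma: by the coinductive definition of $\teq$, membership of $(\TT,\UT)$ is justified by an instance of one of the rules, and after head-normalizing both sides the only applicable congruence rule (\eqend, \eqmsg or \eqchoice) forces $\TT$ and $\UT$ to display the same top constructor — giving $\treeof(\TT)(\Empty)=\treeof(\UT)(\Empty)$ — and, in the message and choice cases, to have equivalent immediate subcomponents ($\isequiv{T_\dl}{U_\dl}$ and $\isequiv{T_\cl}{U_\cl}$ for messages, $\isequiv{T_\ell}{U_\ell}$ for each $\ell$ in the common label set). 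The unfolding rules \eqidl, \eqidr (and their parameterized variants \eqzl, \eqsl, $\ldots$) only replace a constructor by a tree-preserving unfolding, so they do not disturb this analysis. The base case $\pi=\Empty$ is the head match; for $\pi=\tau\pi'$ the tree equations reduce the goal to $\treeof(T_\tau)(\pi')=\treeof(U_\tau)(\pi')$, which follows from the induction hypothesis applied to the equivalent subcomponents.

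For the converse $(2)\Rightarrow(1)$ I would argue coinductively: take the relation $\R=\{(\TT,\UT)\mid \treeof(\TT)=\treeof(\UT)\}$ and show it is \emph{backward closed} under the type-equivalence rules, whence $\R\subseteq{\teq}$. Given $(\TT,\UT)\in\R$, I first use \eqidl/\eqidr (or \eqzl/\eqsl, $\ldots$) to unfold any leading type constructors — justified because unfolding preserves the tree, and terminating by contractiveness — so that both sides reach head normal form; equality of the roots of $\treeof(\TT)$ and $\treeof(\UT)$ then forces the same top constructor, and equality of the corresponding immediate subtrees places each pair of subcomponents back in $\R$, so the matching congruence rule applies with premises in $\R$. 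I expect the main obstacle to be precisely this interleaving of the inductive unfolding/contractiveness steps with the coinductive bisimulation argument: one must verify that head-normalization is well founded on every contractive type and that the witnessing relation is genuinely closed under the unfolding rules, not merely the congruence rules. The argument is uniform across all the shades of types in \cref{sec:types-procs}, since in each case the equivalence judgement consists of the same congruence rules together with tree-preserving unfolding rules.
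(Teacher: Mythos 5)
Your proposal is correct. Note that the paper itself gives no proof of this proposition: it is stated bare, followed only by the remark that trace equivalence coincides with bisimulation for deterministic transition systems and that the syntax of (infinite) session types is deterministic. Your argument is in effect a faithful expansion of that remark. The $(3)\Rightarrow(2)$ step --- reconstructing the tree from its path language --- is exactly where determinism enters: each node of $\treeof(\TT)$ has at most one child per direction symbol, and each letter $\pair{t(\pi)}{\tau}$ carries the parent's label, so $\lcal$ is injective on trees (with the one caveat, shared with the paper, that a choice over an empty label set would be invisible in the language; one must implicitly assume $L\neq\emptyset$, since otherwise $\extchoice\typec{\{\}}$ and $\intchoice\typec{\{\}}$ would already refute $(3)\Rightarrow(1)$). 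The $(1)\Leftrightarrow(2)$ step is the bisimulation half: your backward-closure argument for $\R=\{(\TT,\UT)\mid\treeof(\TT)=\treeof(\UT)\}$ is precisely the coinductive proof method the paper uses elsewhere (\eg for $\istype\tloop$), and your head-matching lemma, with termination of the unfolding rules \eqidl, \eqidr guaranteed by contractiveness before a congruence rule \eqend, \eqmsg or \eqchoice must fire, is the standard way to interleave the inductive unfolding with the coinductive congruences. The only presentational wrinkle is that you describe unfolding as ``leaving the tree unchanged'' although $\treeof$ is only defined on head constructors; one should first \emph{define} $\treeof(\XT)$ as the tree of the head normal form (well defined by contractiveness), after which that statement is meaningful and your argument goes through uniformly for every shade of type in the paper.
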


\Cref{prop:typeequivalence} tells us that two types are equivalent iff they have the same traces. Note that, in general, trace equivalence is a notion weaker than bisimulation \cite{sangiorgi:bisimulation-coinduction}. However, both notions coincide for deterministic transition systems. The syntax of (infinite) session types is in fact deterministic (\eg given a label $\ell$ for a choice, there can only be one type that continues from $\&\ell$), which explains our result.

\Cref{sec:types-procs} introduces eight classes of types. We now distinguish them
by means of subscripts:
finite types ($\istypef T$, \cref{fig:finite-abbr}),
recursive types ($\istyper T$, \cref{fig:recursive-abbr}),
1-counter types ($\istypeo T$, \cref{fig:onecounter-abbr}),
context-free types ($\istypec T$, \cref{fig:cfst}),
pushdown types ($\istyped T$, \cref{fig:pushdown-abbr}),
nested types ($\istypen T$, \cref{fig:nested}),
2-counter types ($\istypet T$) and 
coinductive, infinite types ($\istypei T$, \cref{fig:finite-abbr} with rules interpreted
coinductively).
To each class of types we introduce the corresponding class of languages. For example,
$\typesr$ is the set $\{\lcal(\TT)\mid \istyper T\}$. The strict hierarchy
result is as follows:
%
% \begin{restatable}[Strict hierarchy]{theorem}{stricthierarchy}
% \label{thm:strictinclusion}
\begin{equation}\label{eq:mainchaininclusions}
\typesf \subsetneq \typesr \subsetneq \typeso \subsetneq \typesp \subsetneq
  \typest \subsetneq \typesi
\end{equation}
% \end{restatable}
%
We remark that the last step in the chain of strict inclusions is obtained by a cardinality argument, since the set $\typesi$ is uncountable. This shows an even stronger statement: 
for any finite representation system (including the systems $\typesf$ to
$\typest$, as well as $\typescf$ and $\typesn$), there is an \emph{infinite, uncountable} set of types that cannot be represented by that system.

We now turn our attention to nested types ($\istypen{T}$) which turn out to be
%
% In addition to the strict hierarchy, we also compare the previously studied
% models of context-free session types ($\istypec{T}$) and nested session types
% ($\istypen{T}$).
% As for nested session types, we will show that they are
equivalent to pushdown types, and further establish
equivalent sub-hierarchies inside both classes, parameterised by the
`complexity' of the corresponding representations. For pushdown session types, a
natural measure of complexity is the number of type constructors %  (states)
required to represent a given type. This number can be arbitrarily large, but
always finite. For a given $n\in\nbb$, we let $\typesp^n$ denote the subset % of
% types $\typesp$ 
corresponding to those types that can be represented with at most $n$ type constructors. When $n=0$, there are no constructors, and we can only represent finite types. As $n$ increases, so does the expressivity of our constructions, and we have the infinite chain of inclusions
\begin{equation*}
\typesf=\typesp^0\subsetneq\typesp^1\subseteq\typesp^2\subseteq\cdots\subseteq\typesp.
\end{equation*}

Similarly, for nested session types we can define a hierarchy by looking at the arities of the type constructors used. For a given $n\in\nbb$, we let $\typesn^n$ denote the subset % of $\typesn^n$
corresponding to the nested session types % having a representation
whose type constructors have arity at most $n$. When $n=0$ all type constructors are constant, and we recover the class of recursive types. As $n$ increases, so does the expressivity, and we also have an infinite chain of inclusions
\begin{equation*}
\typesr=\typesn^0\subsetneq\typesn^1\subseteq\typesn^2\subseteq\cdots\subseteq\typesn.
\end{equation*}

It turns out that these hierarchies are one and the same (with the exception of
the bottom level), so that we have (\cref{sub:contextfreenested})
\begin{equation}\label{eq:chainpushdownnested}
  \typesf = \typesp^0 \subsetneq \typesr = \typesn^0 \subsetneq \typesp^1 = \typesn^1 \subseteq \typesp^2 = \typesn^2 \subseteq \cdots \subseteq \typesp = \typesn.
\end{equation}

% Context-free session types must be converted into infinite types (using the
% embedding $\embeds{}{}$ in \cref{fig:cfst}) before taking their traces.
% Formally, we define $\typescf$ as the set
% $\{\lcal(\UT)\mid \embeds \TT\UT\text{ for some }\istypec T\}$.
% %
Higher-order context-free types
(denoted by $\typescf$)
lie between levels $0$ and $1$ in the sub-hierarchies above, \ie they can represent recursive types, and can be
represented by pushdown session types using at most one type constructor, or
equivalently, by nested session types with either constant or unary type
constructors, so that we have (\cref{sub:contextfreenested})
\begin{equation}\label{eq:chaincontextfree}
  \typesr \subsetneq \typescf \subsetneq \typesp^1 = \typesn^1.
\end{equation}

% \begin{restatable}[Inclusions for context-free and nested session types]{theorem}{hierarchy2}
% \label{thm:hierarchy2}~
% \begin{itemize}
%   \item $\typescf \subsetneq \typesp^1 = \typesn^1$;
%   \item $\typesf = \typesp^0 \subsetneq \typesr = \typesn^0 \subsetneq \typesp^1 = \typesn^1 \subseteq \typesp^2 = \typesn^2 \subseteq \cdots \subseteq \typesp = \typesn$.
% \end{itemize}
% \end{restatable}

Regarding the inclusion $\typescf \subsetneq \typesp^1$, we actually have a stronger observation. Context-free session types are included in pushdown session types
% characterisation of context-free session types. They correspond exactly to
% pushdown session types % having a representation
which have only one type constructor $\XT$, and where
% only type $\End$ is allowed for sending and receiving; 
the equation $\iseqt {\CALL X \varepsilon} \End$ accounts for the only occurrence of $\End$. The latter means that the type ends
iff the state $\CALLT X \varepsilon$ is reached, that is, iff the stack is
empty. Thus, we can intuitively think of context-free session types as
pushdown types with a single constructor and an empty stack
acceptance criterion.
This observation points to the fact that the qualifier `context-free' in the so
called context-free session types is a misnomer, a remark that is not unheard of
\cite{DBLP:conf/esop/DasDMP21}.

The hierarchy that puts in context all the classes of types studied in this
paper is summarized in the result below.

% \Cref{thm:strictinclusion,thm:hierarchy2} can be summarized in \cref{fig:hierarchy-diagram}.
% \input{fig-hierarchy-diagram}
\newcommand\sepex{0.4ex}

\begin{theorem}[Inclusions]
  \label{thm:inclusions}
  \
  \begin{center}
\begin{tabular}{p{\sepex}p{\sepex}p{\sepex}p{\sepex}p{\sepex}p{\sepex}p{\sepex}p{\sepex}p{\sepex}p{\sepex}p{\sepex}p{\sepex}p{\sepex}p{\sepex}p{\sepex}p{\sepex}p{\sepex}p{\sepex}p{\sepex}p{\sepex}p{\sepex}p{\sepex}p{\sepex}p{\sepex}p{\sepex}}
      $\typesf$ & $=$ & $\typesp^0$ & $\subsetneq$ & $\typesr$ & $=$ & $\typesn^0$
      &&& $\subsetneq$ &&& $\typeso$ &&& $\subsetneq$ &
      & $\typesp$ & $=$ & $\typesn$ & $\subsetneq$ & $\typest$ & $\subsetneq$ & $\typesi$
      \\
                &&&&&& \rotatebox[origin=c]{270}{$\subsetneq$}
                                      &&&&&&&&&&& \rotatebox[origin=c]{90}{$\subseteq$}
      \\
                &&&&&& $\typescf$ & $\subsetneq$ & $\typesp^1$ & $\ =$ & $\typesn^1$
                                                                         &
                                                                           $\subseteq$ & $\typesp^2$ & $\ =$ & $\typesn^2$ & $\subseteq$ & \;$\cdots$
    \end{tabular}
  \end{center}
\end{theorem}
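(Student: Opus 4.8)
The plan is to prove the entire hierarchy at the level of \emph{languages}, exploiting the fact that each class in the statement is already defined as a class of trace languages $\{\lcal(\TT)\mid\cdot\}$ and that, by \cref{prop:typeequivalence}, $\lcal$ is faithful (two types are equivalent iff their trace languages coincide, and all such languages are prefix closed by \cref{prop:typeclosedprefix}). The organising tool is the bidirectional type--automaton correspondence of \cref{thm:equivalencetypesautomata}, which identifies $\typesr$ with the prefix-closed regular languages, $\typeso$ with the prefix-closed 1-counter languages, $\typesp$ (hence, once the equality below is proved, $\typesn$) with the prefix-closed DCFLs, and $\typest$ with the prefix-closed decidable languages. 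With this dictionary every inclusion $\subseteq$ in the chain reduces to a routine inclusion of automata classes (using \cref{thm:obviouslyprefixclosed} to normalise the witnessing automata into single-sink-state form in the backward direction), and every \emph{strict} inclusion reduces to a separation in formal language theory witnessed by one of the running examples.

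First I would dispatch the purely syntactic equalities. For $\typesf=\typesp^0$, a pushdown type with no type constructors admits no recursive unfolding, so its tree is finite; conversely every finite type is trivially a $0$-constructor pushdown type. For $\typesr=\typesn^0$, a nested type all of whose constructors have arity $0$ is literally a system of non-parametric equations $\iseqt XT$, i.e.\ a recursive type, and vice versa. Both are immediate from \cref{fig:pushdown-abbr,fig:nested}.

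Next I would establish the strict inclusions of the main chain via the language characterisation. For $\typesf\subsetneq\typesr$ the witness is $\lcal(\tloop)$, which is infinite and hence not a finite language. For $\typesr\subsetneq\typeso$ the witness is $\lcal(\tcounter)$: its traces encode a genuine counter (a run of $n$ $\incl$-choices must be matched by $n$ output messages after a $\dumpl$), which fails the pumping lemma for regular languages. For $\typeso\subsetneq\typesp$ the witness is $\lcal(\tstack)$, where distinguishing the two push operations $\pushal,\pushbl$ forces a stack over a two-symbol alphabet that no single counter can simulate. For $\typesp\subsetneq\typest$ the witness is $\lcal(\titer)$, whose two-counter behaviour realises a language outside the DCFLs (two-counter machines being Turing powerful). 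In each case the witness type lies in the larger class by construction, and non-membership in the smaller language class is refuted by the relevant pumping/automata argument. The top step $\typest\subsetneq\typesi$ is the cardinality argument of \cref{lem:hierarchyinfinite}: finitely representable 2-counter types are countable, whereas $\typesi$ is uncountable.

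The technical heart, and the step I expect to be hardest, is the chain of equalities $\typesp=\typesn$ together with its levelwise refinement $\typesp^n=\typesn^n$ for $n\geq 1$. The two formalisms differ substantially---nested constructors take \emph{types} as arguments and have fixed arity, whereas pushdown constructors take \emph{stack words} as arguments---so I would give explicit translations in both directions and verify that each preserves the generated tree (hence, by \cref{prop:typeequivalence}, equivalence) while mapping the complexity measure correctly; the subtle point is that an unbounded pushdown stack must be encoded by nesting of bounded-arity constructors and conversely, so that the bookkeeping matches level by level. For the context-free strand I would invoke the embedding $\embeds TU$ of \cref{lem:embedding}, which places $\typescf$ inside $\typesi$ and preserves equivalence; the inclusion $\typescf\subsetneq\typesp^1$ then follows by realising each context-free type as a single-constructor pushdown type with the empty-stack-acceptance idiom $\iseqt{\CALL X\varepsilon}{\End}$, while $\typesr\subsetneq\typescf$ is witnessed by $\lcal(\ttree)$, whose non-regular tree structure no recursive type can produce. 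Strictness of $\typescf\subsetneq\typesp^1$, and the placement of $\typescf$ on a branch incomparable to $\typeso$, follow from the single-state-versus-single-symbol dichotomy noted in the text, again using \cref{thm:obviouslyprefixclosed} to put the separating automata into the normal form required by the backward translation.
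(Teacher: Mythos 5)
Your proposal follows essentially the same route as the paper: reduce the main chain to language classes via the type--automaton correspondence (\cref{thm:equivalencetypesautomata}) and \cref{thm:obviouslyprefixclosed}, separate levels with the running examples and pumping arguments (the paper uses Boasson's pumping lemma for the $\typeso\subsetneq\typesp$ step, which is the one tool your sketch leaves unnamed), settle $\typest\subsetneq\typesi$ by cardinality, and handle $\typesp^n=\typesn^n$ and the context-free strand by explicit syntactic translations plus the embedding of \cref{lem:embedding}. The only cosmetic deviation is your choice of $\ttree$ as the witness for $\typesr\subsetneq\typescf$, where the paper instead reuses $\tstack$ (context-free but not 1-counter, hence not recursive) and $\tkorhop$ for the incomparability with $\typeso$; both choices work.
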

% \begin{theorem}
%   Let $\types'$ be the set of types that can be represented by some finite
%   representation system. Then $\types'$ is a strict subset of $\types$.
% \end{theorem}

%%% Local Variables:
%%% mode: latex
%%% TeX-master: "main"
%%% End:
% !TeX root = main.tex

\section{From types to automata}
\label{sec:systemstoautomata}

% In order to prove \cref{thm:inclusions}, we start by looking at one main hierarchy contained in that theorem, namely the chain of inclusions
% $$\typesf \subsetneq \typesr \subsetneq \typeso \subsetneq \typesp \subsetneq
%   \typest \subsetneq \typesi;$$
% in \cref{sub:contextfreenested} we then show how context-free and nested types fit into the hierarchy.

This section describes procedures to convert types in different levels of the
hierarchy (recursive systems, 1-counter, pushdown and 2-counter) into automata
at the same level. All constructions follow the same guiding principles, so we
focus on the bottom level of the hierarchy (recursive systems) and then
highlight the main differences as we advance in the hierarchy.

All automata that we consider in this paper are \emph{deterministic} and
\emph{total}, \ie the transition functions are such that any input word has a
well-defined, unique computation path. We use the alphabet $\Sigma$ defined
in \eqref{eq:alphabet}. As standard references in automata theory we
mention the book by Hopcroft and Ullman \cite{hopcroftullman:1979} and Valiant's PhD thesis \cite{valiant:1973:phdthesis}.

\paragraph{Recursive types and finite-state automata}

Following the usual notation, a (deterministic) \emph{finite-state automaton} is
given by a set $Q$ of states, with a specified initial state $q_0\in Q$, a 
transition function $\delta: Q\times\Sigma\rightarrow Q$, and a set $A\subseteq Q$ of accepting states. Given a finite word $a_1a_2\cdots a_n$, its execution by the automaton yields the sequence of states $s_0,s_1,\ldots,s_n$ where $s_0=q_0$ and $s_{i+1}=\delta(s_i,a_{i+1})$.
We say that a word is accepted by the automaton if its execution ends in an accepting state.

%The definition of finite-state automata can be augmented into other types of automata. Essentially: in a 1-counter automata we have access to a counter (with operations for incrementing, decrementing, and checking whether the counter is non-zero), in addition to the current state; in a pushdown automata we have access to a stack (with operations for pushing a symbol, popping a symbol, and observing the top symbol of the stack); in a 2-counter automata, we have access to two counters. We provide the formal definitions in \cref{sec:automatadefinitions}.

Suppose we are given a system of recursive equations $\{\iseqt{X_i}{T_i} \}_{i\in I}$ over a
variable set $\xcal = \{\XT,\YT,\ldots\}$. Our first step is to
convert this system into a normal form in which every right-hand side is either a
variable $\XT$, or a single application of one of the type constructors, \ie one
of $\End$, $\IN XY$, $\OUT XY$, $\extchoice\recordt \ell X L$ or
$\intchoice\recordt \ell X L$. We can do this by introducing fresh, intermediate
variables as needed. Essentially, whenever we have an equation $\iseqt{X}{\IN{T_1}{T_2}}$ where $\typec{T_1}$, $\typec{T_2}$ are not variables, we add two new variables $\typec{X'}$, $\typec{X''}$, replace the above equation by $\iseqt{X}{\IN{X'}{X''}}$, and add two new equations $\iseqt{X'}{T_1}$ and $\iseqt{X''}{T_2}$. The process is the same for the other type constructors. By doing this repeatedly, we ``break down'' a long equation into many small equations. The number of new variables is linear in the encoding size of the original representation. 
%\todo{Do we need a bit more detail? def normal form, all types can be put in
%  normal form, a normal form is equivalent to the original type.}

Given such a system, we construct a finite-state automaton (over the alphabet $\Sigma$) as follows. The
automaton has a state $q_X$ for every type variable $\XT$, and two additional
states: an `end' state $q_{\Endl}$ and an `error' state $q_{\errorl}$. The
transitions from $q_{\errorl}$ are described by
$q_{\errorl}\overset{a}{\rightarrow}q_{\errorl}$ for every symbol $a$.
Similarly, the transitions at $q_{\Endl}$ are described by
$q_{\Endl}\overset{a}{\rightarrow}q_{\errorl}$ for every symbol $a$. The
transitions at state $q_X$ are given by the corresponding equation for variable
$\XT$, in the obvious way. Some examples:

\begin{itemize}
\item Suppose our system contains the equation $\iseqt XY$. Then we have an
  $\varepsilon$-transition given by $q_X\overset{\varepsilon}{\rightarrow}q_Y$.
\item Suppose our system contains the equation $\iseqt X {\OUT YZ}$. Then we
  have the reading transitions $q_X\overset{!\dl}{\rightarrow}q_Y$,
  $q_X\overset{!\cl}{\rightarrow}q_{Z}$, and
  $q_X\overset{a}{\rightarrow}q_{\errorl}$ for any $a\neq\ !\dl,!\cl$.
\item Suppose our system contains the equation
  $\iseqt X {\Oplus\{l\colon X, m\colon Y\}}$. Then we have the reading
  transitions $q_X\overset{\Oplus l}{\rightarrow}q_{X}$,
  $q_X\overset{\Oplus m}{\rightarrow}q_{Y}$ and
  $q_X\overset{a}{\rightarrow}q_{\errorl}$ for any
  $a\neq \Oplus l,\Oplus m$.
\item Suppose our system contains the equation $\iseqt X\End$. Then we have the reading moves $q_X\overset{\Endl}{\rightarrow}q_{\Endl}$ and $q_X\overset{a}{\rightarrow}q_{\errorl}$ for any $a\neq\Endl$.
\end{itemize}

We define all states other than $q_\errorl$ to be accepting states.\footnote{We need all states to be accepting, since we might need to look at finite traces to distinguish between two types. For example, $\iseqt X {\&\{\al\colon  X\}}$ and $\iseqt Y {\&\{\bl\colon  Y\}}$ define non-equivalent types that have no finite terminating paths.} Notice that
the finite-state automaton described above is an automaton with possible
$\varepsilon$-moves. Although, by definition, deterministic finite-state automata do not permit $\varepsilon$-moves, in our case paths of $\varepsilon$-moves are uniquely determined and either reach a state without outgoing $\varepsilon$-transitions, or become stuck in a loop\footnote{In this case, the system of equations is not contractive and does not define a type.}. We can convert the given automaton into an equivalent automaton without $\varepsilon$-moves by `shortcutting' such moves. Formally, suppose a state $\typec{X}$ has an outgoing $\varepsilon$-transition to $\typec{Y}$; by construction, it is $\typec{X}$'s only outgoing transition. Assuming $\typec{X}$ and $\typec{Y}$ are different states, we can change every transition entering $\typec{X}$ and make it enter $\typec{Y}$ instead; finally, we can remove state $\typec{X}$ (hence removing the $\varepsilon$-transition from $\typec{X}$). If $\typec{X}$'s outgoing $\varepsilon$-transition loops to itself, we can just remove this transition and treat $\typec{X}$ as a state from which no transitions are possible.

% !TeX root = main.tex
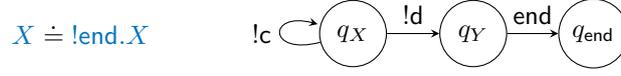
\begin{figure}[t!]
\centering
\begin{tikzpicture}
\node[state] (q0) {$q_X$};
\node[state, right of = q0] (q1) {$q_Y$};
\node[state, right of = q1] (q2) {$q_\Endl$};
\draw (q0) edge[loop left] node{$!\cl$} (q1);
\draw (q0) edge[above] node{$!\dl$} (q1);
\draw (q1) edge[above] node{$\Endl$} (q2);
\node[left of = q0, xshift = -2cm]{$\iseqt X {\OUT \End X}$};
\end{tikzpicture}
\caption{An automaton that defines $\tloop$ with initial state $q_X$. All depicted states are accepting.\label{fig:automuxendx}}
\end{figure}

%%% Local Variables:
%%% mode: latex
%%% TeX-master: "main"
%%% End:

We show in \cref{fig:automuxendx} the automaton that corresponds to type $\tloop$ (\cref{exa:rec}). Here we adopt the convention that every missing transition points to $q_{\errorl}$ which is not shown. In our examples, all depicted states are accepting, so we omit the usual double circle notation.

\paragraph{1-counter types and automata}
We augment the definition of finite-state automata into the definition of
1-counter automata as follows. We now have a partially defined transition
function $\delta :
Q\times\{\zerol,\succl\}\times(\{\varepsilon\}\cup\Sigma)\rightarrow
\{=,+,-\}\times Q$. The first argument of $\delta$ corresponds to the current
machine state. The second argument of $\delta$ indicates whether the counter
currently has value zero ($\zerol$) or some positive number ($\succl$). Note
that we cannot directly read the counter value, only whether it is non-zero. The
third argument can be either a symbol in $\Sigma$ (which is used for reading
moves), or $\varepsilon$ (which is used for $\varepsilon$-moves). The output of
$\delta$ is given by a new machine state, and additionally, a counter operation,
which can be either $=$ (no change), $+$ (increment by one) or $-$ (decrement by
one).\footnote{Of course, one has to be careful with the operation of
  decrementing when the counter value is zero. One can exclude such
  possibilities at the syntactic level, which is the case if automata are built
  from well formed 1-counter types.} We are solely interested in \emph{deterministic, total} transition functions, meaning that for each combination $(q,t)\in Q\times\{\zerol,\succl\}$, either
\begin{itemize}
\item $\delta(q,t,\varepsilon)$ is undefined, and $\delta(q,t,a)$ is defined for all $a\in\Sigma$ (so-called reading mode) or
\item $\delta(q,t,\varepsilon)$ is defined, and $\delta(q,t,a)$ is undefined for all $a\in\Sigma$ (so-called $\varepsilon$-mode).
\end{itemize}

Intuitively, at a reading mode we
must read the next input symbol, whereas at an $\varepsilon$-mode we cannot read the next input symbol (but we can change the value of the counter and the current state). A configuration is given by a pair $(q,n)\in
Q\times \mathbb{N}$, where $q$ denotes the current state and $n$ the current
value of the counter. Given $w$ in $\Sigma^*$, a derivation $(q,n) \overset{w}{\rightarrow} (q',n')$ is a sequence of moves specified by the transition rules, that leads from $(q,n)$ to $(q',n')$, and, in the process, reads the word $w$. Note that, for the same word $w$, there might be several configurations $(q',n')$ for which $(q,n) \overset{w}{\rightarrow} (q',n')$; all these lie in a unique path of $\varepsilon$-moves.

Similarly to finite-state automata, the semantics of 1-counter automata are given by a set $A$ of accepting states and an initial configuration $(q_0,n_0)$. A finite word $w$ is accepted if there is an accepting state $q_f$ and a natural number $n_f$ for which $(q_0,n_0)\overset{w}{\rightarrow} (q_f,n_f)$.

We should remark that our model is phrased in a slightly different manner from other formulations \cite{valiantpaterson:1975:onecounterautomata} that describe one-counter automata as pushdown automata with a single stack symbol, allow for increments of more than one unit in a single step, etc. Our formulation makes the parallel between types and automata somewhat more evident, and simplifies some of the proofs. It should be clear that our formulation is equivalent to the standard formulations, \ie one can easily convert between them.

We now explain how to convert 1-counter session types into 1-counter automata. Instead of non-parameterised variables our equations now involve terms of the
form $\CALLT X \zero$, $\CALLT X{\succ\zero}$, $\CALLT X N$,
$\CALLT X {\succ N}$, etc. We assume for simplicity that the variables appearing
in these equations are restricted in the following way: if the left-hand side of
an equation is of the form $\CALLT X\zero$, then the variables appearing in the
right-hand side must be of the form $\CALLT {X'} \zero$ or
$\CALLT {X'}{\succ\zero}$ (with $\typec{X'}$ possibly different from $\typec X$); and
if the left-hand side of an equation is of the form $\CALLT X {\succ N}$, then
the variables appearing in the right-hand side must be of the form
$\CALLT {X'} N$, $\CALLT {X'} {\succ N}$ or $\CALLT {X'} {\succ {\succ N}}$. Any
system can be converted into this form by adding finitely many new equations.
For example, $\CALLT X\zero \doteq \CALLT Y{\succ{\succ{\succ\zero}}}$ can be
rewritten as
%$\iseqt{\CALLT X\zero}{\CALLT {X'}{\succ\zero}}, \iseqt{\CALLT {X'}{\succ N}}{\CALLT {X''}{\succ {\succ N}}}, \iseqt{\CALLT {X''}{\succ N}}{\CALLT Y {\succ{\succ N}}}$;
%
\begin{align*}
  \iseqt{\CALLT X\zero}{\CALLT {X'}{\succ\zero}}
  &&
     \iseqt{\CALLT {X'}{\succ N}}{\CALLT {X''}{\succ {\succ N}}}
  &&
     \iseqt{\CALLT {X''}{\succ N}}{\CALLT Y {\succ{\succ N}}}
\end{align*}
and $\iseq{\CALLT X{\succ N}}{\CALLT Y\zero}$ can be rewritten as
%$\iseqt{\CALLT X{\succ N}}{\CALLT {X'}N}, \iseqt{\CALLT {X'}{\succ N}}{\CALLT {X'}{N}}, \iseqt{\CALLT {X'}\zero}{\CALLT Y\zero}$.
\begin{align*}
  \iseqt{\CALLT X{\succ N}}{\CALLT {X'}N}
  &&
     \iseqt{\CALLT {X'}{\succ N}}{\CALLT {X'}{N}}
  &&
     \iseqt{\CALLT {X'}\zero}{\CALLT Y\zero}.
\end{align*}

We can convert a 1-counter type into a (deterministic) 1-counter automaton, so
that the transition function depends on whether the counter value is zero
(corresponding to a right-hand side of the form $\CALLT X \zero$) or positive
(corresponding to a right-hand side of the form $\CALLT X {\succ N}$).
Furthermore, the changes in the counter value along the variables are
incorporated by changes in the counter value along the automaton. For example,
take equation $\iseqt{\CALLT X {\succ N}}{\CALLT Y N}$. The corresponding
transition from $(q_X,\succl,\varepsilon)$ to $q_Y$ decrements the counter.

% !TeX root = main.tex
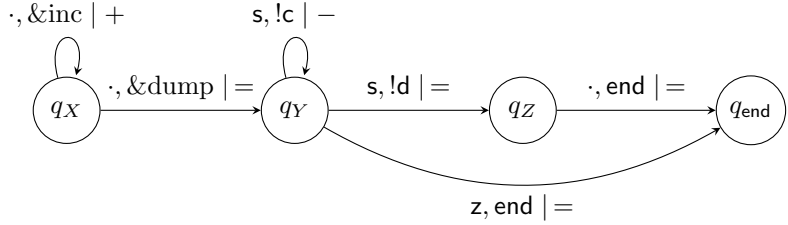
\begin{figure}[t!]
\centering
\begin{align*}
    \lhst X \zero \Eq&\; \typec{\& \{\incl\colon \CALLT X {\succ\zero}, \dumpl\colon \CALLT Y\zero\}}
    &&&
        \lhst Y \zero \Eq&\; \End
    \\
    \lhst X {\succ N} \Eq&\; \typec{\& \{\incl\colon \CALLT
                           X{\succ{\succ N}}, \dumpl\colon \CALLT Y{\succ N}\}}
    &&&
        \lhst Y {\succ N} \Eq&\; \OUT \End {\CALLT Y N}
 \end{align*}
\begin{tikzpicture}[node distance = 3cm]
\node[state] (q0) {$q_X$};
\node[state, right of = q0] (q1) {$q_Y$};
\node[state, right of = q1] (q2) {$q_Z$};
\node[state, right of = q2] (q3) {$q_\Endl$};

\draw (q0) edge[loop above] node {$\cdot,\&\incl \mid +$} (q0);
\draw (q0) edge[above] node {$\cdot,\&\dumpl \mid\,=$} (q1);
\draw (q1) edge[loop above] node {$\succl,!\cl \mid -$} (q1);
\draw (q1) edge[above] node {$\succl,!\dl \mid\,=$} (q2);
\draw (q2) edge[above] node {$\cdot,\Endl \mid\,=$} (q3);
\draw (q1) edge[bend right, below] node {$\zerol,\Endl \mid\,=$} (q3);

% \node[state] (q4) {$q_4$};
% \node[state, above left of = q4, xshift = -1cm] (q2) {$q_2$};
% \node[state, below left of = q4, xshift = -1cm] (q3) {$q_3$};
% \node[state, left of = q2, xshift = -1cm] (q0) {$q_0$};
% \node[state, left of = q3, xshift = -1cm] (q1) {$q_1$};

% \draw (q0) edge[bend left, right] node {$\cdot,\oplus\mathrm{inc} \mid +$} (q1);
% \draw (q0) edge[above] node {$\cdot,\oplus\mathrm{stop} \mid\,=$} (q2);
% \draw (q1) edge[bend left, left] node {$\cdot,!\cl \mid\,=$} (q0);
% \draw (q1) edge[below] node {$\cdot,!\dl \mid\,=$} (q3);
% \draw (q2) edge[loop above] node {$\succc,?\cl \mid -$} (q2);
% \draw (q2) edge[right] node {$\succc,?\dl \mid\,=$} (q3);
% \draw (q2) edge[right] node {$\zero,\Endl \mid\,=$} (q4);
% \draw (q3) edge[right] node {$\cdot,\Endl \mid\,=$} (q4);
\end{tikzpicture}
\caption{A 1-counter automaton for type $\tcounter = \CALLT X \zero$. The initial configuration is $(q_X,0)$. Here a transition $\delta(q,t,a)=(o,q')$ is denoted by an arc from $q$ to $q'$ with label $t,a \mid o$, where $t\in\{\zerol,\succl\}$, $a\in\{\varepsilon\}\cup\Sigma$, and $o\in\{=,+,-\}$. If both $t=\zerol$ and $t=\succl$ lead to the same transition, then we use the symbol $\cdot$ to refer to both transitions. All depicted states are accepting, and any transition which is not depicted leads to a non-accepting sink state.\label{fig:autotcounter}}
\end{figure}

For illustration purposes, we show how to construct a 1-counter automaton
accepting $\lcal(\tcounter)$ from \cref{exa:onecounter}.
%
% Recall that $\tcounter$ is $\CALLT X \zero$ with
%   equations
%   %
%   \begin{align*}
%     \lhst X \zero \Eq&\; \typec{\& \{\incl\colon \CALLT X {\succ\zero}, \dumpl\colon \CALLT Y\zero\}}
%     &&&
%         \lhst Y \zero \Eq&\; \End
%     \\
%     \lhst X {\succ N} \Eq&\; \typec{\& \{\incl\colon \CALLT
%                            X{\succ{\succ N}}, \dumpl\colon \CALLT Y{\succ N}\}}
%     &&&
%         \lhst Y {\succ N} \Eq&\; \OUT \End {\CALLT Y N}
%   \end{align*}
%
  First, we need to convert the equation for $\CALLT Y {\succ N}$ into normal form. We add an extra variable $\typec{Z}$ and write

  \begin{align*}
    \lhst X \zero \Eq&\; \typec{\& \{\incl\colon \CALLT X {\succ\zero}, \dumpl\colon \CALLT Y\zero\}}
    &&&\hspace{-1.7em}
    \lhst X {\succ N} \Eq&\; \typec{\& \{\incl\colon \CALLT
                           X{\succ{\succ N}}, \dumpl\colon \CALLT Y{\succ N}\}}
    \\
    \lhst Y \zero \Eq&\; \End
    &&&\hspace{-1.7em}
    \lhst Y {\succ N} \Eq&\; \OUT {\CALLT Z {\succ N}} {\CALLT Y N}
    \\
    \lhst Z \zero \Eq&\; \End
    &&&\hspace{-1.7em}
    \lhst Z {\succ N} \Eq&\; \End
  \end{align*}
  The corresponding automaton has states $q_X,q_Y,q_Z$, one for for each type constructor $\XT,\YT,\typec{Z}$, as well as an additional state
  $q_\Endl$. The outgoing transitions for state $q_X$ are the same regardless of
  the counter value: either read $\&\incl$, incrementing the counter and staying
  in $q_X$; or read $\&\dumpl$, keeping the counter value and moving to $q_Y$.
  For state $q_Y$, if the counter is zero, we can read $\Endl$ while moving to
  state $q_\Endl$. On the other hand, if the counter is non-zero, we can read
  $!\dl$, keeping the counter value and moving to $q_Z$; or read $!\cl$,
  decrementing the counter value and staying in $q_Y$. Finally, for state $q_Z$
  we can only read $\Endl$ and move to state $q_\Endl$. Note that whatever we
  choose to write on the equation for $\CALLT Z \zero$ is irrelevant, as this
  configuration is unreachable. Putting all these together, we arrive at the
  automaton in \cref{fig:autotcounter}.

\paragraph{Pushdown types and automata}

Just as the notion of 1-counter automata allows us to define a new class of types that extends the regular types, we can use (deterministic) pushdown automata to obtain the next class in our hierarchy of types. The main difference between pushdown automata and 1-counter automata is the ability of using a stack of symbols over a finite stack alphabet instead of a counter (which can be thought of as a stack over a singleton alphabet). 

We use $\Delta$ to denote a finite \emph{stack alphabet}. The contents of a
stack are denoted by a word $\omega\in\Delta^\ast$, with $\varepsilon$
representing an empty stack. We follow the convention that the first (leftmost)
symbol in $\omega$ corresponds to the top symbol of the stack. For ease of
notation, let $\mathrm{Op} = \{+\sigma:\sigma\in\Delta\} \cup \{=,-\}$ denote
the different stack operations (push a symbol $\sigma$ onto the stack, do
nothing, or pop the stack). In a (deterministic) pushdown automaton, we have a
partial-valued transition function
$\delta:Q\times(\{\varepsilon\}\cup\Delta)\times(\{\varepsilon\}\cup\Sigma)\rightarrow\mathrm{Op}\times
Q$. The transition function takes as input the current state, the current top
symbol of the stack (or an indication that the stack is empty), and the next
character of the word to be read (or an indication of an $\varepsilon$-move).
The output of the transition function is composed of a stack operation and the
next state.\footnote{Similarly to 1-counter automata, we can syntatically
  exclude the possibility that $\delta$ outputs a pop operation when the stack
  is empty by building automata from well formed pushdown types.} We are solely interested in \emph{deterministic, total} transitions, which mean that at each combination $(q,\sigma)\in Q\times (\{\varepsilon\}\cup\Delta)$ we can either only perform an $\varepsilon$-move ($\delta(q,\sigma,a)$ is undefined for all $a\in\Sigma$), or only perform reading moves ($\delta(q,\sigma,\varepsilon)$ is undefined). A configuration is given by a pair $(q,\omega)\in Q\times \Delta^\ast$, where $q$ denotes the current state and $\omega$ the current contents of the stack. In a similar way to 1-counter automata, we can define the notion of a derivation $(q,\omega)\overset{w}{\rightarrow}(q',\omega')$ as a sequence of moves going from $(q,\omega)$ to $(q',\omega')$ while reading the word $w$ over the input symbols. Again, we observe that several configurations may be derived from the same input word $w$, and that they belong in a unique path of $\varepsilon$-moves.

Finally, the semantics of a pushdown automata is given by a set $A\subseteq Q$ of accepting states, and an initial configuration $(q_0,\omega_0)$. A word $w$ is accepted if there is an accepting state $q_f$ and a stack word $\omega_f$ for which $(q_0,\omega_0)\overset{w}{\rightarrow}(q_f,\omega_f)$. A \emph{deterministic context-free language} (DCFL) is a language accepted by a deterministic pushdown automaton.

Again, we should remark the ways in which our formulation differs from the
standard \cite{hopcroftullman:1979}: we allow the transition function to
be defined on an empty stack, but we forbid pushing multiple stack symbols in a single transition. However, one can easily convert between formulations by adding extra symbols and states.

Pushdown systems act in a similar manner, but now the behaviour of a variable is specified by $|\Delta|+1$ equations, where $\Delta$ is the stack alphabet; one equation for each possible symbol at the top of the stack, and one equation for the case that the stack is empty. Accordingly, we use a (deterministic) pushdown automaton to simulate the stack contents by means of push and pop operations. The transitions from a state $q_X$ and a given stack indicator in $\{\varepsilon\}\cup\Delta$ are once more given by the corresponding equation with $\XT$ as the type constructor on the left-hand side. \cref{fig:autotstack} shows a pushdown automaton
accepting $\lcal(\tstack)$.
%
% !TeX root = main.tex

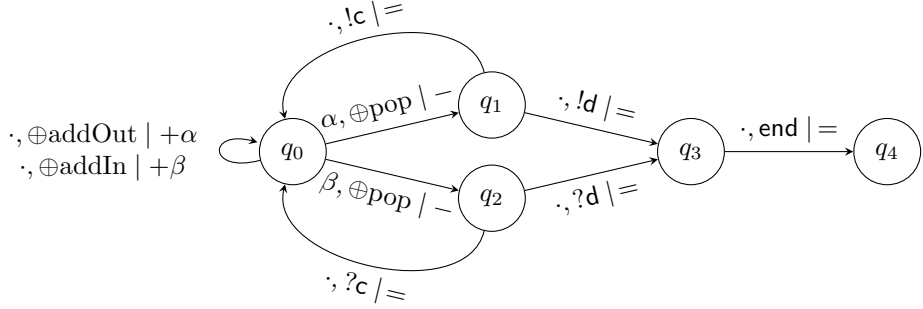
\begin{figure}[t!]
\centering
  \begin{align*}
    \lhst X \Empty \Eq&\; \typec{\&\{
                        \pushal\colon \CALLT X \sigma,
                        \pushbl\colon \CALLT X \tau\}}
    \\
    \lhst X {\sigma S} \Eq&\; \typec{\&\{
                            \pushal\colon \CALLT X {\sigma\sigma S},
                            \pushbl\colon \CALLT X {\tau\sigma S},
                            %\topl\colon \OUT \End {\CALLT X {\sigma S}},
                            \popl\colon \OUT \End {\CALLT X {S}}\}}
    \\
    \lhst X {\tau S} \Eq&\; \typec{\&\{
                           \pushal\colon \CALLT X {\sigma\tau S},
                           \pushbl\colon \CALLT X {\tau\tau S},
                           %\topl\colon \IN \End {\CALLT X {\tau S}},
                           \popl\colon \IN \End {\CALLT X {S}}\}}
  \end{align*}
\begin{tikzpicture}

\node[state] (q0) {$q_0$};
\node[state, above right of = q0, xshift = 1.5cm, yshift = -0.5cm] (q1) {$q_1$};
\node[state, below right of = q0, xshift = 1.5cm, yshift = +0.5cm] (q2) {$q_2$};
\node[state, below right of = q1, xshift = 1.5cm, yshift = +0.5cm] (q3) {$q_3$};
\node[state, right of = q3, xshift = 1cm] (q4) {$q_4$};

\draw (q0) edge[loop left, left] node {$\begin{array}{c}
\cdot,\oplus\pushal \mid +\alpha\\
\cdot,\oplus\pushbl \mid +\beta%\\
%\alpha,\oplus\popl \mid -\\
%\beta,\oplus\popl \mid -
\end{array}$} (q0);
\draw (q0) edge[above] node[sloped] {$\alpha,\oplus\popl \mid -$} (q1);
\draw (q0) edge[below] node[sloped] {$\beta,\oplus\popl \mid -$} (q2);
\draw (q1) edge[above] node[sloped] {$\cdot,!\dl \mid\,=$} (q3);
\draw (q2) edge[below] node[sloped] {$\cdot,?\dl \mid\,=$} (q3);
\draw (q3) edge[above] node {$\cdot,\Endl \mid\,=$} (q4);
\draw (q1) edge[above, bend right=90] node[sloped] {$\cdot,!\cl \mid\,=$} (q0);
\draw (q2) edge[bend left=90, below] node[sloped] {$\cdot,?\cl \mid\,=
  $} (q0);

% \node[state] (q0) {$q_0$};
% \node[state, accepting, right of = q0] (q1) {$q_1$};
% \node[state, above right of = q1] (q2) {$q_2$};
% \node[state, accepting, right of = q2] (q3) {$q_3$};
% \node[state, below right of = q1] (q4) {$q_4$};
% \node[state, accepting, right of = q4] (q5) {$q_5$};
% \node[state, below right of = q3] (q6) {$q_6$};
% \node[state, accepting, right of = q6] (q7) {$q_7$};

% \draw (q0) edge[bend left, above] node {$\cdot,\Oplus/=$} (q1);
% \draw (q1) edge[bend left, below] node {$\begin{array}{c}
% \cdot,\mathrm{pushA}/+a\\
% \cdot,\mathrm{pushB}/+b\\
% tS,\mathrm{pop}/-\\
% fS,\mathrm{pop}/-\end{array}$} (q0);
% \draw (q1) edge[above] node {$aS,\mathrm{top}/=$} (q2);
% \draw (q1) edge[below] node {$bS,\mathrm{top}/=$} (q4);
% \draw (q2) edge[above] node {$\cdot,!/=$} (q3);
% \draw (q4) edge[below] node {$\cdot,?/=$} (q5);
% \draw (q3) edge[above] node {$\cdot,d/=$} (q6);
% \draw (q5) edge[below] node {$\cdot,d/=$} (q6);
% \draw (q6) edge[above] node {$\cdot,\Endl/=$} (q7);
% \draw (q3) edge[bend right=90, above] node {$\cdot,c/=$} (q0);
% \draw (q5) edge[bend left=90, below] node {$\cdot,c/=$} (q0);

\end{tikzpicture}
\caption{A pushdown automaton for type $\tstack = \CALLT X\Empty$. The initial configuration is $(q_0,\varepsilon)$. Here a transition $\delta(q,t,a)=(o,q')$ is denoted by an arc from $q$ to $q'$ with label $t,a \mid o$, where $t\in\{\varepsilon\}\cup\Delta$, $a\in\{\varepsilon\}\cup\Sigma$, and $o\in\mathrm{Op}$. If every possible choice of $t$ leads to the same transition, then we use the symbol $\cdot$ to refer to all possible transitions. All depicted states are accepting.
\label{fig:autotstack}}
\end{figure}

%%% Local Variables:
%%% mode: latex
%%% TeX-master: "main"
%%% End:

%
%For illustration purposes, 

\paragraph{2-counter types and automata}

For the final step in our hierarchy we could think of extending the number of counters, or the number of stacks, of the representation models presented above. It should be clear by now that we would thus establish a correspondence from a type $\TT$ having a representation in terms of, say, $k$ counters, to a (deterministic) automaton with $k$ auxiliary counters accepting the language $\lcal(\TT)$. However, we know that this hierarchy collapses after $k=2$ in the Turing machine model \cite[Chapter 7]{hopcroftullman:1979}. That is, any language that is decidable (in the usual sense of the word) is accepted by a 2-counter automaton (and hence, also by a 2-stack pushdown automaton).

The translation to 2-counter automata is as for the 1-counter case, but now the
behaviour is specified by one of four different cases, depending on which of the
two counters is zero or non-zero. Accordingly, we use a (deterministic)
2-counter automaton with the appropriate transition function.

\section{From automata to types}
\label{sub:automatatosystems}

The constructions in \cref{sec:systemstoautomata} explain how, given a system of equations at some level in the hierarchy, we can construct a corresponding automaton. If $\istyped {\CALL X \sigma}$, then the language of the type given by $\CALLT X \sigma$ is the language accepted by the automaton with initial configuration $(q_X,\sigma)$ (and similarly for recursive, 1-counter, and 2-counter types).
%Conversely, we also want to be able to, given an automaton which is promised to accept the language of traces of a type, construct the corresponding system of equations that specifies that type. That is the purpose of the current section.
Conversely, given an automaton which is promised to accept the language of traces of a type, we can construct the corresponding system of equations that specifies that type. This allow us to obtain a complete correspondence between classes of types and different models of computation based on automata theory.

Let us begin with the following observation. From \cref{prop:typeclosedprefix}, we know that the language of a type is prefix-closed. Furthermore, the construction in \cref{sec:systemstoautomata} gives rise to automata with the following interesting property: they have exactly one non-accepting state ($q_{\mathrm{error}}$), from which one cannot escape (all transitions from $q_{\mathrm{error}}$ lead to $q_{\mathrm{error}}$). It should be obvious that automata with such a property accept prefix-closed languages.

\begin{definition}\label{def:obviouslyprefixclosed}
An automaton is said to be obviously prefix-closed if it has exactly one non-accepting state, and this state is a sink.
\end{definition}

If our given automaton is obviously prefix-closed, and accepts the language of a type, it is straightforward to retrieve from its description the equivalent system of equations (as we shall see in this section, after \cref{thm:obviouslyprefixclosed}). However, what if the given automaton is not obviously prefix-closed, but it still promised to accept a prefix-closed language? We answer this question by showing how to convert such an automaton into an equivalent automaton which is obviously prefix-closed. For the case of finite-state automata the proof is straightforward (see for example Kao \etal~\cite{kaoetal:2009}); for the remaining three classes of automata, this is (to the best of our knowledge) a novel contribution of our paper.

As a first stage in our construction, we convert a given automaton into an equivalent automaton in the following form.

\begin{definition}\label{def:normalformautomaton}
An automaton (with initial configuration $c_0$) is said to be in normal form if it satisfies the following two properties.
\begin{itemize}
  \item guaranteed to read: for any input word $w$, there exists some configuration $c'$ for which $c_0\overset{w}\rightarrow c'$;
  \item immediate acceptance: for any input word $w$, let $c'$ be the first configuration for which $c_0\overset{w}\rightarrow c'$. Then $w$ is accepted by the automaton iff $c'$ is an accepting configuration.
\end{itemize}
\end{definition}

% Notice that the first property (guarantee to read) is implied by loop-freeness
% (\cref{def:loopfreeness}\todo{pointer to appendix}); intuitively, it means 
The first property (guaranteed to read) intuitively means that
the automaton cannot get stuck in an infinite sequence of $\varepsilon$-moves. Immediate acceptance means that we can decide whether an input word $w$ is accepted by the automaton immediately after reading its last symbol.

Of course, many different ``normal forms'' of automata have been adopted in the
literature. One which is particularly close to ours appears in
Valiant's PhD thesis~\cite{valiant:1973:phdthesis}; in his normal form (for 1-counter and pushdown
automata), the decision of acceptance is postponed until the last (as opposed to
the first) configuration $c'$ for which $c_0\overset{w}\rightarrow c'$
(implying, in other words, that all accepting states must correspond to reading
modes). A consequence of the following result is that these are all equivalent
automata formulations.

\begin{theorem}[Normal form automata]
\label{thm:normalformautomaton}~
\begin{itemize}
  \item Any finite-state automaton can be converted into an equivalent normal form automaton.
  \item Any 1-counter automaton can be converted into an equivalent normal form automaton.
  \item Any pushdown automaton can be converted into an equivalent normal form automaton.
  \item Any decidable language is accepted by a 2-counter normal form automaton.
\end{itemize}
\end{theorem}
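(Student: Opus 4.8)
The plan is to treat the four cases in increasing order of difficulty, keeping in focus the two obligations of \cref{def:normalformautomaton}: \emph{guaranteed to read} (no infinite run of $\varepsilon$-moves) and \emph{immediate acceptance} (the membership of $w$ must already be visible at the first configuration reached after its last letter is consumed). The finite-state case is immediate: a deterministic finite-state automaton has a total transition function $\delta\colon Q\times\Sigma\to Q$ and no $\varepsilon$-moves, so for every $w$ there is a unique $c'$ with $c_0\overset{w}{\rightarrow}c'$, it is both the first and the only such configuration, and $w$ is accepted iff $c'$ is accepting. Hence such automata are already in normal form.

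For 1-counter and pushdown automata the work is to tame $\varepsilon$-chains, and I would begin with a finite analysis of pure-$\varepsilon$ computations. Because an $\varepsilon$-transition depends only on the \emph{surface} of a configuration --- the pair (state, counter-sign) for 1-counter machines, and (state, top-of-stack symbol) together with the empty-stack case for pushdown machines --- a maximal $\varepsilon$-run is governed by finitely many data. By a saturation/fixed-point computation over these surfaces (in the style of reachability analysis for pushdown systems, and close to the normal form of Valiant \cite{valiant:1973:phdthesis}) I would compute, for each surface, a \emph{summary} recording whether its $\varepsilon$-run eventually reaches a reading-mode configuration (in which state, with what net counter/stack effect), whether it first pops the current top symbol (exiting in which state), or whether it diverges; and, in each case, whether an originally-accepting configuration is visited. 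Two uses follow. For \emph{guaranteed to read} I would add a reading sink $q_\star\notin A$ with $\delta(q_\star,\cdot,a)=(q_\star,{=})$ for every $a$, and at each divergent surface replace the $\varepsilon$-transition by reading moves into $q_\star$, declaring the surface accepting (splitting the state by surface where necessary) exactly when the suppressed divergent run met an accepting configuration, so the word that \emph{first} reaches it keeps its status; now every $\varepsilon$-run is finite. For the easy half of \emph{immediate acceptance} I would first realise Valiant's form, where acceptance is read off the \emph{last} configuration of a chain: thread through the states a one-bit ``an accepting configuration has been seen since the last read'' flag, reset it at each read and OR in the current acceptance at each $\varepsilon$-step, and make the settling configuration accepting iff the flag is set. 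This is a mere state-doubling and is insensitive to counter/stack values, since the flag rides along the actual moves.

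The delicate half --- and the step I expect to be the main obstacle --- is relocating the decision from the \emph{last} configuration of a chain to the \emph{first}, as our normal form demands. The first configuration after a read precedes the (possibly long, counter/stack-dependent) settling run, so its acceptance cannot be recovered afterwards: it must be \emph{pre-computed} before the read. Concretely, each waiting (reading-mode) configuration should carry in its finite control the vector $([wa\in L])_{a\in\Sigma}$ of acceptance values of all one-letter extensions, so that the transition reading $a$ lands directly in an accepting-or-not state according to the $a$-component. Maintaining this finite lookahead vector without disturbing the stored counter/stack is exactly where the summaries are needed --- the outcome of ``read $a$ then settle'' is obtained by composing the read move with the precomputed summary --- and the bookkeeping for pushdown, where a settling run may pop below the current top so that the summary must be composed with what lies underneath, is the technically demanding point.

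For 2-counter automata, normalising a \emph{given} machine is hopeless, since detecting divergent $\varepsilon$-runs (the loop-removal step above) is undecidable for Turing-complete devices; this is precisely why the statement is phrased for decidable \emph{languages}. Given a decider for $L$, I would build a 2-counter automaton that stores the input read so far in its counters (using the Minsky encoding of a string as a number, so that the two counters simulate the decider's Turing machine) and keeps in its finite control the lookahead vector $([wa\in L])_{a\in\Sigma}$ for the current prefix $w$, with the constants $[\varepsilon\in L]$ and the initial vector hard-wired into $c_0$. Reading $a$ then lands immediately in a state accepting iff the $a$-component is set, giving immediate acceptance; the ensuing finite $\varepsilon$-phase appends $a$ to the stored string and runs the decider $|\Sigma|$ times to recompute the vector for the next read, which halts because $L$ is decidable, giving guaranteed to read. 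This reuse of the lookahead idea is what reconciles immediate acceptance with the fact that deciding membership needs unbounded computation.
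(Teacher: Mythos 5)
Your overall architecture --- summarise maximal $\varepsilon$-runs, reroute divergent ones into a reading sink, and pre-compute a lookahead vector $([wa\in L])_{a\in\Sigma}$ so that acceptance is decided at the \emph{first} configuration after a read --- matches the paper's intent, and your treatment of the finite-state case and of decidable languages (store the prefix read so far, rerun the decider on every one-letter extension before each read, then convert to a two-counter machine) is essentially the paper's own construction. The gap is in the 1-counter case, which is precisely the case the paper identifies as its extensive, novel contribution. Your premise that ``a maximal $\varepsilon$-run is governed by finitely many data'' because each individual transition only sees the surface $(q,\mathrm{sign})$ is false: the run from $(q,n)$ with $n>0$ may decrement the counter until it hits zero, at which point it switches to the zero-branches, so both its termination status and the reading configuration in which it settles genuinely depend on the unbounded value $n$, not just on the sign. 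Consequently ``divergent surface'' is not well defined (the run from $(q,5)$ may terminate while the run from $(q,500)$ diverges, or vice versa), and the per-surface summary you intend to compose with the read move is not a finite object as stated.

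The missing lemma is the paper's Claim: for each reading state $q$ and input symbol $a$, the outcome function $F(q,\cdot,a)$ (acceptance bit together with the settling configuration or divergence) satisfies, for all $n\geq K$, either $F(q,n,a)=F(q,n',a)$ whenever $n\equiv n'\pmod{K'}$, or $F(q,n,a)=F(q,K,a)+(n-K)$. This is proved by examining the first repeated $\varepsilon$-state along the positive-counter branch of the run: a non-negative net counter change forces divergence for all large $n$, while a negative net change $-K'$ yields a descent whose endpoint depends only on $n\bmod K'$ together with the finitely many small-counter cases (equivalently, in your language, the ``pop one unit by $\varepsilon$-moves'' relation is a partial function on the finite set $Q$, and iterating it is eventually periodic). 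Only with this regularity in hand can your summary, and hence the lookahead vector, be folded into the finite control --- the paper does so by making $K'$ copies of every state and prefixing each read with a bounded test of whether the counter is $0,\dots,K-1$ or at least $K$. Without that periodicity/linearity argument the construction cannot maintain the lookahead vector across reads, so the 1-counter case does not go through as written. (Your saturation sketch for pushdown automata is plausible, but note the paper simply defers that case to Hopcroft and Ullman.)
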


\begin{proof}
In \cref{sub:proofs-automatatotypes}.
\end{proof}

With the above characterisation, we are now able to prove that prefix-closed languages can be assumed to be accepted by obviously prefix-closed automata.

\begin{theorem}\label{thm:obviouslyprefixclosed}~
\begin{itemize}
\item Every prefix-closed regular language is accepted by an obviously prefix-closed finite-state automaton.
\item Every prefix-closed language accepted by a one-counter automaton is accepted by an obviously prefix-closed one-counter automaton.
\item Every prefix-closed DCFL is accepted by an obviously prefix-closed pushdown automaton.
\item Every prefix-closed decidable language is accepted by an obviously prefix-closed two-counter automaton.
\end{itemize}
\end{theorem}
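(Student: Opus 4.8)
The plan is to reduce all four cases to a single uniform construction, leveraging \cref{thm:normalformautomaton}. First I would convert the given automaton into an equivalent normal form automaton (\cref{def:normalformautomaton}), which is legitimate in each of the four classes by \cref{thm:normalformautomaton}. The payoff of normal form is \emph{immediate acceptance}: a word $w$ is accepted precisely when the state reached by the reading move that consumes the last symbol of $w$ (or the initial state $q_0$, if $w=\varepsilon$) is accepting. Thus acceptance is determined entirely by the state of the first configuration produced after consuming each symbol, and such states are always targets of reading transitions (or $q_0$). I would dispose of the degenerate case first: if $L$ is empty---equivalently, since $L$ is prefix closed by \cref{prop:typeclosedprefix}, if $\varepsilon\notin L$, i.e.\ $q_0$ is non-accepting---then the single non-accepting sink automaton already works.

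The construction itself is short. I would add a fresh state $q_{\errorl}$, declare it non-accepting, and equip it with total reading self-loops $q_{\errorl}\overset{a}{\rightarrow}q_{\errorl}$ for every $a\in\Sigma$, leaving the counter/stack untouched (operation $=$) and adding no $\varepsilon$-moves, so $q_{\errorl}$ is a sink. Next, for every \emph{reading} transition whose target state is non-accepting, I would redirect it to $q_{\errorl}$ (again with operation $=$). Finally, I would relabel every original state as accepting. The resulting machine has exactly one non-accepting state, $q_{\errorl}$, which is a sink, so it is obviously prefix closed in the sense of \cref{def:obviouslyprefixclosed}; it is still deterministic, total and guaranteed to read, since only reading targets were altered and the sink reads every symbol. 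Crucially, the construction is identical for finite-state, $1$-counter, pushdown and $2$-counter automata, because it only retargets reading transitions and appends a sink, leaving the counter/stack mechanics intact.

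For correctness I would establish both inclusions from prefix closure. If $w\in L$ then every prefix of $w$ lies in $L$, so by immediate acceptance every reading move along the run of $w$ lands in an accepting state; none of these transitions is redirected, the run coincides with the original, and $w$ is accepted. If $w\notin L$, let $u$ be its shortest prefix outside $L$; since $\varepsilon\in L$ we have $u=u'a$ with $u'\in L$. The run on $u'$ is unaffected (all its prefixes are in $L$), and the reading move consuming $a$ lands, in the original automaton, in a non-accepting state, hence is redirected to $q_{\errorl}$; from there the sink keeps rejecting, so $w$ is rejected.

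The step I expect to be the real obstacle---and the reason prefix closure is doing all the work---is the interaction with $\varepsilon$-moves. A state that is non-accepting when reached as a first configuration may nonetheless lie on the $\varepsilon$-path of a legitimate accepting computation, so one cannot simply delete non-accepting states. The resolution is that I redirect only reading transitions, never $\varepsilon$-moves, and that redirection is faithful not merely for the immediate accept/reject decision but for the entire remaining computation: once a reading move reaches a non-accepting state, the current prefix is outside $L$, and by \cref{prop:typeclosedprefix} every extension of it is outside $L$ as well, so collapsing the whole subtree into the rejecting sink changes no membership. The genuinely hard, class-specific work---eliminating infinite $\varepsilon$-loops to guarantee reading, and forcing acceptance to be read off at the first rather than a later configuration (which for the $2$-counter case rests on decidability of $L$)---is exactly what is packaged into \cref{thm:normalformautomaton}, on which this proof depends.
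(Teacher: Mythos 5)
Your proposal is correct and follows essentially the same route as the paper: normalise the automaton via \cref{thm:normalformautomaton}, add a non-accepting reading sink, redirect every reading transition whose target state was non-accepting into that sink, make all original states accepting, and then use prefix closure in both directions (all prefixes accepting for $w\in L$; the decomposition at the first/last good prefix for $w\notin L$). The only cosmetic differences are that you treat $L=\emptyset$ explicitly where the paper assumes it away, and you spell out why $\varepsilon$-moves must be left untouched, which the paper leaves implicit.
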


\begin{proof} The proof is identical in all four cases. Let $L$ be a language fitting into one of the above four cases, and without loss assume $L\neq\emptyset$. Let $A$ be an automaton accepting $L$. Due to \cref{thm:normalformautomaton}, $A$ can be assumed to be in normal form. We now construct an automaton $A'$ by modifying $A$ as follows.

\begin{itemize}
\item $A'$ has a fresh, non-accepting state $q_{\mathrm{error}}$; every configuration associated with $q_{\mathrm{error}}$ is a reading configuration for which reading $a$ moves again to $q_{\mathrm{error}}$, for every input symbol $a$;
\item Let $c\overset{a}{\rightarrow}c'$ be a reading move in $A$, and $q'$ the state corresponding to $c'$. If $q'$ was not an accepting state for $A$, then replace this transition by a reading move $c\overset{a}{\rightarrow}c'_{\mathrm{error}}$, where $c'_{\mathrm{error}}$ is like $c'$ except the corresponding state is $q_{\mathrm{error}}$ instead of $q'$.
\item Make every state in $A$ accepting in $A'$ (so that $q_{\mathrm{error}}$ becomes the unique non-accepting state).
\end{itemize}

By construction, $A'$ is obviously prefix-closed. It remains to show that it accepts the same language $L$. Let $w$ be an input word in $L$. Since $L$ is prefix-closed, each of the prefixes of $w$ is in $L$. Since $A$ is in normal form, each of the reading moves in the computation of $A$ on $w$ lead to an accepting state. Therefore, the computation of $A'$ on $w$ simulates the same transitions as those of $A$. In particular, it never transitions to state $q_{\mathrm{error}}$. Thus, $A'$ accepts $w$.

Now suppose that $w$ is an input word not in $L$. Decompose $w$ as $w'aw''$, where $w'$ is the largest prefix of $w$ such that $w'\in L$. This largest prefix exists since $L$ is prefix-closed and non-empty (in particular, the empty word must belong to $L$). By the previous argument, the computation of $A'$ in $w'$ simulates the same transitions as those of $A$. Let $c\overset{a}{\rightarrow}c'$ be the reading move that reads $a$ in the computation of $A$. Since $A$ is in normal form and $w'a\not\in L$, the state $q'$ corresponding to $c'$ is a non-accepting state of $A$. Therefore, the computation of $A'$ for $w$ transitions at this point to the state $q_{\mathrm{error}}$, and remains there for the rest of the computation. Thus, $A'$ rejects $w$. This concludes our proof.
\end{proof}

The final ingredient before proving \cref{thm:equivalencetypesautomata} is an explanation on how to construct a system of equations, given an obviously prefix-closed automaton accepting $\lcal(\TT)$, for some type $\TT$. Here we sketch only the construction for pushdown automata, as the ideas are essentially the same for the other models. For each accepting state $q$, we have a corresponding variable $\typec{X_q}$. For each mode $(q,\varepsilon)$ (resp.\ $(q,\sigma)$), we define the right-hand side corresponding to $\CALLT {X_q}{\varepsilon}$ (resp.\ $\CALLT{X_q}{\sigma S}$) according to the following case analysis (we sketch the case $(q,\varepsilon)$, as the analysis for $(q,\sigma)$ is identical):
\begin{itemize}
%\item Suppose $(q,\varepsilon)$ is a mode unreachable from the starting configuration. That is, there exists no word $w$ for which $(q_0,k_0)\overset{w}{\rightarrow}(q,0)$. Then the righthand side of equation $X\langle 0\rangle$ is irrelevant, as this variable will also not be reachable. We define the corresponding equation to be $X\langle 0\rangle = \End$ by default. For the rest of the case analysis we can thus assume that there exists some word $w$ for which $(q_0,k_0)\overset{w}{\rightarrow}(q,0)$. Since $q$ is an accepting state this also means that $w\in \lcal(T)$. 
\item Suppose $(q,\varepsilon)$ is an $\varepsilon$-mode, with corresponding transition to, say, $(+\sigma,q')$. Then our system contains the equation $\iseqt{\CALL{X_q}{\varepsilon}}{\CALL{X_{q'}}{\sigma}}$.
\item Suppose $(q,\varepsilon)$ is a reading mode, and that reading $\Endl$ has a transition to an accepting state, say, $(=,q')$. Then, for any word $w$ such that there is a sequence of moves $(q_0,\omega_0)\overset{w}\rightarrow(q,\varepsilon)$, $w\cdot\Endl$ is a word in $\lcal(\TT)$. By the way $\lcal(\TT)$ is defined, it must be the only such word having $w$ as a proper prefix. Therefore, if such a word $w$ exists, reading any other symbol from configuration $(q,\varepsilon)$ must cause the automaton to transition to the non-accepting state. We include the equation $\iseqt{\CALL {X_q}\varepsilon}{\End}$ in our system.
\item Suppose $(q,\varepsilon)$ is a reading mode, and that reading $?\dl$ has a transition to an accepting state, say, $(=,q')$. Then, for any word $w$ such that $(q_0,\omega_0)\overset{w}\rightarrow(q,\varepsilon)$, $w\cdot?\dl$ is a word in $\lcal(\TT)$. This word and $w\cdot?\cl$ must be the only two words in $\lcal(\TT)$ that are immediate continuations of $w$. Therefore, if such a word $w$ exists, reading any other symbol must cause the automaton to transition to the non-accepting state. Suppose the state reached from $(q,\varepsilon)$ after reading $?\cl$ is, say, $(+\sigma,q'')$. We include the equation $\iseqt{\CALL{X_q}\varepsilon}{\IN{\CALL{X_{q'}}{\varepsilon}}{\CALL{X_{q''}}{\sigma}}}$ in our system.
\item A similar analysis takes care of the other cases in which reading a symbol leads to an accepting state. Notice that it is technically possible for multiple contradictory symbols to have reading moves to accepting states. For instance, it could be the case that $(q,\varepsilon)\overset{\Endl}{\rightarrow}(=,q')$ and $(q,\varepsilon)\overset{!\cl}{\rightarrow}(=,q'')$, with both $q'$ and $q''$ accepting. However, by the way $\lcal(\TT)$ is defined, this only occurs if there is no word $w$ such that $(q_0,\omega_0)\overset{w}{\rightarrow}(q,\varepsilon)$. Thus, we can put either option in the right-hand of $\CALLT X\varepsilon$, as this type constructor will also not be reachable.
\item The only case left is if $(q,\varepsilon)$ is a reading mode, but reading any symbol leads to the non-accepting state. This means that, for any $w$ with $(q_0,\omega_0)\overset{w}\rightarrow(q,\varepsilon)$, there is no other word in $\lcal(\TT)$ having $w$ as a prefix. By the way $\lcal(\TT)$ is defined, $w$ must end with the symbol $\Endl$. This again means that the right-hand side of equation $\CALLT X\varepsilon$ is irrelevant, as this variable will also not be reachable. We can define the corresponding equation to be $\iseqt{\CALL X\varepsilon}{\End}$ by default.
\end{itemize}

With the construction outlined above, we are able to prove the main result of this section. The following is a characterisation result that establishes a correspondence between classes of types and 
different models of computation based on automata theory. We remark that our result is stronger than previous similar results which only show a forward implication \cite{DBLP:conf/esop/DasDMP21}.
%
% \paragraph{Recursive types and finite-state automata}
%
Recall that a language is said to be \emph{regular} if it is the set of words accepted by some finite-state automaton. We also say that a tree is \emph{regular} if it has a finite number of distinct subtrees.

\begin{theorem}[Types, traces and automata]
  \label{thm:equivalencetypesautomata}
  ~
\begin{enumerate}
\item $\istyper T$ iff $\lcal(\TT)$ is regular iff $\tree(\TT)$ is regular.
\item $\istypeo T$ iff $\lcal(\TT)$ is accepted by a 1-counter automaton.
\item $\istyped T$ iff $\lcal(\TT)$ is a deterministic context-free language.
\item $\istypet T$ iff $\lcal(\TT)$ is decidable.
\end{enumerate}
\end{theorem}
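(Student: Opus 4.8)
The plan is to prove each biconditional by treating its two directions separately. The forward direction---that a type representable in a given class has a trace language accepted by an automaton of the corresponding class---is supplied by the translations of \cref{sec:systemstoautomata}, while the backward direction---that a type whose trace language lies in a given class is representable in the corresponding class---is supplied by the translation of the present section, built on \cref{thm:obviouslyprefixclosed}. Throughout I would use the interpretive point that, by \cref{prop:typeequivalence}, the tree $\treeof(\TT)$ and hence $\lcal(\TT)$ determine $\TT$ up to equivalence; consequently, exhibiting a system of equations of a given class whose generated type has trace language $\lcal(\TT)$ is exactly what it means to establish the corresponding formation judgement for $\TT$.

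For the forward directions I would show that the automaton built in \cref{sec:systemstoautomata} from a system of equations accepts exactly $\lcal(\TT)$. The core is an invariant relating configurations to subterms: whenever the computation reaches a configuration whose state is $q_X$ (with counter or stack contents encoding the current parameter), the words accepted from that configuration are precisely the elements of $\lcal$ of the subterm headed by $\typec X$; this follows because the outgoing transitions mirror, symbol by symbol, the right-hand side of the equation for $\typec X$, and $\treeof$ and $\lcal$ are defined coinductively in the same shape. This settles statements 1--3 and the left-to-right half of statement 4; for the latter I additionally note that the resulting $2$-counter automaton may be taken in normal form (\cref{thm:normalformautomaton}), hence is guaranteed to read every input symbol and can be simulated to termination on each word, so $\lcal(\TT)$ is decidable.

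For the backward directions I would start from $\lcal(\TT)$, which is prefix-closed by \cref{prop:typeclosedprefix} and, by hypothesis, lies in the appropriate class (regular, $1$-counter, DCFL, or decidable). \Cref{thm:obviouslyprefixclosed} then yields an \emph{obviously prefix-closed} automaton of the matching class accepting $\lcal(\TT)$; for statement 4 this step also invokes the fact, recorded in \cref{thm:normalformautomaton}, that every decidable language is accepted by a $2$-counter automaton. Feeding this automaton into the equation-extraction procedure described after \cref{thm:obviouslyprefixclosed} produces a system of equations of the right class, and it remains to check that (i) the system is contractive and passes type formation, so that it genuinely defines a type of the class, and (ii) its trace language is exactly $\lcal(\TT)$; by \cref{prop:typeequivalence}, (ii) gives that the generated type equals $\TT$, yielding $\istyper T$ (respectively the other formation judgements).

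Finally, to record the extra equivalence in statement 1 between regularity of $\lcal(\TT)$ and regularity of $\treeof(\TT)$, I would match distinct subtrees of $\treeof(\TT)$ with residual (left-quotient) languages of $\lcal(\TT)$: the subtree rooted at a node $\pi$ has trace language equal to the residual of $\lcal(\TT)$ after $\Path_t(\pi)$, and by injectivity of $\lcal$ equal subtrees correspond to equal residuals; since $\lcal(\TT)$ is prefix-closed all remaining residuals are empty, so $\treeof(\TT)$ has finitely many distinct subtrees iff $\lcal(\TT)$ has finitely many residuals, which by Myhill--Nerode is iff $\lcal(\TT)$ is regular. The main obstacle I anticipate is the correctness of the backward construction, namely verifying (i) and (ii) above: one must argue that the extracted equations are contractive and well formed---rather than containing the vacuous cycles ruled out by contractiveness---and that configurations unreachable from the initial one, whose equations are assigned arbitrarily (\eg defaulting to $\End$), cannot affect the accepted language. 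This reachability bookkeeping, together with matching the automaton's accepting computations against the inductive definitions of $\Path_t$ and $\overline{\Path}_t$, is where the delicate part lies; the forward directions are comparatively routine given the translations of \cref{sec:systemstoautomata}.
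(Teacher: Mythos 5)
Your proposal follows essentially the same route as the paper: the forward directions come from the type-to-automaton translations of \cref{sec:systemstoautomata}, and the backward directions combine \cref{prop:typeclosedprefix} with \cref{thm:obviouslyprefixclosed} and the equation-extraction procedure, using \cref{prop:typeequivalence} to identify the reconstructed type with $\TT$. The one place you genuinely diverge is the final equivalence in statement~1: the paper simply cites Pierce's observation that recursive types correspond to regular trees, whereas you give a self-contained Myhill--Nerode argument matching subtrees of $\treeof(\TT)$ with residuals of $\lcal(\TT)$. That argument is sound (modulo the harmless extra residual $\{\varepsilon\}$ arising after terminal words ending in $\Endl$, which you fold into ``empty'' but which does not affect finiteness), and it buys a proof independent of the cited source; your explicit flagging of the verification obligations for the backward direction (contractiveness of the extracted system, irrelevance of unreachable configurations) is also more careful than the paper's one-line appeal to the preceding construction.
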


\begin{proof}
  The proof is identical for each of the four cases. In the forward direction,
  consider a system of equations that specify a type $\TT$ in one of the four
  classes, and use the construction in \cref{sec:systemstoautomata} to obtain
  the corresponding automata that accepts $\lcal(\TT)$. In the reverse
  direction, suppose that the language $\lcal(\TT)$ is in one of the four models
  of computation. Using \cref{thm:obviouslyprefixclosed}, we know that an
  obviously prefix-closed automaton exists that accepts $\lcal(\TT)$. Using the
  construction preceding this theorem, %at the end of \cref{sub:automatatosystems},
  we can obtain the
  corresponding system of equations that specifies $\TT$.
  The only case left is to prove that $\istyper T$ iff $\tree(\TT)$ is regular.
  However, for recursive types this has been observed before by Pierce \cite[Chapter
  21]{DBLP:books/daglib/0005958}.
\end{proof}

%%% Local Variables:
%%% mode: latex
%%% TeX-master: "main"
%%% End:

% !TeX root = main.tex
\section{The hierarchy of type classes}
\label{sec:hierarchy}

Using the above characterisation, we can show that the hierarchy of types is strict
(\eqref{eq:mainchaininclusions}; \cref{thm:inclusions}). %, as well as the (un)decidability of several problems. 
The main idea in proving that our various formalisms for session types have different expressive power is to leverage known separation techniques from formal language theory, such as the pumping lemma. We illustrate the technique with the separation $\typesr\subsetneq\typeso$.

\begin{lemma}\label{lem:hierarchyonecounter}
If $\TT$ is a recursive type, then $\TT$ is a 1-counter type. On the other hand, $\tcounter$ is a 1-counter type but not a recursive type.
\end{lemma}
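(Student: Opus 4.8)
The plan is to route both halves of the statement through the language characterisation of \cref{thm:equivalencetypesautomata}, which identifies recursive types with regular trace languages and 1-counter types with languages accepted by 1-counter automata. Under this translation the inclusion becomes the trivial fact that every regular language is a 1-counter language, while strictness becomes a non-regularity argument for $\lcal(\tcounter)$ via the pumping lemma.

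For the first claim, suppose $\istyper T$. By \cref{thm:equivalencetypesautomata} the language $\lcal(\TT)$ is regular, hence accepted by some deterministic finite-state automaton $A$. I would then observe that $A$ is, verbatim, a 1-counter automaton that never inspects or modifies its counter: extend its transition function so that it is independent of the zero/non-zero indicator and always outputs the operation $=$. This automaton accepts the same language, so $\lcal(\TT)$ is a 1-counter language and \cref{thm:equivalencetypesautomata} yields $\istypeo T$. (Equivalently one can give the direct syntactic embedding, reading each recursive equation $\iseqt{X_i}{T_i}$ as the pair of 1-counter equations $\ispeqt{X_i}{\zero}{T_i'}$ and $\ispeqt{X_i}{\succ N}{T_i'}$, where $T_i'$ replaces every constructor occurrence $\typec{X_j}$ by $\CALLT{X_j}\zero$ and the counter is simply ignored. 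I prefer the language route, since it avoids re-verifying contractiveness and type formation for the translated system.)

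For the second claim, $\tcounter$ is a 1-counter type by construction: its defining equations in \cref{exa:onecounter} are 1-counter equations, and $\istypeo{\tcounter}$ was already established in \cref{sec:types-procs}. It remains to show $\tcounter$ is not recursive, for which, again by \cref{thm:equivalencetypesautomata}, it suffices to prove that $\lcal(\tcounter)$ is not regular. The crux is to isolate a family of traces exhibiting unbounded counting. Reading off the equations, the path $(\&\incl)^n\cdot\&\dumpl$ leads from $\CALLT X\zero$ to $\CALLT Y n$, which unfolds to a chain of exactly $n$ outputs terminating in $\End$, so its terminal trace is $(!\cl)^n\cdot\Endl$. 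One then checks the key equivalence
\[
  (\&\incl)^n\cdot\&\dumpl\cdot(!\cl)^m\cdot\Endl\in\lcal(\tcounter)
  \quad\text{iff}\quad m=n,
\]
since for $m<n$ the node reached is an output (so $\Endl$ cannot follow) and for $m>n$ the chain has already terminated (so $!\cl$ cannot follow). This is exactly the $a^nb^n$ phenomenon.

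With the witness in hand the pumping lemma finishes the argument. Taking $w=(\&\incl)^p\cdot\&\dumpl\cdot(!\cl)^p\cdot\Endl$ for the pumping length $p$, any decomposition $w=xyz$ with $|xy|\le p$ and $|y|\ge 1$ forces $y=(\&\incl)^j$ with $j\ge 1$, because the first $p$ letters of $w$ (letters of the alphabet $\Sigma$, so each $\&\incl$ counts once) are all $\&\incl$; then $xy^2z$ has strictly more $\&\incl$ letters than $!\cl$ letters and hence leaves $\lcal(\tcounter)$, contradicting regularity. Therefore $\lcal(\tcounter)$ is not regular and $\tcounter$ is not a recursive type. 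I expect the main obstacle to be not the pumping step, which is routine, but the bookkeeping behind the displayed equivalence: one must read the trace semantics of \cref{sec:treeslanguages} carefully enough to be certain that the number of $!\cl$ letters preceding $\Endl$ is rigidly tied to the number of $\&\incl$ letters, with no stray accepting continuation slipping in.
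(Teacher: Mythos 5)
Your proof is correct and follows essentially the same route as the paper's: the inclusion is a routine translation (the paper does it syntactically on the equations, while you go through \cref{thm:equivalencetypesautomata} and the observation that a DFA is a counter-ignoring 1-counter automaton, also noting the syntactic alternative), and strictness is the pumping lemma for regular languages applied to the very same witness $(\&\incl)^n\cdot\&\dumpl\cdot(!\cl)^n\cdot\Endl$. The paper phrases the contradiction as ``$z$ is the only word in $\lcal(\tcounter)$ with suffix $\&\dumpl\cdot(!\cl)^n\cdot\Endl$'' whereas you count $\&\incl$ letters against $!\cl$ letters via your displayed equivalence, but these are the same underlying $a^nb^n$ argument.
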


\begin{proof} Clearly, a system of recursive equations describing a type $\TT$ can be converted into a system of 1-counter equations whose transitions do not depend on the counter value, thus describing the same type $\TT$. On the other hand, suppose, for the sake of deriving a contradiction, that $\tcounter$ was a recursive type. By \cref{thm:equivalencetypesautomata}, we would conclude that its language $\lcal(\tcounter)$ is regular. Next, we apply the pumping lemma for regular languages~\cite[Section 3.1]{hopcroftullman:1979}: there must be a constant $n$ such that any word $z\in \lcal(\tcounter)$ with $|z|\geq n$ can be written as $z=uvw$ with $|uv|\leq n$, $|v|\geq 1$, and $uv^i w\in\lcal(\tcounter)$ for every $i\geq 0$. Take $$z=\left(\&\incl\right)^n\cdot\&\dumpl\cdot(!\cl)^n\cdot\Endl$$
It is clear that $z$ fits the condition in the pumping lemma, and that any $v$ in the desired decomposition must be a substring of $\left(\&\incl\right)^n$. However, $z$ is the only word in $\lcal(\tcounter)$ having $\&\dumpl\cdot(!\cl)^n\cdot\Endl$ as a suffix, which leads to a contradiction.
\end{proof}

To prove the separation between 1-counter types and pushdown types, we need to use a variant of the pumping lemma for 1-counter automata. The following result is due to Boasson.

\begin{lemma}[{Boasson~\cite[Theorem 3]{boasson:1973}}]\label{lem:pumpinglemmaboasson}
Let $L$ be a language accepted by a 1-counter automaton. Suppose that $f$ is a word in $L$ having a decomposition
$$f=g_1ug_2vg_3xg_4yg_5$$
with the following properties:
\begin{enumerate}
  \item $u,v,x,y$ are non-empty words;
  \item for all $n,m\geq 0$, the word $g_1u^ng_2v^mg_3x^mg_4y^ng_5$ is in $L$;
  \item for all $n\geq 0$, the set
  \begin{align*}\{m\,:\,&g_1u^ng_2vg_3xg_4y^mg_5\in L\text{ or }g_1u^mg_2vg_3xg_4y^ng_5\in L\\
  \text{ or }&g_1ug_2v^ng_3x^mg_4yg_5\in L\text{ or }g_1ug_2v^mg_3x^ng_4yg_5\in L\}\end{align*}
  is finite.
\end{enumerate}
Then, there exist $n,m,\lambda,\mu\geq 1$ such that for all $k\geq 0$, the word $$g_1u^{n+\lambda k-1}g_2vg_3x^{m+\mu k-1}g_4yg_5 \in L.$$
\end{lemma}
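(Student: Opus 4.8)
The plan is to analyze the counter profile of accepting computations of a one-counter automaton $M$ recognizing $L$, exploiting the fact that a single counter cannot independently track the two exponents that condition~2 appears to demand. First I would fix a deterministic, total one-counter automaton $M$ for $L$ (as in \cref{sec:systemstoautomata}), and for each pair $(n,m)$ let $C_{n,m}$ denote the computation of $M$ on $w_{n,m}=g_1u^ng_2v^mg_3x^mg_4y^ng_5$, which is accepting by condition~2. The central quantity is the counter height as a function of position in the input. The key structural fact about one-counter machines is that, on any maximal segment in which the counter stays strictly positive, no zero-test fires, so $M$ behaves like a finite-state device that merely adds a fixed net increment per loop; it is precisely in such \emph{high} segments that pumping in arithmetic progression becomes available, since there the counter genuinely encodes an exponent.

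Next I would extract loops by pigeonhole. Taking $n=m=N$ for large $N$ and a Ramsey/pigeonhole argument over the finitely many (state, sign-of-counter) pairs observed at block boundaries, inside the block $u^N$ some such pair must recur across two copies of $u$, yielding a \emph{rise loop} of some period $\lambda\geq1$; symmetrically the block $x^N$ yields an \emph{absorb loop} of period $\mu\geq1$. The subtle point is that these two loops must be \textbf{coupled} through the single counter: the level reached after $u^N$ is preserved through $g_2vg_3$ (where $v$ and $x$ occur only once) and then forms the budget consumed while reading the $x$-block, so the extra $x$'s that $M$ can absorb are governed by the extra $u$'s read. Iterating the rise loop $k$ times and the absorb loop $k$ times along this common excursion preserves the net counter change and realigns the states, keeping the computation accepting; this is exactly what produces the linked exponents $n+\lambda k-1$ and $m+\mu k-1$ with $v,y$ held to a single copy.

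The role of condition~3 is to rule out the degenerate alternatives in which the loops are \emph{not} coupled. If $u$ admitted a bounded (low, essentially counter-neutral) loop, then $u$ could be pumped independently of $y$, forcing the first two sets in condition~3 to be infinite; symmetrically a bounded loop on $v$ or $x$ would make the last two sets infinite. Hence condition~3 forces every relevant loop into a high, zero-test-free segment, where the counter records the exponent, so that the $u$-rise and the $x$-fall are two phases of the \emph{same} excursion and must match up quantitatively. Assembling the rise loop on $u$ with the matching absorb loop on $x$ and iterating gives $g_1u^{n+\lambda k-1}g_2vg_3x^{m+\mu k-1}g_4yg_5\in L$ for all $k\geq0$.

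I expect the main obstacle to be the bookkeeping of zero-tests: making rigorous the dichotomy ``either a loop lives in a high segment (purely additive, hence pumpable) or it is a bounded low loop (excluded by condition~3)'', and then showing that the single high excursion shared by the $u$- and $x$-blocks can be iterated simultaneously without the counter dipping to zero prematurely and without the states failing to realign. This is also the step where Boasson's cited argument does the real work, so a legitimate alternative is simply to invoke \cite{boasson:1973} rather than reproduce the combinatorial analysis in full.
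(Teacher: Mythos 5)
The paper does not prove \cref{lem:pumpinglemmaboasson} at all: it is imported verbatim from Boasson~\cite[Theorem 3]{boasson:1973} and used as a black box in \cref{lem:hierarchypushdown}. So the approach that matches the paper is the one you name in your final sentence --- simply invoke the citation --- and for the purposes of this paper that is the complete and intended ``proof''.

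As a sketch of Boasson's theorem itself, your outline captures the right intuitions (high versus low counter segments, a rise loop in the $u$-block coupled through the single counter to an absorb loop in the $x$-block, condition~3 excluding bounded independent loops), but it has a concrete gap at the point where the iterable pairs are extracted. Condition~2 only guarantees membership of words in which the $u$- and $y$-exponents agree and the $v$- and $x$-exponents agree; the conclusion, by contrast, concerns words $g_1u^{n+\lambda k-1}g_2vg_3x^{m+\mu k-1}g_4yg_5$ with a \emph{single} $v$, a \emph{single} $y$, and the $u$- and $x$-exponents growing together. You cannot obtain such a word by locating loops inside one accepting computation on $w_{N,N}=g_1u^Ng_2v^Ng_3x^Ng_4y^Ng_5$ and iterating them, because iterating loops found in the $u$- and $x$-blocks of that computation leaves the $v$- and $y$-blocks at $N$ copies each, not one; and the word $g_1u^Ng_2vg_3xg_4y^Ng_5$ that condition~2 does give you contains only one $x$, so there is no $x$-block in which to find the absorb loop. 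Bridging this --- showing that the loops can be chosen so that the $v$- and $y$-blocks contract to a single copy while the coupled $u$--$x$ excursion iterates in arithmetic progression --- is precisely the combinatorial core of Boasson's argument, and your sketch does not supply it. (A smaller point: you fix a deterministic total automaton, which suffices for the paper's application, but Boasson's statement is for general one-counter languages; determinism simplifies the bookkeeping but is not where the difficulty lies.) Given this, citing \cite{boasson:1973} is not merely a ``legitimate alternative'' but the only option consistent with the level of detail provided.
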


\begin{lemma}\label{lem:hierarchypushdown}
If\, $\TT$ is a 1-counter type, then $\TT$ is a pushdown type. On the other hand, $\tstack$ is a pushdown type but not a 1-counter type.
\end{lemma}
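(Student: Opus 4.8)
The statement splits into the easy inclusion and the strict separation. The plan for the inclusion is to observe that a counter is exactly a stack over a one-letter alphabet: given a 1-counter type, presented by equations over the natural-number parameters $\zero$ and $\succ N$ (\cref{fig:onecounter-abbr}), I would reinterpret the parameter as a string over a singleton stack alphabet $\{\sigma\}$, identifying $\zero$ with $\Empty$ and $\succ N$ with $\sigma S$. Under this identification the 1-counter formation, contractivity and equivalence rules become literally the pushdown rules of \cref{fig:pushdown-abbr}, so the very same system is a pushdown type and $\typeso \subseteq \typesp$. That $\tstack$ is a pushdown type is immediate from its defining equations (\cref{exa:pushdown}); the real content is that it is not a 1-counter type.

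For the separation I would argue by contradiction: if $\tstack$ were a 1-counter type, then by \cref{thm:equivalencetypesautomata}(2) its trace language $\lcal(\tstack)$ would be accepted by a 1-counter automaton, and I would refute this with Boasson's iteration lemma (\cref{lem:pumpinglemmaboasson}). Write $a,b$ for the two push moves (selecting $\pushal$, resp.\ $\pushbl$, pushing $\sigma$, resp.\ $\tau$) and $c,d$ for the two-symbol blocks ``$\popl$ then continue ($\cl$)'' that are legal only when the top of the stack is $\tau$, resp.\ $\sigma$; the crucial point is that the output/input polarity $!$/$?$ revealed after $\popl$ exposes the top symbol, so $c$ is a legal trace step only above a $\tau$ and $d$ only above a $\sigma$. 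I would take the witness word $f = b\,a\,b\,c\,d\,c \in \lcal(\tstack)$, which pushes a bottom marker $\tau$, then $\sigma$, then $\tau$, and pops them back down to the empty stack, together with the decomposition $g_1 = b$, $u = a$, $v = b$, $x = c$, $y = d$, $g_5 = c$ (and $g_2 = g_3 = g_4 = \Empty$).

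I would then check Boasson's hypotheses. Hypothesis (1) is clear. For hypothesis (2), the coupled word $g_1 u^n v^m x^m y^n g_5 = b\,a^n b^m c^m d^n\,c$ pushes $\sigma^n$ below $\tau^m$ above the marker and pops $\tau^m$, then $\sigma^n$, then the marker, so it lies in $\lcal(\tstack)$ for all $n,m \ge 0$: the inner $\tau$-pops $x^m$ match the inner $\tau$-pushes $v^m$ and the outer $\sigma$-pops $y^n$ match the outer $\sigma$-pushes $u^n$, a genuinely nested dependency. Finally, the conclusion of the lemma would force $g_1 u^{n+\lambda k-1} v x^{m+\mu k-1} y g_5 = b\,a^{n+\lambda k-1} b\,c^{m+\mu k-1} d\,c \in \lcal(\tstack)$ for all $k$; but here only one $\tau$ (the single $v$) sits above the $\sigma$-block, so the segment $c^{m+\mu k-1}$ tries to $\tau$-pop a $\sigma$ as soon as $m+\mu k-1 \ge 2$, which is illegal. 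Since $\lambda,\mu \ge 1$ this fails for large $k$, contradicting the lemma and establishing that $\lcal(\tstack)$ is not 1-counter.

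The main obstacle is hypothesis (3) — that pumping any single one of $u,v,x,y$ keeps the word in $\lcal(\tstack)$ for only finitely many exponents. This is exactly where prefix-closedness (\cref{prop:typeclosedprefix}) bites: in a push/pop language one can always push more symbols than one later pops and still read a legal prefix, so a naive witness in which, say, $u$ is an unmatched $\sigma$-push admits unboundedly many pumpings and (3) fails. The device that rescues the argument is the bottom marker $b$ in $g_1$ together with its matching pop $c$ in $g_5$: any single pump that leaves the wrong number of symbols above the marker leaves a stray $\sigma$ (or $\tau$) on top when the final $c$ demands a $\tau$, producing an illegal polarity and ejecting the word. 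Verifying that each of the four single-pumpings in (3) is in the language only for matched exponents — and hence for finitely many — is the technical heart of the proof, which I would discharge by the stack bookkeeping sketched above.
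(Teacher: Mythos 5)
Your proposal is correct and follows essentially the same route as the paper: the inclusion via reading the counter as a stack over a singleton alphabet, and the separation by combining \cref{thm:equivalencetypesautomata} with Boasson's iteration lemma (\cref{lem:pumpinglemmaboasson}) applied to a nested push/pop witness in $\lcal(\tstack)$ whose bottom $\tau$-marker and polarity-revealing pops pin the exponents and defeat hypothesis~(3). Your decomposition (with $g_2=g_3=g_4=\Empty$) is a slightly more minimal variant of the paper's (which keeps one guaranteed $\sigma$-push in $g_2$ and $\sigma$-pop in $g_4$ and ends with $\&\popl\cdot?\dl\cdot\Endl$), but the argument is the same and your stack bookkeeping checks out.
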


\begin{proof}
Clearly, a system of 1-counter equations describing a type $\TT$ can be converted into a system of 1-stack equations, whose stack has a unique symbol, and where the value of the counter corresponds to the size of the stack.

On the other hand, suppose, for the sake of deriving a contradiction, that $\tstack$ was a 1-counter type. By \cref{thm:equivalencetypesautomata}, we would conclude that its language $\lcal(\tstack)$ is accepted by a 1-counter automata. Now consider the following family of words parameterized by $n,n',m,m'\geq 0$:
% $$f_{n,n',m,m'}=\&\pushbl\cdot(\&\pushal)^{n+1}\cdot(\&\pushbl)^m\cdot(\&\topl\cdot?\cl\cdot\&\popl)^{m'}\cdot(\&\topl\&!\cl\cdot\&\popl)^{n'+1}\cdot\&\topl\cdot?\dl\cdot\Endl$$
$$f_{n,n',m,m'}=
\&\pushbl\cdot
(\&\pushal)^{n+1}\cdot
(\&\pushbl)^m\cdot
(\&\popl\cdot?\cl)^{m'}\cdot
(\&\popl\cdot!\cl)^{n'+1}\cdot
\&\popl\cdot?\dl\cdot
\Endl$$
Intuitively, $f_{n,n',m,m'}$ corresponds to the following sequence of interactions: pushing the symbol $\tau$; pushing $n+1$ copies of the symbol $\sigma$; pushing $m$ copies of the symbol $\tau$; popping the top symbol $\tau$ from the stack $m'$ times; popping the top symbol $\sigma$ from the stack $n'+1$ times; and popping the top symbol $\tau$ from the stack. By our construction of $\tstack$, it should be clear that $f_{n,n',m,m'}\in\lcal(\tstack)$ iff $n=n'$ and $m=m'$ (the reason for pushing each symbol $\sigma$, $\tau$ at least once is to exclude situations where the number of times a symbol is pushed would be higher than the number of times that symbol is popped).

In particular, the word $f_{1,1,1,1}$ satisfies the conditions in \cref{lem:pumpinglemmaboasson} with the decomposition
$$
g_1=\&\pushbl,\quad 
g_2=\&\pushal,\quad 
g_3=\varepsilon,\quad 
%g_4=\&\topl\cdot!\cl\cdot\&\popl,\quad 
g_4=\&\popl\cdot!\cl,\quad
%g_5= \&\topl\cdot?\dl\cdot\Endl,
g_5=\&\popl\cdot?\dl\cdot\Endl,$$
$$
u=\&\pushal,\quad 
v=\&\pushbl,\quad 
%x=\&\topl\cdot?\cl\cdot\&\popl,\quad 
x=\&\popl\cdot?\cl,\quad
%y=\&\topl\cdot!\cl\cdot\&\popl.
y=\&\popl\cdot!\cl.
$$
Applying that lemma, we would conclude that there exist $n,m,\lambda,\mu\geq 1$ such that, for all $k\geq 0$, the word $f_{n+\lambda k-1,1,m+\mu k-1,1}$ is in $\lcal(\tstack)$. However, from our previous discussion, this means that $n+\lambda k-1=1$ and $m+\mu k-1=1$, which cannot be true for all $k\geq 0$. We have thus derived our contradicion.
\end{proof}

\begin{lemma}\label{lem:hierarchytwocounter}
If\, $\TT$ is a pushdown type, then $\TT$ is a 2-counter type. On the other hand, $\titer$ is a 2-counter type but not a pushdown type.
\end{lemma}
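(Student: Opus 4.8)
The plan is to follow the two-part pattern of \cref{lem:hierarchyonecounter,lem:hierarchypushdown}: first the easy inclusion $\typesp\subseteq\typest$, then strictness witnessed by $\titer$. For the inclusion I would argue through the characterisation rather than by a direct construction: if $\TT$ is a pushdown type then, by \cref{thm:equivalencetypesautomata}, $\lcal(\TT)$ is a deterministic context-free language (DCFL); every DCFL is decidable, so the same theorem gives that $\TT$ is a 2-counter type. One could instead give the direct Minsky-style construction—encoding the stack word $\omega\in\Delta^\ast$ as a number in base $|\Delta|+1$ in the first counter and using the second counter as scratch space to realise the multiply/add and divide/subtract steps that implement push and pop via $\varepsilon$-loops—but this is notationally heavy, so I prefer the automata argument.

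The content lies in showing that $\titer$ is a 2-counter type that is not a pushdown type. That $\titer$ is a well-formed 2-counter type is immediate, since \cref{exa:turing} presents it by contractive equations on two natural-number parameters. For the negative part I would show that $\lcal(\titer)$ is not context-free; as DCFLs are context-free, \cref{thm:equivalencetypesautomata} then forbids $\titer$ from being a pushdown type. Following continuations only, the spine of $\treeof(\titer)$ spells the infinite word ${?\cl}\,{!\cl}\,{?\cl}\,({!\cl})^2\,{?\cl}\,({!\cl})^3\cdots$. Writing $a={?\cl}$ and $b={!\cl}$ and intersecting with the regular language $\{a,b\}^\ast$ (context-free languages are closed under intersection with regular languages), I obtain $L'=\lcal(\titer)\cap\{a,b\}^\ast$, which is exactly the set of prefixes of $ab^1ab^2ab^3\cdots$. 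The $j$-th occurrence of $a$ sits at position $\tfrac{j(j+1)}2$, so any prefix with exactly $k$ occurrences of $a$ has length $\ell$ with $\tfrac{k(k+1)}2\le\ell\le\tfrac{k(k+1)}2+k$.

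I would then apply the pumping lemma for context-free languages to $L'$. Take $z=ab^1ab^2\cdots ab^N$ with $N$ above the pumping length $p$, and consider any decomposition $z=uvwxy$ with $|vwx|\le p$ and $vx\neq\varepsilon$; I split on $\#_a(vx)$ after pumping up. If $vx$ contains no $a$, then $uv^2wx^2y$ keeps the number of $a$'s equal to $N$ while having length strictly greater than $|z|=\tfrac{N(N+1)}2+N$, the maximal length of a prefix of the spine with $N$ occurrences of $a$—a contradiction. If $vx$ contains at least one $a$, then in $uv^iwx^iy$ the number of $a$'s grows linearly in $i$ whereas the length grows only linearly in $i$; since a prefix with $k$ occurrences of $a$ requires length at least $\tfrac{k(k+1)}2=\Theta(k^2)$, for large $i$ the required length outstrips the actual length—again a contradiction. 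Hence $L'$, and therefore $\lcal(\titer)$, is not context-free. The main obstacle is precisely this pumping argument: choosing $z$ to end exactly at a block boundary so the ``no $a$'' case overflows immediately upon pumping up, and exploiting the quadratic-versus-linear length mismatch so the ``some $a$'' case fails for large $i$; the inclusion is essentially free once \cref{thm:equivalencetypesautomata} is available.
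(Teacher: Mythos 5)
Your proof is correct, and the high-level strategy matches the paper's: both halves go through \cref{thm:equivalencetypesautomata}, the inclusion $\typesp\subseteq\typest$ via the observation that every DCFL is decidable, and the separation via the pumping lemma for context-free languages applied to the traces of $\titer$. Where you genuinely diverge is in the execution of the pumping argument. The paper pumps directly inside $\lcal(\titer)$, using the terminated words $z_k={?\cl}\cdot{!\cl}\cdots{?\cl}\cdot({!\cl})^k\cdot{?\dl}\cdot\Endl$: it exploits that these are the \emph{only} words ending in ${?\dl}\cdot\Endl$, splits on whether $vwx$ meets that suffix, and derives the contradiction by pumping \emph{down} to $uwy$ and comparing the counts of ${?\cl}$ and ${!\cl}$. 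You instead first intersect $\lcal(\titer)$ with the regular language $\{{?\cl},{!\cl}\}^\ast$ (legitimate, since context-free languages are closed under intersection with regular sets), which projects onto the prefixes of the single infinite spine word $ab\,ab^2\,ab^3\cdots$, and then pump \emph{up}, playing the quadratic position $\tfrac{k(k+1)}{2}$ of the $k$-th $a$ against the linear growth of pumped lengths. Your preprocessing buys a cleaner combinatorial core --- membership in $L'$ is determined by length alone, so both cases reduce to a length count --- at the cost of invoking one extra closure property; the paper's version is self-contained but must track three symbol counts and reason about which traces carry the ${?\dl}\cdot\Endl$ marker. Both arguments are sound, and your treatment of the inclusion and of the well-formedness of $\titer$ matches the paper's level of detail.
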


\begin{proof}
The inclusion follows from \cref{thm:equivalencetypesautomata} and the observation that all DCFLs are decidable. On the other hand, suppose, for the sake of deriving a contradiction, that $\titer$ was a pushdown type. By \cref{thm:equivalencetypesautomata}, we would conclude that its language $\lcal(\titer)$ is a DCFL, and in particular, a context-free language. Next, we apply the pumping lemma for context-free languages (Section 6.1 in Hopcroft and Ullman~\cite{hopcroftullman:1979}): there must be a constant $n$ such that any word $z\in \lcal(\titer)$ with $|z|\geq n$ can be written as $z=uvwxy$ with $|vwx|\leq n$, $|vx|\geq 1$, and $u v^i w x^i y\in\lcal(\titer)$ for every $i\geq 0$. Consider the following sequence of words in $\lcal(\titer)$, for $k\geq 0$:
$$z_k=?\cl\cdot!\cl\cdot?\cl\cdot(!\cl)^2\cdots?\cl\cdot(!\cl)^k\cdot?\dl\cdot\Endl$$
From inspection, we can conclude that $z_k$ are the only words in $\lcal(\titer)$ that end in $?\dl\cdot\Endl$, and that in $z_k$ the character $?\cl$ appears exactly $k$ times, the character $!\cl$ appears exactly $1+\ldots+k=\frac{k(k+1)}{2}$ times, and the characters $?\dl$ and $\Endl$ appear exactly once. Now apply the pumping lemma to get a decomposition of $z_n=uvwxy$, and consider the following two cases:

\begin{itemize}
  \item Suppose $vwx$ is contained in the prefix $?\cl\cdot!\cl\cdots?\cl\cdot(!\cl)^{n-1}$ of $z_n$. In this case, $y$ contains $?\cl\cdot(!\cl)^n\cdot?\dl\cdot\Endl$ as a suffix, and thus so do $uv^iwx^iy$ for any $i$. On the other hand, $z_n$ is the only word in $\lcal(\titer)$ having $?\cl\cdot(!\cl)^n\cdot?\dl\cdot\Endl$ as a suffix, which results in a contradiction.
  \item Suppose now that $vwx$ intersects the suffix $?\cl\cdot(!\cl)^n\cdot?\dl\cdot\Endl$ of $z_k$. Since every word in $\lcal(\titer)$ contains at most one $?\dl$ and one $\Endl$, $v$ and $x$ cannot contain those characters. Since $|vwx|\leq n$, the character $?\cl$ can appear at most once in $vwx$. If $?\cl$ never appears in $vx$, then $uwy$ is a word ending in $?\dl\cdot\Endl$ with $n$ occurences of the character $?\cl$ and strictly fewer than $1+\ldots+n$ occurences of the character $!\cl$. If $?\cl$ appears once in $vx$, then $!\cl$ appears at most $n-1$ times in $vx$. In this case, $uwy$ is a word ending in $?\dl\cdot\Endl$ with $n-1$ occurences of the character $?\cl$ and strictly more than $1+\ldots+(n-1)$ occurences of the character $!c$. In either case, $uwy$ cannot be one of the words $z_k$ and thus cannot be in $\lcal(\titer)$, from which we derive our contradiction.%\qedhere
\end{itemize}
\end{proof}

At the end of the hierarchy, we can prove the separation $\typest\subsetneq\typesi$ by a cardinality argument.

\begin{lemma}\label{lem:hierarchyinfinite}
  Let $\types'$ be the set of types that can be represented by some finite
  representation system. Then $\types'$ is a strict subset of $\typesi$. In particular, $\typest \subsetneq \typesi$.
\end{lemma}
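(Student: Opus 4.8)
The plan is to prove the result by a cardinality argument, exactly as foreshadowed in \cref{sec:types-procs}: I will show that (i) $\typesi$ is uncountable while (ii) $\types'$ is countable. Since $\types'\subseteq\typesi$, these two facts immediately yield $\types'\subsetneq\typesi$; the ``in particular'' claim then follows because every $2$-counter type is described by a finite system of equations, so $\typest\subseteq\types'\subsetneq\typesi$.

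For (i), I would exhibit an injection from the set $\{0,1\}^\omega$ of infinite binary sequences into $\typesi$. Given $b=b_0b_1b_2\cdots\in\{0,1\}^\omega$ with tail $b'=b_1b_2\cdots$, define coinductively an infinite type $T_b$ by setting $T_b=\OUT\End{T_{b'}}$ when $b_0=0$ and $T_b=\IN\End{T_{b'}}$ when $b_0=1$. Each $T_b$ is a legitimate infinite type, i.e.\ $\istypei{T_b}$, since the formation rules of \cref{fig:finite-abbr} read coinductively admit it. If $b\neq b'$ differ first at position $i$, then $\treeof(T_b)$ and $\treeof(T_{b'})$ disagree at the node reached by the length-$i$ continuation path, one carrying $!$ and the other $?$; hence $\treeof(T_b)\neq\treeof(T_{b'})$, and by \cref{prop:typeequivalence} the trace languages differ, $\lcal(T_b)\neq\lcal(T_{b'})$. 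Thus $b\mapsto\lcal(T_b)$ is injective, so $|\typesi|\geq|\{0,1\}^\omega|=2^{\aleph_0}$ and $\typesi$ is uncountable.

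For (ii), the key fact is that any finite representation system can name only countably many types. A finite representation system assigns to each type a finite description over a fixed (finite or countable) alphabet, and two types with the same description are equal; since the finite descriptions form a countable set and the naming is functional, each such system represents a countable set of trace languages. Allowing all finitely specified (hence countably many) systems makes $\types'$ a countable union of countable sets, therefore countable. Equivalently, every type with a finite representation has a computable trace language, and there are only countably many computable languages over the alphabet $\Sigma$, so $\types'$ is contained in this countable set. Combining with (i) gives $\types'\subsetneq\typesi$, and in particular $\typest\subsetneq\typesi$.

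The only delicate point—and thus the main obstacle—is making the informal notion of ``finite representation system'' precise enough to license the countability in (ii); the cleanest route is to require representations to be finitely specified (computable), after which the counting is routine. The remaining ingredients, namely the coinductive construction of $T_b$ and the injectivity obtained from \cref{prop:typeequivalence}, are elementary.
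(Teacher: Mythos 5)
Your proof is correct and follows essentially the same route as the paper: both arguments encode each infinite binary sequence as an infinite type alternating $\typec!$ and $\typec?$ according to the bits, observe via the trace languages that distinct sequences give non-equivalent types (so $\typesi$ is uncountable), and conclude by noting that any finite representation system can describe only countably many types. Your treatment of the countability of $\types'$ is somewhat more explicit than the paper's one-sentence remark, but the substance is identical.
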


\begin{proof}
Notice that the set of all possible infinite types is uncountable. In particular, for every infinite word $w = b_0b_1b_2\ldots$ over the alphabet $\{0,1\}$, we can define the type
$$\typec{T_w} = \typec{\sharp_0 \Endl . \sharp_1 \Endl . \sharp_2 \Endl . \ldots}$$
where $\sharp_n$ is either $?$ if $b_n=0$ or $!$ if $b_n=1$. As the set of such infinite words is uncountable, and $\lcal(\typec{T_w})\neq \lcal(\typec{T_{w'}})$ for $w\neq w'$, so is the set of all types. Moreover, any finite representation system can contain at most a countable set of types. Hence, we get the desired result.
\end{proof}

Notice that the cardinality argument presented above also shows that there is in fact an \emph{infinite, uncountable} set of types that cannot be represented by a given finite representation system.

%%% Local Variables:
%%% mode: latex
%%% TeX-master: "main"
%%% End:

% !TeX root = main.tex
\section{Results for context-free and nested session types}
\label{sec:hierarchy2}

Here we compare the context-free session types model \cite{DBLP:journals/corr/abs-2106-06658,DBLP:conf/tacas/AlmeidaMV20} and 
the nested session types model \cite{DBLP:conf/esop/DasDMP21} with the main hierarchy of our paper; \ie we prove the inclusions in \eqref{eq:chainpushdownnested} and \eqref{eq:chaincontextfree}, which complete the proof of \cref{thm:inclusions}.

Regarding context-free session types, it is quite clear that they extend recursive types: a recursive system of equations can be converted into the context-free syntax by replacing $\End$ with $\Skip$ and $\MSG TU$ with $\semit{\MSGn T}U$. The following result shows that $\typescf\subseteq\typesp^1$.

\begin{theorem}\label{thm:cstinpst1}
Let $\istypec T$, and let $\UT$ be such that $\embeds TU$. Then, there exists a representation of $\UT$ as a pushdown type, having the following properties:
\begin{itemize}
\item The representation uses a single type variable $\XT$.
% \item $\UT$ is first-order, \ie message sending / receiving only occurs for type $\End$ ($\INn\End$ or $\OUTn\End$).
\item The only occurrence of $\End$ %(except $\INn\End$ and $\OUTn\End$)
is in the equation $\iseqt {\CALLT X \varepsilon} \End$.
\end{itemize}
\end{theorem}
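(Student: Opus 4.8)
The plan is to realise $\UT$ by using the stack of a pushdown type to record the \emph{pending sequential continuations} of the context-free type, so that a single type constructor $\XT$ suffices and the empty stack corresponds exactly to $\End$. Concretely, I intend the configuration $\CALLT X{\sigma_1\cdots\sigma_k}$ to denote the infinite type that embeds $\semit{\sigma_1}{(\semit{\cdots}{\sigma_k})}$, and $\CALLT X\varepsilon$ to denote $\End$ (the embedding of $\Skip$). Since a context-free type only ever terminates by reaching $\Skip$, and $\Skip$ embeds to $\End$, this design forces $\End$ to appear solely in the equation $\iseqt{\CALLT X\varepsilon}{\End}$, giving the second bullet of the statement for free.

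First I would normalise the context-free system. Using contractiveness, each variable unfolds in finitely many steps to a head constructor, so --- after flattening nested compositions with associativity (\eqassocl), distributing choices over sequencing (\eqdistl), deleting components $T$ for which $\isdone T$ holds via $\Skip$-neutrality (\eqneutl), and naming residual subterms by fresh variables --- I obtain an equivalent system in which every variable $\sigma$ used as a stack symbol satisfies exactly one of: $\iseqt{\sigma}{\MSGn{D_\sigma}}$, with payload $D_\sigma$ named by a stack word $\delta_\sigma$; or $\iseqt{\sigma}{\choice\recordf\ell{\beta_{\sigma,\ell}}L}$, with each branch $\beta_{\sigma,\ell}$ a string of such variables (a stack word $\rho_{\sigma,\ell}$). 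This is the essential Greibach-style step: every stack symbol exposes an observable action with no silent residue, so that reading a symbol off the stack always produces a tree node.

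Let $\Delta$ be the resulting finite set of variables and take a single constructor $\XT$. The pushdown system consists of $\iseqt{\CALLT X\varepsilon}{\End}$ together with, for each $\sigma\in\Delta$ and stack parameter $S$, either $\iseqt{\CALLT X{\sigma S}}{\MSG{\CALLT X{\delta_\sigma}}{\CALLT X S}}$ when $\sigma$ has a message head, or $\iseqt{\CALLT X{\sigma S}}{\choice\recordf\ell{\CALLT X{\rho_{\sigma,\ell}S}}L}$ when $\sigma$ has a choice head. The data branch resets the stack to $\delta_\sigma$, since the message payload is a self-contained type independent of the continuation, whereas the continuation keeps the remaining stack $S$; the choice pushes the selected branch $\rho_{\sigma,\ell}$ on top of $S$. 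Each right-hand side exhibits a top constructor immediately, so the system is contractive and well formed; by inspection $\End$ occurs only in the empty-stack equation and only $\XT$ is used.

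Finally I would establish correctness by coinduction on trees. Let $W_\omega$ be the embedding of the context-free type encoded by the stack word $\omega$. I would show $\treeof(\CALLT X\omega)=\treeof(W_\omega)$ by verifying, in each of the three cases --- empty stack giving $\Endl$; a message head giving a $\sharp$-node whose $\dl$- and $\cl$-subtrees are $\treeof(\CALLT X{\delta_\sigma})$ and $\treeof(\CALLT X S)$; a choice head giving a $\star_L$-node whose $\ell$-subtree is $\treeof(\CALLT X{\rho_{\sigma,\ell}S})$ --- that the root label and all subtrees agree with those dictated by the embedding rules, the coinductive hypothesis supplying the subtrees. Instantiating $\omega$ at the stack word encoding $\TT$ and invoking \cref{prop:typeequivalence} then yields $\isequiv{\CALLT X\omega}{U}$, as required. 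I expect the normalisation of the higher-order context-free system --- taming $\Skip$, nested sequencing and distributivity of choice over sequencing while preserving equivalence and keeping the process terminating via contractiveness --- to be the main obstacle; the stack construction and its coinductive verification are then routine.
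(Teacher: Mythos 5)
Your construction is correct and rests on the same core idea as the paper's: one type constructor $\XT$, one stack symbol per context-free variable, the stack recording the pending sequential continuations, and $\iseqt{\CALLT X\varepsilon}{\End}$ as the sole source of $\End$. The difference is where the work is placed. The paper uses a much weaker normal form --- every right-hand side is a \emph{single} type construct, where $\Skip$, a bare variable, and a binary composition $\semit{X_{i_1}}{X_{i_2}}$ all count as constructs --- and then translates these directly into ``silent'' pushdown equations such as $\iseqt{\CALLT X{\sigma_i S}}{\CALLT X S}$ (pop) and $\iseqt{\CALLT X{\sigma_i S}}{\CALLT X{\sigma_{i_1}\sigma_{i_2}S}}$ (push two), relying on contractiveness of the original system to guarantee contractiveness of the resulting chains. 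You instead perform a Greibach-style pre-normalisation so that every stack symbol immediately exposes a message or choice head, which makes the pushdown system trivially contractive and the coinductive verification cleaner, but shifts the burden onto the normalisation --- the step you rightly identify as the main obstacle (one must argue that head-unfolding under \eqassocl, \eqdistl and $\Skip$-neutrality terminates by contractiveness and yields only finitely many residual subterms to name). The paper's route avoids that obstacle entirely at the price of a slightly less tidy pushdown system; both are sound, and your correctness argument via $\treeof$ and \cref{prop:typeequivalence} matches the paper's appeal to ``a simple coinductive proof.''
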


\begin{proof}
Let $\TT$ be a context-free session type represented by some system of equations.
Without loss of generality, assume that this system is in the following normal form: in all equations $\iseqt{X_i}{T_i}$, the
right-hand side $T_i$ is given by only one type construct. We construct a
pushdown type $\TT$ using a single variable $\XT$. For each type variable $\typec{X_i}$ in
the definition of $\TT$, we have a corresponding stack symbol $\sigma_i$. Finally,
we translate the equations defining $\TT$ into equations defining $\UT$, as
follows.
% \vv{This looks like the construction in~\cite{DBLP:conf/tacas/AlmeidaMV20}.}

\begin{itemize}
  \item For each equation $\iseqt{X_i}{\Skip}$, we have an equation $\iseqt{\CALL X {\sigma_i S}}{\CALL X S}$.
  \item For each equation $\iseqt{X_i}{\MSGn X_{i'}}$, we have an equation $\iseqt{\CALL X {\sigma_i S}}{\MSG {\CALL X {\sigma_{i'}}} {\CALL X S}}$.
  \item For each equation $\iseqt{X_i}{\choice\recordf {l_j}{X_{i_j}}{L}}$, we have an equation $\iseqt{\CALL X {\sigma_i S}}{\choice \recordf {l_j}{\CALL X {\sigma_{i_j} S}}{L}}$.
  \item For each equation $\iseqt{X_i}{X_{i_1};X_{i_2}}$, we have an equation $\iseqt{\CALL X {\sigma_i S}}{\CALL X {\sigma_{i_1} \sigma_{i_2} S}}$.
  \item For each equation $\iseqt{X_i}{X_{i'}}$, we have an equation $\iseqt{\CALL X {\sigma_i S}}{\CALL X {\sigma_{i'} S}}$.
  \item Additionally, we have the equation $\iseqt{\CALL X \varepsilon}{\End}$.
\end{itemize}

A simple coinductive proof then shows that, if the context-free session type is
given by $\TT = \typec{X_i}$, then $\UT = \CALLT X {\sigma_i}$ is a pushdown representation of the type corresponding to $\UT$.
\end{proof}

In fact, the previous result shows something stronger than $\typescf\subseteq\typesp$: it shows that $\typescf$ is a subset of the first level $\typesp^1$ of the hierarchy within pushdown session types (defined in \cref{sec:treeslanguages}). %The converse of the theorem also holds, \ie the above properties of the representation of $\UT$ give a characterisation of context-free session types. The proof can be found in \cref{sub:contextfreenested}.

Next, we argue that context-free and 1-counter types are incomparable, which implies the strict inclusions $\typesr\subsetneq\typescf\subsetneq\typesp$. In our separation of $\typesp$ from $\typeso$ (\cref{lem:hierarchypushdown}), we show that $\tstack$ is not a 1-counter type. However, we can represent it as a context-free type $\typec{X_\varepsilon}$ with
\begin{align*}
  \typec{X_\varepsilon} \Eq&\; \typec{\&\{
        \pushal\colon \semit{X_\sigma}{X_\varepsilon},
        \pushbl\colon \semit{X_\tau}{X_\varepsilon}\}}\\
  \typec{X_\sigma} \Eq&\; \typec{\&\{
        \pushal\colon \semit{X_\sigma}{X_\sigma},
        \pushbl\colon \semit{X_\tau}{X_\sigma},
        %\topl\colon \semit{\OUTn\Skip}{X_{\sigma}},
        \popl\colon \OUTn\Skip\}}\\
  \typec{X_\tau} \Eq&\; \typec{\&\{
        \pushal\colon \semit{X_\sigma}{X_\tau},
        \pushbl\colon \semit{X_\tau}{X_\tau},
        %\topl\colon \semit{\INn\Skip}{X_{\tau}},
        \popl\colon \INn\Skip\}}
\end{align*}

On the other-hand, from the work of Korenjak and Hopcroft~\cite{korenjakhopcroft:1966:simpledeterministic} we know that the language $L_3=\{\leftl^n\;\al\;\rightl^n\;\al\mid n\geq 0\}\cup\{\leftl^n\;\bl\;\rightl^n\;\bl\mid n\geq 0\}$ is deterministic context-free but cannot be accepted by a DPDA with a single state. This was used by Das \etal \cite{DBLP:conf/esop/DasDMP21} to argue that context-free session types cannot express language $L_3$. However, we can use 1-counter types to express this language, \ie define the type $\tkorhop$ as $\CALLT X \zero$ with equations
\begin{align*}
  \lhst X \zero \Eq&\; \typec{\&\{
      \leftl\colon \CALLT X {\succ \zero},
      \al\colon \CALLT Y \zero,
      \bl\colon \CALLT Z \zero\}}
  &
  \lhst X {\succ N} \Eq&\; \typec{\&\{
      \leftl\colon \CALLT X {\succ {\succ N}},
      \al\colon \CALLT Y {\succ N},
      \bl\colon \CALLT Z {\succ N}\}}
  \\
  \lhst Y \zero \Eq&\; \typec{\&\{
      \al\colon \End\}}
  &
  \lhst Y {\succ N} \Eq&\; \typec{\&\{
      \rightl\colon \CALLT Y N\}}
  \\
  \lhst Z \zero \Eq&\; \typec{\&\{
      \bl\colon \End\}}
  &
  \lhst Z {\succ N} \Eq&\; \typec{\&\{
      \rightl\colon \CALLT Z N\}}
\end{align*}
We conclude that $\tkorhop$ is a 1-counter type but not a context-free type.

Next we look at the equivalence between pushdown and nested session types. 
Recall that $\typesp^n$ corresponds to the types that have pushdown representations with at most $n$ type constructors, whereas $\typesn^n$ corresponds to the types that have nested representations with type constructors of arity at most $n$. The following result shows that $\typesp^n\subseteq\typesn^n$.

\begin{theorem}
Let $\TT$ be a pushdown session type with at most $n$ type constructors. Then, there is a nested session type representation for $\TT$, using type constructors of arity at most $n$.
\end{theorem}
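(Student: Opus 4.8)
The plan is to generalise the passage from \cref{exa:pushdown} to \cref{exa:nested}: each stack symbol becomes a nested type constructor, and the portion of the stack lying below it is threaded through that constructor's type arguments. Write the pushdown system with constructors $\typec{X_1},\dots,\typec{X_m}$, so $m\le n$, over a stack alphabet $\Delta$; its equations have the two shapes $\iseqt{\CALL{X_i}{\Empty}}{\cdots}$ and $\iseqt{\CALL{X_i}{\sigma s}}{\cdots}$, where $\typec s$ is the tail variable for the rest of the stack and every constructor application occurring on a right-hand side is of the form $\CALL{X_{i'}}{w's}$ for a finite word $w'\in\Delta^\ast$ (respectively $\CALL{X_{i'}}{w'}$ in the empty-stack equations). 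I would introduce one nested constructor $\typec{Y_{i,\sigma}}$ of arity $m$ for each pair $(i,\sigma)$, together with one constant $\typec{W_i}$ of arity $0$ for each $i$; the maximum arity is then $m\le n$, as required. The intended reading is that the $p$-th argument of $\typec{Y_{i,\sigma}}$ records the configuration to resume as if, once this symbol is eventually popped, the machine were in state $\typec{X_p}$.

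To make the translation precise I would define, for a state index $p$ and a stack word $w$, an encoding into nested expressions over formal parameters $\alpha_1,\dots,\alpha_m$ by $\mathrm{enc}_p(\Empty)=\alpha_p$ and $\mathrm{enc}_p(\sigma w)=\CALL{Y_{p,\sigma}}{\mathrm{enc}_1(w),\dots,\mathrm{enc}_m(w)}$. The equation for $\typec{Y_{i,\sigma}}$ is then obtained from the right-hand side of $\CALL{X_i}{\sigma s}$ by a single structural recursion: every type former ($\MSGn{\cdot}$, a $\choice$, or $\End$) is copied verbatim, and every occurrence of an application $\CALL{X_{i'}}{w's}$ is replaced by $\mathrm{enc}_{i'}(w')$. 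A pop (the case $w'=\Empty$) thus becomes the bare parameter $\alpha_{i'}$, a no-op becomes $\CALL{Y_{i',\sigma}}{\alpha_1,\dots,\alpha_m}$, and a genuine push produces a layered application of depth $|w'|$. The equation for $\typec{W_i}$ is the analogous rewriting of $\CALL{X_i}{\Empty}$, except that the base case of the encoding reads off the constant $\typec{W_{i'}}$ rather than a parameter. Because the rewriting is purely structural, no preliminary normalisation into single-symbol stack operations or single-construct right-hand sides is needed; hence the constructor count is not inflated and the arity stays bounded by $m$, which is the crucial point for the statement.

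For correctness I would encode a whole configuration by $\llbracket \typec{X_i},\Empty\rrbracket=\typec{W_i}$ and $\llbracket \typec{X_i},\sigma\omega\rrbracket=\CALL{Y_{i,\sigma}}{\llbracket \typec{X_1},\omega\rrbracket,\dots,\llbracket \typec{X_m},\omega\rrbracket}$, and establish $\isequiv{\CALL{X_i}{\omega}}{\llbracket X_i,\omega\rrbracket}$ for all $i$ and all stack words $\omega$; the theorem then follows by taking the initial configuration. By \cref{prop:typeequivalence} it suffices to show that the two sides generate the same tree, which I would prove coinductively by checking that the relation $\{(\CALL{X_i}{\omega},\,\llbracket X_i,\omega\rrbracket)\}$ is closed under tree generation: unfolding each side once yields the same head symbol together with children that are again related. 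The linchpin is a substitution lemma, proved by induction on $|w'|$, asserting that replacing each $\alpha_p$ by $\llbracket \typec{X_p},\omega\rrbracket$ in $\mathrm{enc}_{i'}(w')$ yields exactly $\llbracket \typec{X_{i'}},w'\omega\rrbracket$. Granting this, one step of unfolding of $\CALL{X_i}{\sigma\omega}$ substitutes $\omega$ for $\typec s$ and produces children $\CALL{X_{i'}}{w'\omega}$, while one step of unfolding of $\CALL{Y_{i,\sigma}}{\llbracket X_1,\omega\rrbracket,\dots}$ substitutes $\llbracket \typec{X_p},\omega\rrbracket$ for $\alpha_p$ and, by the lemma, produces precisely the matching children $\llbracket \typec{X_{i'}},w'\omega\rrbracket$, with the head formers agreeing by construction.

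I expect the main obstacle to be exactly this substitution lemma and the bookkeeping it forces: one must verify that the recursive layering performed by $\mathrm{enc}$ commutes with the simultaneous substitution of the whole argument tuple, and that the base case correctly switches between a formal parameter inside a $\typec{Y_{i,\sigma}}$ equation and the constant $\typec{W_{i'}}$ inside a $\typec{W_i}$ equation. A secondary point to check is well-formedness: since each unfolding of the nested type reveals a head former after the same finite number of steps as the corresponding pushdown unfolding, contractiveness and type formation transfer directly, so the constructed nested type is indeed a genuine type representing $\UT$.
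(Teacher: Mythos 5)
Your proposal is correct and is essentially the paper's own construction: one arity-$0$ constant per state for the empty-stack case, one arity-$m$ constructor per (state, stack-symbol) pair whose $p$-th argument holds the continuation for a pop into state $\typec{X_p}$, and the same recursive layering for multi-symbol pushes (your $\mathrm{enc}_p$ just makes explicit what the paper writes out case by case for $\CALLT{X^{(i)}}{S}$, $\CALLT{X^{(i)}}{\sigma S}$ and $\CALLT{X^{(i)}}{\sigma\sigma' S}$). The only difference is that you spell out the coinductive correctness argument via the configuration encoding and a substitution lemma, where the paper simply asserts that a simple coinductive proof shows the two representations yield the same type.
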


\begin{proof}
Consider a pushdown representation of $\TT$ using the type constructors $\typec{X^{(1)}},\ldots, \typec{X^{(n)}}$ and a stack alphabet $\Delta$. We consider a nested session type representation using

\begin{itemize}
  \item for each variable $\typec{X^{(i)}}$, a type constructor $\typec{X^{(i)}_\varepsilon}$ of arity $0$;
  \item for each variable $\typec{X^{(i)}}$ and each stack symbol $\sigma$, a type constructor $\typec{X^{(i)}_\sigma}$ of arity $n$.
\end{itemize}

We translate each equation in the pushdown representation into an equation for the corresponding type constructor. We use the $n$ variables $\typec{\alpha_1}, \ldots, \typec{\alpha_n}$ in our equations. The translation only needs to convert type variables into type constructors:

\begin{itemize}
  \item $\CALLT {X^{(i)}} \varepsilon$ becomes $\typec{X^{(i)}_\varepsilon}$;
  \item $\CALLT {X^{(i)}} \sigma$ becomes $\CALLT {X^{(i)}_\sigma} {X^{(1)}_\varepsilon,\ldots,X^{(n)}_\varepsilon}$;
  \item $\CALLT {X^{(i)}} {\sigma S}$ becomes $\CALLT {X^{(i)}_\sigma} {\alpha_1,\ldots,\alpha_n}$;
  \item $\CALLT {X^{(i)}} S$ becomes $\typec{\alpha_i}$;
  \item $\CALLT {X^{(i)}} {\sigma \sigma' S}$ becomes $\CALLT {X^{(i)}_\sigma} {\CALL {X^{(1)}_{\sigma'}}{\alpha_1,\ldots,\alpha_n},\ldots,\CALL {X^{(n)}_{\sigma'}}{\alpha_1,\ldots,\alpha_n}}$.
\end{itemize}

Intuitively, each type constructor $\typec{X^{(i)}_\sigma}$ corresponds to the stage where variable $\typec{X^{(i)}}$ needs to be unfolded with $\sigma$ at the top of the stack. The arguments stored during the unfolding keep track of all possible ways one can empty the current stack. The argument at position $i$ is chosen if the pushdown unfolding would move to variable $\typec{X^{(i)}}$. A simple coinductive proof shows that both representations yield the same type.
\end{proof}

We illustrate the above proof with an example. Consider the pushdown session
type $\TT = \CALLT X \varepsilon$ with

\begin{align*}
\CALLT X \varepsilon &\Eq \typec{\&\{\keyword{pushOut}: \CALL X \sigma,\keyword{pushIn}: \CALL X \tau, \keyword{dump}: \CALL Y \varepsilon\}}\\
\CALLT X {\sigma S} &\Eq \typec{\&\{\keyword{pushOut}: \CALL X {\sigma \sigma S},\keyword{pushIn}: \CALL X {\tau \sigma S}, \keyword{pop}: \CALL X S, \keyword{dump}: \CALL Y {\sigma S}\}}\\
\CALLT X {\tau S} &\Eq \typec{\&\{\keyword{pushOut}: \CALL X {\sigma \tau S},\keyword{pushIn}: \CALL X {\tau \tau S}, \keyword{pop}: \CALL X S, \keyword{dump}: \CALL Y {\tau S}\}}\\
\CALLT Y \varepsilon &\Eq \End\\
\CALLT Y {\sigma S} &\Eq \typec{!\End. \CALL Y S}\\
\CALLT Y {\tau S} &\Eq \typec{?\End. \CALL Y S}
\end{align*}

The above session type is a variant of \cref{exa:pushdown} with two type
variables. It offers a choice of pushing either symbol $\sigma$ or $\tau$ into
the stack, popping the stack, or dumping the entire stack contents. When
dumping, the value at the top of the stack ($\sigma$ or $\tau$) determines
whether an $!\End$ or $?\End$ message is triggered. Applying the conversion in
the proof of the previous theorem, we get the following representation of the
session type $\TT = \typec{X_{\varepsilon}}$, using constructors of arity at most 2, which can be seen to yield the same type.

\begin{align*}
\typec{X_\varepsilon} &\Eq \typec{\&\{\keyword{pushOut}: \CALL {X_\sigma} {X_\varepsilon,Y_\varepsilon},\keyword{pushIn}: \CALL {X_\tau} {X_\varepsilon,Y_\varepsilon}, \keyword{dump}: Y_\varepsilon\}}\\
\CALLT {X_\sigma}{\alpha_1,\alpha_2} &\Eq \typec{\&\{\keyword{pushOut}: \CALL {X_\sigma} {\CALL {X_\sigma}{\alpha_1,\alpha_2}, \CALL {Y_\sigma}{\alpha_1,\alpha_2}},}\\
&\typec{\qquad\keyword{pushIn}: \CALL {X_\tau}{\CALL {X_\sigma}{\alpha_1,\alpha_2},\CALL {Y_\sigma}{\alpha_1,\alpha_2}},}\\
&\typec{\qquad\keyword{pop}: \alpha_1, \keyword{dump}: \alpha_2\}}\\
\CALLT {X_\tau}{\alpha_1,\alpha_2} &\Eq \typec{\&\{\keyword{pushOut}: \CALL {X_\sigma}{\CALL {X_\tau}{\alpha_1,\alpha_2},\CALL {Y_\tau}{\alpha_1,\alpha_2}},}\\
&\typec{\qquad\keyword{pushIn}: \CALL {X_\tau}{\CALL {X_\tau}{\alpha_1,\alpha_2},\CALL {Y_\tau}{\alpha_1,\alpha_2}},}\\
&\typec{\qquad\keyword{pop}: \alpha_1, \keyword{dump}: \alpha_2\}}\\
\typec{Y_\varepsilon} &\Eq \End\\
\CALLT {Y_\sigma}{\alpha_1,\alpha_2} &\Eq \typec{!\End.\alpha_2}\\
\CALLT {Y_\tau}{\alpha_1,\alpha_2} &\Eq \typec{?\End.\alpha_2}
\end{align*}

The remainder of this section is devoted to the reverse implication, \ie that nested session types can be simulated with pushdown session types. As a warmup, we begin by looking at nested session types using unary constructors. The following result shows that $\typesn^1$ is contained in $\typesp^1$.

\begin{lemma}\label{lem:nst1pst1}
Let $\TT$ be a nested session type using type constructors of arity at most one. Then, there is a pushdown session type representation for $\TT$ using only one type constructor.
\end{lemma}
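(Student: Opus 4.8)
The plan is to encode the spine of unary constructors of a nested type as the contents of a stack, using a single pushdown type constructor $\XT$ whose behaviour is dictated by the symbol currently on top. Concretely, I would take one stack symbol for each type constructor of the nested system: a symbol $\sigma_D$ for every unary constructor $D$ and a symbol $\sigma_C$ for every constant (arity-$0$) constructor $C$. A closed nested type, which under the arity-$\le 1$ restriction is always a spine $D_1(D_2(\cdots D_k(C)\cdots))$, is then represented by the stack word $\lfloor D_1(\cdots D_k(C))\rfloor = \sigma_{D_1}\cdots\sigma_{D_k}\sigma_C$, so that the type $\TT$ itself becomes the pushdown type $\CALL X{\lfloor\TT\rfloor}$.

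The translation of equations would be driven by a syntactic map $\T\llbracket\cdot\rrbracket$ sending a nested right-hand side $E$ (possibly containing the single parameter $\alpha$) to a pushdown right-hand side over $\XT$ in which $\alpha$ is realised by the stack tail $S$:
\begin{align*}
  \T\llbracket\alpha\rrbracket &= \CALL X S, & \T\llbracket C\rrbracket &= \CALL X{\sigma_C S}, & \T\llbracket D(E)\rrbracket &= \CALL X{\sigma_D w}\ \text{ where } \T\llbracket E\rrbracket = \CALL X w,
\end{align*}
and commuting with the message and choice constructors in the obvious way. Each equation $\iseqt{\CALL D\alpha}{E}$ of the nested system then yields $\iseqt{\CALL X{\sigma_D S}}{\T\llbracket E\rrbracket}$, each constant equation $\iseqt C{E}$ yields $\iseqt{\CALL X{\sigma_C S}}{\T\llbracket E\rrbracket}$, and I would add $\iseqt{\CALL X\Empty}{\End}$ as a never-reached default. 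The crucial design choice is that $\T\llbracket C\rrbracket$ \emph{retains} the tail $S$ rather than discarding it: because a constant ignores its argument, the material sitting below $\sigma_C$ is inert garbage, and keeping it means every rule of the system performs only a bounded push/pop/swap on top of $S$. Hence the result is a genuine single-constructor pushdown type, and it is contractive exactly when $\TT$ is, since message and choice constructors are exposed after the same number of unfoldings.

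To prove that the two representations denote the same type I would argue coinductively, via \cref{prop:typeequivalence}, that they induce the same tree. The natural invariant is the relation $R$ pairing a closed nested type $E$ with every pushdown type $\CALL X u$ such that $u=\lfloor E\rfloor\,v$ for some garbage word $v$; I would show $R$ is a bisimulation, \ie that related types unfold to the same top constructor with $R$-related continuations. The heart of the argument is a substitution lemma, proved by structural induction on a right-hand side $E$: for every closed $E'$ and every garbage $v$, the pushdown type $\T\llbracket E\rrbracket[S:=\lfloor E'\rfloor v]$ is $R$-related to the nested type $E[\alpha:=E']$. The cases for $\alpha$ and for message/choice are routine unfoldings; the delicate case is $E=C$, where $\T\llbracket C\rrbracket[S:=\lfloor E'\rfloor v]=\CALL X{\sigma_C\,\lfloor E'\rfloor v}$ must be recognised as $R$-related to $C$, with $\lfloor E'\rfloor v$ sitting in the garbage slot --- exactly the garbage-invariance built into $R$.

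The step I expect to be the main obstacle is establishing this garbage-invariance cleanly: one must verify that no reachable configuration ever pops below the bottom constant symbol, so that accumulated garbage never influences the emitted traces, and that this persists through the coinductive unfolding even as constants are themselves unfolded into fresh spines pushed on top of the garbage. Once the substitution lemma is in place, a one-step case analysis on the top constructor of $E$ (unfold the governing equation on both sides and apply the lemma to the resulting right-hand side) closes the coinduction; together with the observation that only $\XT$ is used, this yields the required single-constructor pushdown representation.
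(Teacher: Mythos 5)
Your construction is essentially the paper's: a single pushdown constructor $\XT$, one stack symbol per nested constructor, spines encoded as stack words, and the parameter $\typec\alpha$ realised by the stack tail $S$; the paper likewise dismisses correctness with ``a simple coinductive proof.'' Your explicit treatment of the inert material left below a constant's symbol (the garbage-invariant bisimulation and the substitution lemma) fills in exactly the detail the paper leaves implicit, and it is handled correctly.
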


\begin{proof}
Consider a nested session type representation of $\TT$ using type constructors $\typec{X^{(1)}},\ldots, \typec{X^{(n)}}$. By renaming the variables, we can assume that the single argument of every type constructor is denoted by $\typec{\alpha}$.

We convert the nested session type representation into a pushdown session type representation as follows. We have a single type variable $\XT$; for each type constructor $\typec{X^{(i)}}$, we have a corresponding stack symbol $\sigma_i$. The equation in the nested session type representation corresponding to type constructor $\typec{X^{(i)}}$ is converted into the equation for the case that $\sigma_i$ is at the top of the stack. Namely, the translation converts nested type constructors into type variables:

\begin{itemize}
\item if $\typec{X^{(i_k)}}$ has arity $0$, then $\CALLT{X^{(i_1)}}{\ldots\CALL{}{X^{(i_k)}}\ldots}$ becomes $\CALLT X {\sigma_{i_1}\ldots\sigma_{i_k} S}$;
\item if $\typec{X^{(i_k)}}$ has arity $1$, then $\CALLT{X^{(i_1)}}{\ldots\CALL{}{\CALL{X^{(i_k)}}{\alpha}}\ldots}$ becomes $\CALLT X {\sigma_{i_1}\ldots\sigma_{i_k} S}$;
\end{itemize}

Finally, if $\TT = \CALLT{X^{(i_1)}}{\ldots\CALL{}{X^{(i_k)}}\ldots}$ is the
initial type on the nested session type representation, then $\TT = \CALLT X {\sigma_{i_1}\ldots\sigma_{i_k}}$ is the corresponding initial type in the pushdown session type representation. A simple coinductive proof shows that both representations yield the same type.
\end{proof}

With significant more effort, we can extend the above simulation to $n$-ary constructors.

\begin{theorem}\label{thm:nstnpstn}
Let $\TT$ be a nested session type using type constructors of arity at most $n$. Then, there is a pushdown session type representation for $\TT$ using only $n$ type variables.
\end{theorem}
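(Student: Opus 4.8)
The plan is to generalise the construction of \cref{lem:nst1pst1} from unary to $n$-ary constructors, so the real work is to simulate the $n$ simultaneous type arguments of a nested call with a single stack together with only $n$ pushdown variables $\typec{X_1},\ldots,\typec{X_n}$. First I would give an explicit-substitution (environment-machine) reading of the tree $\treeof(\TT)$: a configuration is a closure $(E,\rho)$ where $E$ is a subexpression of some equation body and $\rho$ binds each parameter $\parc{\alpha_1},\ldots,\parc{\alpha_n}$ to a closure. Unfolding a call $\CALLNT{X^{(i')}}{A_1,\ldots,A_n}$ pushes the frame $\rho'=[\parc{\alpha_k}\mapsto(A_k,\rho)]$ and continues with the body of $\typec{X^{(i')}}$, emitting the message/choice constructor exactly as in the coinductive clauses defining $\treeof$, whereas reaching a bare parameter $\parc{\alpha_j}$ hands control to $\rho(\parc{\alpha_j})$. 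This machine generates precisely $\treeof(\TT)$.

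The key structural observation is that, because each frame's bindings close over the \emph{immediately preceding} frame, the environment chain behaves as a genuine stack rather than a cactus stack: a reference to $\parc{\alpha_j}$ pops exactly one frame and resumes evaluation of that frame's $j$-th argument expression in the now-exposed parent environment. This single-path determinism is what makes the tree semantics simulable by a deterministic pushdown machine, and it is the exact dual of the forward construction (the $\typesp^n\subseteq\typesn^n$ theorem), where the $n$ arguments of an arity-$n$ nested constructor encoded the $n$ ways of emptying the stack; here the $n$ pushdown variables will encode which of the $n$ argument slots of the top frame is currently in focus.

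For the encoding I would take one stack symbol per call-occurrence appearing across all bodies (finitely many), and let variable $\typec{X_j}$ mean ``resume by evaluating the $j$-th argument of the top frame'', mirroring the single variable of the unary case. The equation for $\CALLT{X_j}{\sigma s}$ is obtained by reading off the relevant body and translating each atom: a parameter $\parc{\alpha_{j'}}$ becomes a pop that re-focuses to slot $j'$ (i.e.\ a move to $\typec{X_{j'}}$), a flat subcall becomes a push, and — the delicate cases — a parameter-to-parameter reference forces a chain of pops with a corresponding re-indexing of the focused slot, while a deeply nested argument expression (which, as the $X(X(\alpha))$ phenomenon shows, cannot in general be flattened without increasing arity) is expanded lazily by pushing the frames it names. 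Correctness is then a coinductive argument: I would define a decoding of each pushdown configuration $(\typec{X_j},s)$ to a closed nested type, establish an invariant, and show that one unfolding step on each side preserves the decoding and emits the same top symbol; hence the two representations yield the same tree, and by \cref{prop:typeequivalence} the same traces, so they denote the same type.

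\emph{Main obstacle.} The crux is proving the stack discipline rigorously and compiling parameter references using \emph{only} the $n$ variables $\typec{X_1},\ldots,\typec{X_n}$: I expect the hard part to be showing that chains of parameter-to-parameter forwardings, together with the lazy expansion of nested argument expressions, can be realised by pushdown equations without any auxiliary ``body-mode'' variable beyond these $n$, and that the resulting system is contractive and faithfully simulates the nested unfolding. Everything else (normalising bodies to single constructs, choosing the finite stack alphabet, and the coinductive bookkeeping) is routine once this compilation is pinned down.
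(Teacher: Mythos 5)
Your plan is correct and takes essentially the same route as the paper's proof: the paper likewise simulates the nested unfolding with a stack whose frames are $n$-tuples of (bounded-depth) argument expressions closing over the frame immediately below, uses the $n$ variables $\typec{X_1},\ldots,\typec{X_n}$ to record which argument slot of the top frame is in focus (with $\typec{X_1}$ doubling as the body-unfolding mode, which is exactly how your flagged obstacle is resolved without any auxiliary variable), and concludes with a coinductive decoding argument. Your environment-machine framing and your choice of call-occurrences as stack symbols are only cosmetic variations on the paper's alphabet $\E_d\cup\E_d^n$.
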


\begin{proof}
In \cref{sub:contextfreenested}.
\end{proof}

Given the close relationship between pushdown and nested session types, we make at this point some important remarks comparing both models.

\begin{itemize}
  \item The proofs in this section also provide algorithms for converting between (representations of) pushdown session types and nested session types. It can be seen that both algorithms run in polynomial time, and in particular they incur only a polynomial overhead. In other words, if $\TT$ has a pushdown representation of size $n$, then $\TT$ has a nested representation of size at most $\mathrm{poly}(n)$ and vice-versa.
  \item We arrived at our hierachy of session types by thinking about equational definitions and about possible ways by which  the type constructors can be parameterized. This makes pushdown session types a `natural' level of the hierarchy, with 1-counter and 2-counter types as other natural choices. Nested session types, however, arised by thinking of type constructors that are applied to other type constructors. It is not obvious what would be the counterparts of 1-counter or 2-counter session types in the nested session type framework.
  \item As we unfold a pushdown session type, its encoding size can only grow polynomially, whereas the unfolding of a nested session type can grow exponentially on the number of steps. More formally, suppose we sequentially unfold a pushdown expression $\TT$: at each stage, we choose a type constructor $\XT$ appearing in $\TT$ and replace it according to the appropriate equation. Then, the expression achieved at stage $n$ of this unfolding has size bounded by $\mathrm{poly}(n)$. On the other hand, consider a nested representation of the type $\tloop$ as $\CALLT X {Y,Y}$ with equations
  $$\iseqt Y \End\qquad \iseqt {\CALL X {\alpha,\beta}} {\IN\End {\CALL X {\CALL X {\alpha,\beta}, \CALL X {\alpha,\beta}}}}$$
  one can see that at each unfolding step the encoding size (\eg the number of characters) of the nested session type doubles and so after $n$ steps we reach an expression of size $\varTheta(2^n)$. Hence pushdown types permit a more efficient direct representation of their unfoldings.
\end{itemize}

%%% Local Variables:
%%% mode: latex
%%% TeX-master: "main"
%%% End:
% !TeX root = main.tex

\section{Decidability/Undecidability of Key Problems}
\label{sec:decidability}

We are now in a position % Using the characterisation result of \cref{thm:equivalencetypesautomata}, we are able
to address the decidability of the key problems of type
formation, type equivalence and type duality for the various classes of type
languages studied in this paper.

% \begin{restatable}[Undecidability results]
%   {theorem}{undecidability}
%   \label{thm:undecidability}~
  
% Problems $\istypet T$, $\isequivt TU$ and $\isdualt TU$ are all undecidable.
% \end{restatable}

Before looking at type formation, we need to study the problem of deciding type contractiveness, described in \cref{fig:recursive-abbr,fig:onecounter-abbr,fig:cfst,fig:pushdown-abbr,fig:nested}. Let us say that a system of recursive equations over $\xcal$ is contractive if $\iscontrt X$ for every $\XT\in\xcal$. Similarly, a system of 1-counter equations over $\xcal$ is said to be contractive if $\iscontrt {\CALL X n}$ for every $X\in\xcal$ and every $n\in\nbb$. We can extend this notion in the obvious way to pushdown systems and 2-counter systems.

From the construction described in \cref{sec:systemstoautomata}, we can inherit contractiveness conditions by looking at loop-freeness of the associated automata. The following definition captures the notion of loop-freeness (more precisely, $\varepsilon$-loop-freeness) for all automata models (see also Ginsburg and Greibach~\cite{ginsburggreibach:1965:dcfl} and Valiant~\cite{valiant:1973:phdthesis}). By a configuration we mean: in finite-state automata, a state $q\in Q$; in 1-counter automata, a pair $(q,n)\in Q\times \nbb$; in pushdown automata, a pair $(q,\omega)\in Q\times\Delta^\ast$; and in 2-counter automata, a triple $(q,n,m)\in Q\times\nbb\times\nbb$.

\begin{definition}\label{def:loopfreeness}
An automaton is said to be loop-free if, for every configuration $c$, the sequence of $\varepsilon$-moves started from $c$ eventually reaches a reading configuration.
\end{definition}

\begin{lemma}\label{lem:systemtoautomaton}
Let $\mathrm{Sys}$ be a system of recursive equations (resp.\ 1-counter equations, pushdown equations, 2-counter equations), and $A$ the corresponding automaton as constructed in \cref{sec:systemstoautomata}. Then $\mathrm{Sys}$ is contractive iff $A$ is loop-free.
\end{lemma}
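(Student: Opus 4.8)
The plan is to exploit a tight, level-independent correspondence between the inductive rules for $\iscontrt T$ and the $\varepsilon$-structure of the automaton built in \cref{sec:systemstoautomata}. First I would fix the dictionary. After the normalisation of \cref{sec:systemstoautomata}, every equation has a right-hand side that is either a single (possibly parameterised) variable application or a single type construct. By construction, an equation whose right-hand side is a variable application --- e.g. $\iseqt XY$ in the recursive case, or $\iseqt{\CALLT X{\succ N}}{\CALLT Y N}$ in the 1-counter case --- yields exactly one $\varepsilon$-move, whereas an equation whose right-hand side is a genuine construct ($\End$, a message $\MSG TU$, or a choice $\choicet$) yields reading moves, so the associated configuration is a reading configuration. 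Thus each configuration $c$ corresponds to an instantiated variable application $T$ (a state $q_X$ together with its counter/stack value giving $\CALLT Xn$, $\CALLT Xs$, etc.), an $\varepsilon$-move $c\rightarrow c'$ corresponds to one unfolding step along a variable equation, and the counter/stack update performed by that move is exactly the substitution $T\subs nN$ recorded in rules \ctz and \cts. The structural fact I would record is \emph{determinism}: since each type constructor has at most one defining equation, each configuration has at most one applicable contractiveness rule (\ctz versus \cts being disjoint on the parameter), coinciding with the uniqueness of the outgoing $\varepsilon$-move guaranteed by totality of the automaton. Hence the maximal $\varepsilon$-sequence from $c$ is precisely the sequence traversed by the (deterministic) bottom-up search for a derivation of $\iscontrt T$.

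For the direction contractive $\Rightarrow$ loop-free, I would induct on the height of the finite derivation of $\iscontrt T$, where $T$ is the expression associated with configuration $c$. For a variable application the only applicable rule is \ctid (resp.\ \ctz, \cts), reducing to a premise $\iscontrt{T'}$, where $T'$ is the right-hand side of the head variable's equation after the appropriate substitution. If $T'$ is a type construct the premise is one of the axioms \cend, \cmsg, \cchoice, and by construction $c$ is a reading configuration, so the empty $\varepsilon$-sequence already reaches a reading configuration. If $T'$ is a variable application it corresponds to the target $c'$ of the unique $\varepsilon$-move from $c$; the subderivation of $\iscontrt{T'}$ is strictly smaller, so by the induction hypothesis the $\varepsilon$-sequence from $c'$ reaches a reading configuration, and prepending $c\rightarrow c'$ gives the required sequence from $c$. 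Because contractiveness of the system is assumed for every variable and every parameter value, this covers every configuration, establishing loop-freeness.

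For loop-free $\Rightarrow$ contractive I would run the argument in reverse, inducting on the number $k$ of $\varepsilon$-moves from $c$ to the first reading configuration; this quantity is finite and well-defined exactly because the automaton is loop-free and the $\varepsilon$-path is deterministic. If $k=0$ then $c$ is a reading configuration, so the defining equation of the head variable has a type-construct right-hand side, and applying \ctid (resp.\ \ctz, \cts) to the matching contractiveness axiom yields $\iscontrt T$. If $k>0$, the first $\varepsilon$-move $c\rightarrow c'$ corresponds to a variable unfolding; by the induction hypothesis the expression $T'$ at $c'$ satisfies $\iscontrt{T'}$, and one application of \ctid (resp.\ \ctz, \cts) with the defining equation gives $\iscontrt T$. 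Since loop-freeness supplies the premise at every configuration, we obtain $\iscontrt{\CALLT Xn}$ for every $X$ and every parameter value, i.e.\ contractiveness of the system.

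Finally, I would note that the four cases (recursive, 1-counter, pushdown, 2-counter) are handled by the \emph{same} argument, differing only in the shape of a configuration and in the bookkeeping of the parameter under substitution (a natural number, a string, or a pair of naturals), which in each case tracks exactly the counter/stack update of the matching $\varepsilon$-move. The main obstacle, and the point I would treat most carefully, is precisely this substitution bookkeeping: verifying rigorously that one unfolding step of a parameterised equation produces the expression associated with the configuration reached after the corresponding $\varepsilon$-move, so that the inductive hypotheses in both directions line up. The determinism observation is the other load-bearing ingredient, since it is what lets us identify ``absence of a finite contractiveness derivation'' with ``an infinite $\varepsilon$-loop'', and hence guarantees that the two directions above are genuine converses rather than merely one-sided implications.
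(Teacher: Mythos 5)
Your proposal is correct and takes essentially the same route as the paper: the paper's (shorter) proof likewise observes that the sequence of $\varepsilon$-moves from a configuration $(q_X,\omega)$ is exactly a deterministic derivation attempt for $\iscontrt{\CALL X\omega}$ via \ctz and \cts, terminating in a reading configuration iff the derivation succeeds. You merely make the two directions explicit with inductions on derivation height and on $\varepsilon$-path length, which is a faithful elaboration of the paper's sketch.
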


\begin{proof}
We sketch the proof for pushdown systems, since the other cases follow the same analysis. Observe that any configuration of the form $(q_\mathrm{end},n)$ or $(q_\mathrm{error},n)$ is already a reading configuration, so it cannot be the start of an infinite sequence of $\varepsilon$-moves. For the remaining configurations $(q_X,\omega)$, it is clear by our construction that the sequence of $\varepsilon$-moves obtained by following the transition function is equivalent to a derivation attempt for $\iscontrt {\CALL X \omega}$ following rules \ctz and \cts, and that this sequence eventually reaches a reading configuration iff the derivation is successful. Thus we have an equivalence between systems for which all (variable, stack) pairs are contractive and automata for which all configurations eventually reach a reading configuration.
\end{proof}

\begin{theorem}\label{thm:contractivity}
The following problems are decidable in polynomial time:
\begin{itemize}
\item Given a system $\mathrm{Sys}$ of recursive equations, is $\mathrm{Sys}$ contractive?
\item Given a system $\mathrm{Sys}$ of 1-counter equations, is $\mathrm{Sys}$ contractive?
\item Given a system $\mathrm{Sys}$ of pushdown equations, is $\mathrm{Sys}$ contractive?
\end{itemize}
\end{theorem}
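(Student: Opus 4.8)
The plan is to reduce all three contractiveness problems to a single question about the automata of \cref{sec:systemstoautomata}, using \cref{lem:systemtoautomaton}: a system is contractive if and only if its associated automaton is loop-free in the sense of \cref{def:loopfreeness}. Since the translation of \cref{sec:systemstoautomata} introduces only linearly many fresh variables (hence states, and stack symbols) per equation, the automaton has size polynomial in the system, so it suffices to decide loop-freeness in time polynomial in the automaton. The structural fact I would exploit throughout is that, because all our automata are deterministic and total, every configuration has at most one outgoing $\varepsilon$-move; thus the $\varepsilon$-transition relation is a partial function on configurations, and loop-freeness amounts to the absence of an infinite forward $\varepsilon$-orbit from any configuration. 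For finite-state automata this is immediate: the $\varepsilon$-moves induce a functional graph on the finite state set, and an infinite orbit exists iff this graph contains a cycle, detectable in linear time by following each chain until it repeats or reaches a state with no outgoing $\varepsilon$-move.

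For 1-counter automata the configuration space $Q\times\nbb$ is infinite, so I would classify an infinite $\varepsilon$-orbit by whether it visits counter value $0$ infinitely often. If it does not, then from some point on the counter stays positive and the orbit is an infinite path in the positive-mode functional graph $G$ whose edge $q\to q'$ records the move $\delta(q,\succl,\varepsilon)$, weighted by the counter change in $\{-1,0,+1\}$; such an orbit exists iff $G$ has a cycle of nonnegative total weight, which is found by following each chain into its unique cycle and summing weights. If the orbit does revisit $0$ infinitely often, then some configuration $(q,0)$ recurs, so there is a nonempty $\varepsilon$-path from $(q,0)$ back to $(q,0)$. Assuming the first test failed, every positive-mode cycle has strictly negative weight, so the deterministic trajectory launched by an increment out of the zero level is forced back down to $0$ after at most $O(|Q|^{2})$ steps; hence the induced return map on the zero level is a well-defined partial function on $Q$, computable in polynomial time by direct simulation, and an orbit of the second kind exists iff this functional return graph has a cycle.

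For pushdown automata the stack makes the configuration space genuinely tree-shaped, and here I would use the standard summary (saturation) technique. First compute the pop-summary relation $R\subseteq Q\times\Delta\times Q$, where $(q,\gamma,q')\in R$ means the $\varepsilon$-run from $(q,\gamma)$ eventually pops $\gamma$ and exposes the symbol below in state $q'$; the relation $R$ is the least fixed point of saturation rules dictated by whether $\delta(q,\gamma,\varepsilon)$ is a pop, a no-op, or a push (a push of $\sigma$ being resolved by chaining a summary for $\sigma$ and then one for $\gamma$), and is computable in polynomial time. Using $R$, I would define a divergence predicate $D(q,\gamma)$, together with $D(q,\varepsilon)$ for the empty stack, recording that the $\varepsilon$-run from $(q,\gamma)$ is infinite yet never pops $\gamma$; determinism makes the defining case analysis unambiguous, since at $(q,\gamma)$ the run either reads, pops $\gamma$, or pushes and then, as recorded by $R$, either returns to $\gamma$ or diverges above it, so $D$ is again a polynomial fixed point. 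Finally, inspecting the overall minimum stack height reached along any infinite $\varepsilon$-orbit shows that such an orbit exists from some configuration iff $D(q,\gamma)$ or $D(q,\varepsilon)$ holds for some state, which yields the decision procedure.

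The main obstacle is precisely the pushdown case, with the counter-aware part of the 1-counter case as a simpler specialisation: because the configuration space is infinite, loop-freeness cannot be settled by naive cycle detection, and the genuine content lies in establishing the correctness of the summary and divergence fixed points, together with the minimum-height argument that reduces divergence of an arbitrary configuration to that of a single-symbol configuration. Polynomiality of the theorem then follows since the automaton is polynomial in the input system and each ingredient, namely functional-graph cycle detection, return-map simulation, and summary saturation, runs in polynomial time.
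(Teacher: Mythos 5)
Your proposal is correct, and it starts from the same pivot as the paper: both invoke \cref{lem:systemtoautomaton} to turn contractiveness into loop-freeness of the polynomial-size automaton from \cref{sec:systemstoautomata}, so everything reduces to detecting infinite $\varepsilon$-orbits over an infinite configuration space. Where you part ways is in how that detection is organised. The paper gives a single algorithm for the pushdown case (with recursive and 1-counter systems as special cases): a pigeonhole analysis of stack lengths along a hypothetical infinite orbit yields the characterisation that a system is non-contractive iff some $\CALLT{X}{\varepsilon}$ returns to itself or some $\CALLT{X}{\sigma}$ reaches $\CALLT{X}{\sigma\omega}$ without emptying the stack, and these conditions are then decided by direct simulation with a memoised shortcut table, giving an explicit $\ocal(|Q||\Delta|)$ bound. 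Your pop-summary relation $R$ is exactly that shortcut table, computed eagerly by saturation rather than lazily during simulation, and your minimum-stack-height argument plays the role of the paper's pigeonhole on stack lengths, so the pushdown cases are morally the same algorithm presented as two fixed points rather than as a simulation with repetition detection; one detail worth making explicit is that the divergence predicate $D$ must be the \emph{greatest} fixed point of its defining clauses (equivalently, cycle detection in the finite dependency graph on $Q\times\Delta$ plus the push-and-diverge base cases), since a least fixed point would miss orbits that cycle at a fixed relative stack height. Your 1-counter treatment is genuinely different: the paper merely specialises the pushdown algorithm, whereas you exploit the single stack symbol to reduce the question to a nonnegative-weight cycle in the positive-mode functional graph together with a cycle in the zero-level return map --- a more elementary and self-contained argument for that class, though it does not generalise upward. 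Both routes are polynomial; the paper's buys uniformity and a sharp stated running time, yours buys modularity and per-class transparency.
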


\begin{proof}
In \cref{sub:contractiveness}.
\end{proof}

\begin{theorem}
  \label{thm:typeformation}
  Problems $\istyper T$, $\istypeo T$ and $\istyped T$ are all decidable in
  polynomial time.
\end{theorem}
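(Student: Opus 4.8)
The plan is to reduce each type-formation problem to a reachability-plus-loop-freeness question on the automaton of \cref{sec:systemstoautomata}, and then to invoke the polynomial-time machinery behind \cref{thm:contractivity}. The guiding observation is a characterisation of well-formedness: $\istype T$ holds if and only if every configuration reachable (by any sequence of reading and $\varepsilon$-moves) from the initial configuration of $T$ is contractive, that is, its deterministic sequence of $\varepsilon$-moves reaches a reading configuration. This is precisely the semantic content of the coinductive formation rules: a backward-closed set witnessing $\istype T$ is built by descending through the components of a message or a choice (these are the reading-successors) and by unfolding variable applications (these are the $\varepsilon$-moves), and each unfolding step carries a premise $\iscontrt{\cdot}$, so the derivation can be continued coinductively exactly when no reachable configuration fails to reveal a constructor. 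Equivalently, by the definition of $\treeof$, $\istype T$ holds iff the coinductive unfolding of $T$ yields a well-defined labelled tree.

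First I would make this characterisation precise and stress that reachability is essential: testing contractiveness of the \emph{whole} system (\cref{thm:contractivity}) is sufficient but not necessary for $\istype T$. For example, with the 1-counter equations $\iseqt{\CALLT X\zero}\End$ and $\iseqt{\CALLT X{\succ N}}{\CALLT X{\succ N}}$, the second equation is non-contractive, yet $\istypeo{\CALLT X\zero}$ holds because the configuration $\CALLT X{\succ n}$ is never reached from $\CALLT X\zero$. Hence the decision procedure must restrict the loop-freeness test to the reachable part of the automaton.

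Next I would spell out the procedure. Construct $A_T$ from $T$ via \cref{sec:systemstoautomata}, of size polynomial in the input; by the argument behind \cref{lem:systemtoautomaton} a configuration is contractive iff it is loop-free in the sense of \cref{def:loopfreeness}, so $\istype T$ holds iff no reachable configuration of $A_T$ starts an infinite $\varepsilon$-sequence. For recursive types $A_T$ is finite-state: compute the states reachable from $q_0$ by graph search and test for an $\varepsilon$-cycle among them, both polynomial. For 1-counter and pushdown types the configuration space is infinite, but the $\varepsilon$-transitions depend only on the state together with the counter sign (resp.\ the top stack symbol), so reachable configurations admit a finite summary; detecting reachable $\varepsilon$-divergence is then obtained by adapting the polynomial-time loop-freeness analysis of \cref{thm:contractivity} to the reachable sub-system, using that configuration reachability for one-counter and pushdown automata is polynomial-time decidable.

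The main obstacle is this last point in the pushdown (and 1-counter) case: combining an infinite-state reachability analysis with the contractiveness test while keeping the whole procedure polynomial. The delicacy is that the non-contractive configurations and the reachable configurations are each infinite sets, and one must certify that their intersection-emptiness is decidable in polynomial time. I expect to discharge this by working with the finite graph on pairs $(q,\sigma)$ of state and top stack symbol---equivalently, on the modes of \cref{thm:normalformautomaton}---and, via a saturation/reachability computation, marking those patterns that are both reachable from the initial configuration and able to initiate an $\varepsilon$-loop; $\istype T$ then fails iff some pattern is so marked. The routine verification that this marking runs in polynomial time, and that it faithfully captures reachable $\varepsilon$-divergence, is what the detailed proof (deferred to the appendix) must supply.
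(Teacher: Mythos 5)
Your proposal is correct and follows essentially the same route as the paper: characterise $\istype T$ as ``no configuration reachable from the initial one is non-contractive,'' use the contractiveness algorithm of \cref{thm:contractivity} to compute the finite set of bad (state, top-symbol) modes, and then decide in polynomial time whether any bad mode is reachable. The only difference is that the paper makes the last step concrete---it marks the bad configurations as the sole accepting states, converts the pushdown automaton to a context-free grammar, and tests emptiness in polynomial time---whereas you defer that verification; citing that standard reduction would close your remaining gap.
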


\begin{proof}
We only sketch the proof for the case of pushdown types, of which the other two can be seen as subcases. The algorithm described in the proof of \cref{thm:contractivity} not only determines whether a system of equations is contractive, but it can also be used to produce the set of ``bad'' type identifiers
\begin{equation*}
  \T_{\badl}=\{\CALLT X \sigma\mid\typec\sigma\in\Delta\cup\{\varepsilon\}\;,\;\neg \iscontrd {\CALLT X \sigma }\}.
\end{equation*}
Intuitively, a type $\TT=\CALLT X \omega$ is well-formed ($\istyped \TT$) iff
the expansion of $\TT$ (which might be infinite) never visits a type identifier
in $\T_{\badl}$. In particular, if the system is contractive, then
$\istyped{\CALL X\omega}$ for any variable $\XT$ and stack contents $\omega$.
Otherwise, we can apply the construction in \cref{sec:systemstoautomata} to
convert the system of pushdown equations into a deterministic pushdown automata;
we get that $\neg\istyped T$ iff there is a derivation
$(q,\omega)\overset{w}{\rightarrow}(q_X,\sigma)$ for some
$\CALLT X\sigma\in \T_{\badl}$. Now deciding whether such derivations exist can
be reduced to solving the reachability problem on deterministic pushdown
automata, which can be done in polynomial time (in fact, the problem even
remains polynomial-time solvable for nondeterministic pushdown automata). Here
is a short argument: one can change the automata in such a way that the only
accepting states are the states $(q_X,\sigma)$ corresponding to ``bad''
configurations, reducing the problem to deciding if the pushdown automaton
accepts a non-empty language. Then, we can transform the automaton into a
context-free grammar \cite[Theorem 5.4]{hopcroftullman:1979}. Finally, we can
use a polynomial-time algorithm~\cite[Lemma 4.1]{hopcroftullman:1979} to
decide if the language generated by a context-free grammar is non-empty.
\end{proof}

By making use of the known procedures for deciding equivalence of deterministic
automata, and since the construction in \cref{sec:systemstoautomata} can be
implemented by a computable procedure, we can immediately derive decidability for
the corresponding problems for types.

\begin{theorem}
  \label{thm:typeequivalence}
  Problems $\isequivr TU$, $\isequivo TU$ and $\isequivd TU$ are all decidable.
\end{theorem}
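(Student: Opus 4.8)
The plan is to reduce type equivalence to language equivalence of the associated automata and then invoke known decidability results from automata theory. The key bridge is \cref{prop:typeequivalence}: for any two types, $\isequiv TU$ holds if and only if $\lcal(\TT)=\lcal(\UT)$. Thus deciding type equivalence is exactly deciding equality of the two trace languages, and everything reduces to a language-equality question in the appropriate model of computation.

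Concretely, given $\TT$ and $\UT$ presented by systems of equations in one of the three classes, I would first apply the construction of \cref{sec:systemstoautomata} to effectively build deterministic automata accepting $\lcal(\TT)$ and $\lcal(\UT)$ respectively. Since that construction is a computable procedure, the passage from the type-level question to the automata-level question is itself algorithmic, and the problem $\isequiv TU$ becomes: do the two resulting deterministic automata accept the same language?

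It then remains to invoke the decidability of equivalence for each class of deterministic automata. For recursive types the automata are deterministic finite-state automata, whose equivalence is classically decidable (for instance by the product construction or by minimisation), which settles $\isequivr TU$. For 1-counter types the automata are deterministic 1-counter automata, whose language equivalence was shown decidable by Valiant and Paterson \cite{valiantpaterson:1975:onecounterautomata}, settling $\isequivo TU$. For pushdown types the automata are deterministic pushdown automata, and equivalence is decidable by the celebrated theorem of Sénizergues, settling $\isequivd TU$.

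The genuinely hard part lies not in our reduction but in the depth of the automata-theoretic results we lean on --- above all the decidability of equivalence for deterministic pushdown automata, a landmark and highly intricate theorem. Our contribution here is precisely the correspondence that makes this machinery applicable: the forward translation of \cref{sec:systemstoautomata} together with \cref{prop:typeequivalence}. I would also note that, in contrast with type formation (\cref{thm:typeformation}), the statement only asserts decidability, since the known equivalence procedures for deterministic pushdown automata are not polynomial; the 1-counter case can however be sharpened to polynomial time, matching the table in \cref{sec:introduction}.
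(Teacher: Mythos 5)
Your proposal is correct and follows essentially the same route as the paper: convert both types to deterministic automata via the construction of \cref{sec:systemstoautomata}, reduce type equivalence to language equality via \cref{prop:typeequivalence}, and then invoke the known decidability of equivalence for deterministic finite-state, 1-counter, and pushdown automata (the last by S\'enizergues). Nothing is missing.
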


\begin{proof}
  An algorithm for deciding type equivalence works as follows. First convert
  each type $\TT,\UT$ into an equivalent automaton, following the steps in
  \cref{sec:systemstoautomata}. If $\TT$ (resp.\ $\UT$) is given by the initial
  type identifier $\CALLT X\omega$, then its corresponding automaton has
  $(q_X,\omega)$ as the initial configuration. By our construction, we get that
  $\lcal(\TT)$ (resp.\ $\lcal(\UT)$) is the language accepted by the
  corresponding automaton, and we can infer that $\isequiv TU$ iff (by
  \cref{prop:typeequivalence}) $\lcal(\TT)=\lcal(\UT)$ iff the corresponding
  automata are equivalent. We know that the equivalence of automata is decidable
  for finite-state automata 
  % \cite{rabinscott:1959}, and a linear-time algorithm by \citet{hopcroftkarp:1971}),
  \cite{hopcroftkarp:1971,rabinscott:1959},
  1-counter automata
  \cite{DBLP:conf/stoc/BohmGJ13,valiant:1973:phdthesis,valiantpaterson:1975:onecounterautomata} and
  (deterministic) pushdown automata
  \cite{senizergues:1997:equivalencedpda,senizergues:2001:decidabilitycompleteformalsystems}.
  Applying the corresponding algorithm gives us the desired answer.
\end{proof}

% DECIDABILITY OF DUALITY

Building on type equivalence, we can establish similar results to decide whether
two types are the dual of each other. We start by building a dual to any type $\TT$.

\begin{lemma}\label{lem:typedualityexistence}
  For each class of types, 
  %$\reclabel{type}$, $\onelabel{type}$, $\pushtlabel{type}$ and $\twolabel{type}$, 
  if $\istype T$, then there exists
  $\typec{\dual{T}}$ such that $\isdual T{\typec{\dual{T}}}$.
\end{lemma}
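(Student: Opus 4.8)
The plan is to define, for each class, a syntactic \emph{dualization} operation $\mathrm{D}(\cdot)$ on type expressions and to show that it produces a type in the same class that is dual to the original. On the type constructs of \cref{fig:finite-abbr}, $\mathrm{D}$ swaps the top-level polarity while leaving the data component untouched and dualizing the continuation: $\mathrm{D}(\End)=\End$, $\mathrm{D}(\IN TV)=\OUT T{\mathrm{D}(V)}$, $\mathrm{D}(\OUT TV)=\IN T{\mathrm{D}(V)}$, and $\mathrm{D}$ turns an external choice into the internal choice with the same label set and each branch $T_\ell$ replaced by $\mathrm{D}(T_\ell)$ (symmetrically for internal choice). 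Note that the data component is deliberately \emph{not} dualized, which matches the side-condition $\isequiv TU$ of rule \dmsg. For the equational classes (recursive, 1-counter, pushdown, nested, 2-counter) I extend $\mathrm{D}$ to the finite equational system by introducing a fresh dual constructor $\dual X$ for every type constructor $X$, setting $\mathrm{D}(X)=\dual X$, and adding, for each equation with $X$ on the left, an equation with the same left-hand side built on $\dual X$ and right-hand side $\mathrm{D}(\text{rhs})$. Crucially, all natural-number, stack, and type-variable parameters are left unchanged by $\mathrm{D}$; for the purely coinductive class of infinite types, $\mathrm{D}$ is defined directly on the (possibly infinite) type.

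First I would check that $\dual T := \mathrm{D}(T)$ is again a type in the same class. For finite types this is immediate by structural induction. For the equational classes the dual system has exactly twice as many equations (still finitely many), and since $\mathrm{D}$ touches neither the parameter arguments nor the arities of constructors, the dual system respects the same parameter discipline (the split between the $\zerol$- and $\succl$-equations for 1-counter types, the per-stack-symbol equations for pushdown types, the fixed arities for nested types). Contractiveness is preserved because $\mathrm{D}$ maps each guarding top constructor to a guarding constructor of the same shape: a message/choice/$\End$ top yields a message/choice/$\End$ top, which is immediately contractive, and a chain of constructor unfoldings $X\to Y\to\cdots$ is mirrored by $\dual X\to\dual Y\to\cdots$ reaching a real constructor in the same finite number of steps. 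Hence $\iscontrt{\mathrm{D}(T)}$ follows from $\iscontrt T$, and $\istype{\mathrm{D}(T)}$ then follows from the (co)inductive type-formation rules.

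Next I would establish $\isdual T{\mathrm{D}(T)}$. For finite types this is an induction on the structure of $T$: the case $\End$ is rule \dend; the message case $\MSG TV$ uses \dmsg, discharging the premise $\isequiv TT$ by reflexivity of type equivalence and the premise $\isdual V{\mathrm{D}(V)}$ by the induction hypothesis; the choice cases use \dchoice with the hypothesis applied to each branch. For the coinductive classes the same case analysis is repackaged as a coinductive argument: I would exhibit the candidate relation $R=\{(T,\mathrm{D}(T))\mid\istype T\}$ and verify that $R$ is backward closed with respect to the duality rules, using the constructor-unfolding duality rules (analogous to the equivalence rules \eqidl, \eqzl and \eqsl in the respective figures) to step through type constructors and then applying \dend, \dmsg or \dchoice at the revealed constructor. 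For infinite types the equation-unfolding step is absent, so the verification of $R$ uses only the three structural rules.

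The main obstacle is the bookkeeping for the equational classes: one must verify that the dualized system genuinely satisfies the well-formedness constraints specific to each level (finiteness of the equation set, the parameter/arity discipline, and contractiveness) and that the candidate relation $R$ is consistent with the parameterized duality rules. The only genuinely delicate point is the data component of \dmsg: because duality relates the exchanged types by \emph{equivalence} rather than by duality, $\mathrm{D}$ must leave data components unchanged, after which the equivalence premise is discharged by reflexivity. Getting this asymmetry right — dualizing continuations and choice branches but not data payloads — is what makes the construction go through uniformly across all classes.
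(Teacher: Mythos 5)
Your proposal is correct and follows essentially the same route as the paper: extend the equational system with fresh dual constructors $\typec{\dual{X}}$, dualize polarities and continuations while leaving data components (and all parameters) untouched so that the $\isequiv TU$ premise of \dmsg{} is discharged by reflexivity, and conclude $\isdual T{\typec{\dual{T}}}$ by a coinductive argument. Your additional checks on contractiveness and the parameter discipline of the dualized system are details the paper leaves implicit, but they do not change the argument.
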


\begin{proof}
  We sketch the proof for pushdown types, as the other classes use essentially
  the same idea. Consider a pushdown type $\TT$, say
  $\TT = \CALLT{X_0}{\omega_0}$, with respect to a system $\mathrm{Sys}$ of
  pushdown equations, defined on a set of variables $\X$. We construct a dual
  type $\typec{\dual{T}} = \CALLT{\overline{X}_0}{\omega_0}$ by extending the
  system $\mathrm{Sys}$ to a system $\mathrm{Sys}'$, defined on the set of
  variables $\{\XT\mid \XT\in\X\} \cup \{\typec{\dual{X}}\mid \XT\in\X\}$. The
  equations for the variables in $\X$ are the same as in $\mathrm{Sys}$. The
  equations for the duals of the variables in $\X$ are given according to the
  usual rules:
\begin{itemize}
  \item the dual of $\End$ is $\End$;
  \item the dual of $\MSG TU$ is $\typec{\dual\sharp T.\dual{U}}$ where
      $\typec{\dual{U}}$ is the dual of $\UT$;
  % \item the dual of $\OUT TU$ is $\IN{T}{\dual{T}}$ where $\typec{\dual{T}}$
    %   is the dual of $\UT$;
  \item the dual of $\choicet$ is $\typec{\dual\star{\record \ell {\dual{T}}
          L}}$ where $\typec{\dual{T_\ell}}$ is the dual of $\typec{T_\ell}$;
  % \item the dual of $\extct$ is $\intchoice\recordt \ell {\dual T} L$ where
    %   $\typec{\dual{T_\ell}}$ is the dual of $\typec{T_\ell}$;
  \item additionally, the dual of $\CALLT X\omega$ is $\CALLT{\dual{X}}{\omega}$.
\end{itemize}

A straightforward proof by coinduction then shows that
$\isduald {\CALL X\omega}{\CALL {\dual{X}}\omega}$ for every $\XT\in\X$ and
$\omega\in\Delta^\ast$, and thus $\isduald T{\typec{\dual{T}}}$.
\end{proof}

Notice that the proof above is constructive, \ie given a system $\mathrm{Sys}$ specifying $\TT$, we can effectively produce a system $\mathrm{Sys}'$ specifying $\typec{\dual{T}}$.

\begin{theorem}
  \label{thm:typeduality}
  Problems $\isdualr TU$, $\isdualo TU$ and $\isduald TU$ are all decidable.
\end{theorem}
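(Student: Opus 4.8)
The plan is to reduce duality to type equivalence. The two ingredients are already in place: \cref{lem:typedualityexistence} constructs, for any type $\TT$ in one of the three classes, a canonical dual $\typec{\dual{T}}$, and its proof is effective — from a system specifying $\TT$ it produces a system specifying $\typec{\dual{T}}$ that stays in the same class (it only swaps $\typec?\leftrightarrow\typec!$ and $\typec\&\leftrightarrow\typec\oplus$ and introduces one fresh constructor $\typec{\dual{X}}$ per constructor $\XT$, so the number of variables at most doubles); and \cref{thm:typeequivalence} decides type equivalence for each of these classes. Given inputs $\TT$ and $\UT$, my decision procedure is: (i) if desired, check $\istype T$ and $\istype U$ with \cref{thm:typeformation}; (ii) build the system for $\typec{\dual{T}}$ as in \cref{lem:typedualityexistence}; (iii) decide $\isequiv U {\dual{T}}$ with \cref{thm:typeequivalence}, and return that answer. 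Since the dual lies in the same class and is only polynomially larger, step (iii) is a legitimate instance of \cref{thm:typeequivalence}, so all three problems $\isdualr TU$, $\isdualo TU$ and $\isduald TU$ are decidable.

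The whole reduction rests on one characterisation, which is the \emph{main step}:
\[\isdual TU \iff \isequiv U {\dual{T}}.\]
The right-to-left direction is the statement that duality is stable under equivalence of its second argument: from $\isdual T {\dual{T}}$ (\cref{lem:typedualityexistence}) and $\isequiv U {\dual{T}}$ one should be able to derive $\isdual TU$. The left-to-right direction is uniqueness of duals up to equivalence: $\isdual TU$ and $\isdual T {\dual{T}}$ together force $\isequiv U {\dual{T}}$.

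I would prove both halves semantically, reusing the tree machinery of \cref{sec:treeslanguages} and \cref{prop:typeequivalence}. Define an involutive dualising operation $\D$ on labelled trees that swaps $?\leftrightarrow!$ and $\&_L\leftrightarrow\oplus_L$ at every node and leaves the branch labels and $\Endl$ untouched. A coinductive argument then shows $\isdual TU$ iff $\treeof(U)=\D(\treeof(T))$. Because $\D$ is a function, $\treeof(\dual{T})=\D(\treeof(T))$, so $\isdual TU$ iff $\treeof(U)=\treeof(\dual{T})$, which by \cref{prop:typeequivalence} is exactly $\isequiv U {\dual{T}}$. This simultaneously yields functionality (uniqueness) and stability under equivalence, closing both directions at once.

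The main obstacle I anticipate is this characterisation itself. The difficulty is that duality is already entangled with equivalence — rule \dmsg compares the transmitted types with $\isequiv{}{}$ — so a naive syntactic induction over the duality derivation does not close, and one cannot simply peel off matching constructors. Passing to trees via \cref{prop:typeequivalence} is what makes the argument go through, since it replaces both the duality and the equivalence relations by equalities of (dual) trees, where involutivity of $\D$ gives functionality for free. A minor point still to check is that the dual construction of \cref{lem:typedualityexistence} genuinely preserves the class and incurs only polynomial overhead, so that the equivalence query in step (iii) remains within the decidable fragment.
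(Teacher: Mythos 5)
Your proposal is correct and follows essentially the same route as the paper: construct $\typec{\dual{T}}$ via the effective construction in \cref{lem:typedualityexistence} and then decide $\isequiv{\dual{T}}{U}$ using \cref{thm:typeequivalence}. The only difference is that you explicitly justify the characterisation $\isdual TU \iff \isequiv U{\dual{T}}$ via the tree-level dualising operation, a step the paper leaves implicit (it relies on the same fact, e.g.\ in the proof of \cref{thm:undecidability}), so your write-up is if anything slightly more complete.
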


\begin{proof}
Given the types $\TT$, $\UT$, construct $\typec{\dual{T}}$ according to the proof of \cref{lem:typedualityexistence}. Then decide whether $\typec{\dual{T}}$ and $\UT$ are equivalent by applying the procedure in the proof of \cref{thm:typeequivalence}.
\end{proof}

% UNDECIDABILITY RESULTS

In \cref{sec:systemstoautomata} we show how to convert a system of equations into an automaton, which enables us to prove that certain problems on types are decidable by observing that their counterparts for automata are decidable. Similarly, our reverse construction from automata to systems of equations in \cref{sub:automatatosystems} allows to prove that certain problems on types are undecidable since their counterparts for automata are undecidable.

\begin{theorem}[Undecidability results]
\label{thm:undecidability}
Problems $\istypet T$, $\isequivt TU$ and $\isdualt TU$ are all undecidable.
\end{theorem}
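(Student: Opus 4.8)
The plan is to exploit the bidirectional correspondence between 2-counter types and 2-counter automata established in \cref{sec:systemstoautomata,sub:automatatosystems}, together with the fact (recalled in \cref{sec:systemstoautomata}) that 2-counter, \ie Minsky, machines are Turing-complete, so that standard undecidable problems about such machines transfer to all three type problems. Throughout I fix an effective encoding of a deterministic Minsky machine $M$ (two counters, instructions for increment, decrement-or-jump-on-zero, and a distinguished halting instruction) as a system of 2-counter equations: each control state of $M$ becomes a type constructor, the two counters of $M$ become the two natural-number parameters, and each instruction becomes an equation. Following the four-way case analysis on which counter is zero (exactly as in the 2-counter transition function of \cref{sec:systemstoautomata}), most right-hand sides are again a bare type identifier, \ie a silent unfolding that does not reveal any constructor; only the halting instruction has a genuine constructor such as $\End$ on its right-hand side.

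For $\istypet T$ I would argue directly on contractiveness. By construction the silent unfoldings simulate $M$ step by step, and a type constructor is revealed exactly when $M$ reaches its halting instruction. Hence the initial identifier, started with both parameters $\zero$, is contractive iff the unfolding chain is finite iff $M$ halts on empty counters; equivalently, by the 2-counter case of \cref{lem:systemtoautomaton}, the associated automaton is loop-free iff $M$ halts. Since type formation requires contractiveness of every reachable identifier, and on a halting run every reachable identifier eventually reaches $\End$, we get $\istypet T$ for this initial type iff $M$ halts. As the halting problem is undecidable, so is $\istypet T$. This is precisely where the situation departs from the lower classes: loop-freeness is decidable for recursive, 1-counter and pushdown systems (\cref{thm:contractivity}), but not for two counters.

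For $\isequivt TU$ I would reduce from the same halting problem, but now producing \emph{well-formed} types in both cases, so that the undecidability genuinely resides in equivalence rather than in formation. I modify the simulation so that every step emits one message $\OUTn\End$ before the silent unfolding; this keeps the type contractive (a constructor is always revealed) whether or not $M$ halts. While $M$ runs, the resulting type $\typec{T_M}$ behaves like $\tloop$, continually outputting $\End$; upon reaching the halting instruction it instead offers a distinguished internal choice producing a trace containing a label $\Label{done}$ that occurs nowhere else. Taking $\UT=\tloop$ as reference, \cref{prop:typeequivalence} gives $\isequivt{T_M}{U}$ iff $\lcal(\typec{T_M})=\lcal(\tloop)$, and by design $\lcal(\typec{T_M})$ contains a $\Label{done}$-trace iff $M$ halts; hence $\isequivt{T_M}{U}$ iff $M$ does not halt, which is undecidable. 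The main obstacle is exactly this faithful encoding: I must check that $\typec{T_M}$ is a deterministic, total, contractive 2-counter type in both the halting and non-halting cases, that its trace language is prefix-closed (automatic once it is a type, by \cref{prop:typeclosedprefix}), and that the $\Label{done}$-trace is the \emph{only} possible difference from $\lcal(\tloop)$; getting the parameter manipulations for decrement-and-zero-test right is the routine but error-prone heart of the argument.

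Finally, $\isdualt TU$ reduces to equivalence using \cref{lem:typedualityexistence}. Given an equivalence instance $(\TT,\UT)$, I effectively construct the dual $\dual U$ of $\UT$ by the constructive procedure of that lemma. Since duality is defined up to $\teq$ and is involutive, $\isdualt{T}{\dual U}$ holds iff $\dual U\teq\dual T$ iff $\TT\teq\UT$; thus $(\TT,\UT)\mapsto(\TT,\dual U)$ is a computable many-one reduction from type equivalence to type duality, and the undecidability of $\isequivt TU$ established above immediately yields the undecidability of $\isdualt TU$.
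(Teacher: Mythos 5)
Your proof is correct. For type formation and for duality it is essentially the paper's argument: formation is reduced to contractiveness of a system whose silent unfoldings simulate a machine and whose only genuine constructor $\End$ marks halting (the paper routes the simulation through Hopcroft--Ullman's conversion of a Turing machine into a 2-counter automaton followed by the automaton-to-equations translation of \cref{sub:automatatosystems}, whereas you encode a Minsky machine directly as equations; the two are interchangeable), and duality is reduced to equivalence via the effectively constructible canonical dual of \cref{lem:typedualityexistence} (the paper dualises the first argument, you dualise the second; by involutivity these are the same reduction). Where you genuinely diverge is the equivalence part. The paper reduces from equality of two decidable languages $E,F\subseteq\nbb$: it defines $\typec{T_E}$ as an infinite alternation of inputs and outputs of $\End$ whose $n$-th polarity records whether $n\in E$, gets well-formedness for free from the 2-counter case of \cref{thm:equivalencetypesautomata} because $\lcal(\typec{T_E})$ is decidable, and concludes $E=F$ iff $\isequivt{T_E}{T_F}$. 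You instead reduce from the halting problem by instrumenting the machine simulation to emit $\OUTn\End$ at every step and a fresh label on halting, and compare against $\tloop$ via \cref{prop:typeequivalence}. Both routes work and both deliver the stronger statement that equivalence is undecidable even on inputs already known to be well-formed (the paper also records the weaker one-line observation that equivalence has contractiveness premises). The paper's route is shorter because it leans entirely on the already-proved characterisation theorem and never has to argue that an explicit equation system is deterministic, total and contractive in both the halting and non-halting cases; yours avoids invoking that characterisation but shifts exactly that verification burden onto the encoding, which you correctly identify as the delicate (though routine) step.
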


\begin{proof}
  For $\istypet T$ we start from the following undecidable
  problem (essentially, the halting problem): given a description of a one-tape
  Turing machine $M$, and starting from an empty tape, determine whether the
  machine reaches a given state $q$. We reduce from this problem into the
  problem of deciding whether a given type identifier is contractive. Apply the
  constructions in Hopcroft and Ullman~\cite[Chapter 7]{hopcroftullman:1979} (already mentioned in
  our \cref{thm:normalformautomaton}) to convert $M$ into a 2-counter automaton
  $A$. The construction yields an initial configuration $c_0$ and a final state
  $q'$ in $A$ such that $q'$ is reachable from $c_0$ by $\varepsilon$-moves iff
  the machine $M$ reaches $q$. Now apply the construction at the end of
  \cref{sub:automatatosystems} to construct the associated system of 2-counter
  equations, but making the replacement $\iseqt{\CALL {X_{q'}}{N,M}}{\End}$ for all
  equations corresponding to state $q'$. Let $\TT=\CALLT{X_0}{n_0,m_0}$ be the type
  identifier corresponding to configuration $c_0$. We get that $\istypet T$
  iff $\iscontrt T$ iff $q'$ is reachable from $c$ by $\varepsilon$-moves iff
  the machine $M$ reaches $q$, concluding the reduction.

  For $\isequivt TU$ we can simply observe that type equivalence builds on contractivity (\cf the 2-counter
  type formation rule correspondent to rule \wid, \cref{fig:recursive-abbr}). 
  Thus, to decide type equivalence one must decide contractiveness, which we have just shown to be undecidable. Alternatively, we can show that $\isequivt TU$ is undecidable without resorting to the undecidability of $\iscontrt T$ (which suggests that the problem is `harder' than contractiveness). To do that, we start from the following undecidable
  problem: given two decidable languages $E$, $F$, determine whether $E=F$ (this
  is undecidable even for context-free languages, as shown by
  Hopcroft and Ullman~\cite{hopcroftullman:1979}). Without loss fix a computable encoding between
  the words in the language and natural numbers, so that we can assume that
  $E,F\subseteq \nbb$. Next, consider the types $\typec{T_E}, \typec{T_F}$ given
  by $\typec{T_E} = \typec{\sharp_0\End.\sharp_1\End.\sharp_2\End\ldots}$, where $\sharp_n$ is
  either $?$ if $n\in E$ or $!$ if $n\not\in E$, and similarly for
  $\typec{T_F}$. Since $E,F$ are decidable, so are
  $\lcal(\typec{T_E}), \lcal(\typec{T_F})$; thus, by
  \cref{thm:equivalencetypesautomata}, $\istypet {T_E}$ and $\istypet {T_F}$.
  Observing that $E=F$ iff $\isequivt {T_E}{T_F}$ concludes the reduction.

  For $\isdualt TU$ we reduce from type equivalence, noting
  that $\isdualt TU$ iff $\isequivt {\overline{T}}U$ where $\typec{\dual{T}}$ is
  the type constructed from $\TT$ according to the proof of
  \cref{lem:typedualityexistence}.
\end{proof}

%%% Local Variables:
%%% mode: latex
%%% TeX-master: "main"
%%% End:

% !TeX root = main.tex
\section{Related work}
\label{sec:related}

The first papers on session types by Honda \cite{DBLP:conf/concur/Honda93} and Takeuchi \etal
\cite{DBLP:conf/parle/TakeuchiHK94} feature finite types only. Recursive types
were introduced later \cite{DBLP:conf/esop/HondaVK98} using $\mu$-notation. Gay and Hole
\cite{DBLP:journals/acta/GayH05} introduce algorithms for deciding duality and
subtyping of finite-state session types, based on bisimulation. Much of the
literature on session types, surveyed by Hüttel \etal
\cite{DBLP:journals/csur/HuttelLVCCDMPRT16}, uses the same approach. The
natural decision algorithms for duality and subtyping presented by Gay and
Hole were shown to be exponential in the
size of the types by Lange and Yoshida~\cite{DBLP:conf/tacas/LangeY16}, due to reliance on syntactic unfolding. Our polytime
complexity for recursive type equivalence follows from the equivalence algorithm for finite-state automata by Hopcroft and Karp~\cite{hopcroftkarp:1971}, and thus has quadratic complexity in the description size, which is an improvement on that of Gay and Hole. Lange and Yoshida use an
automata-based algorithm to also achieve quadratic complexity for checking
subtyping.%, which is a more general problem.

We use a coinductive formulation of infinite session types. This approach has
some connections with the work of Keizer \etal~\cite{DBLP:conf/esop/KeizerB021} who
present session types as states of coalgebras. Their types are restricted to
finite-state recursive types, but they do address subtyping and non-linear
types, two notions that we do not take into consideration. Our coinductive
presentation avoids explicitly building coalgebras, and follows 
Gay \etal~\cite{DBLP:journals/corr/abs-2004-01322}, solving problems with duality in the
presence of recursive
types~\cite{DBLP:journals/corr/BernardiH13,DBLP:journals/corr/abs-2004-01322,DBLP:conf/icfp/LindleyM16}.

This paper does not address the problem of deciding subtyping, but the panorama
is not promising. Subtyping is known to be decidable for recursive types
$\typesr$ \cite{DBLP:journals/acta/GayH05} %,DBLP:conf/birthday/Gay16}
and undecidable for context-free types $\typescf$
\cite{DBLP:journals/toplas/Padovani19} or nested types with arity at
most one $\typesn^1$ \cite{DBLP:journals/corr/abs-2103-15193}, hence for
pushdown types with one type constructor $\typesp^1$ (\cref{thm:inclusions}).
The undecidability proof of the subtyping problem for context-free session
types reduces from the inclusion problem for simple deterministic languages, which was shown to be undecidable by Friedman~\cite{DBLP:journals/tcs/Friedman76}.
That for nested session types reduces from the inclusion problem for Basic Process
Algebra \cite{DBLP:conf/rex/BergstraK88}, which was shown to be undecidable by
Groote and Hüttel~\cite{DBLP:journals/iandc/GrooteH94}.
Given that 1-counter types $\typeso$ and pushdown types with one type
constructor $\typesp^ 1$ are incomparable (\cref{thm:inclusions}), the problem
of subtyping for 1-counter types remains open.

Dependent session types have been studied in several forms, for binary session types \cite{DBLP:journals/pacmpl/ThiemannV20,DBLP:conf/ppdp/ToninhoCP11}, for multi-party session types \cite{DBLP:journals/corr/abs-1208-6483,DBLP:journals/corr/abs-1904-01288,DBLP:conf/fossacs/YoshidaDBH10} and for polymorphic, nested session types \cite{DBLP:conf/esop/DasDMP21}.
Although our parameterised type definitions have some similarities with definitions in some dependently typed systems, we do not support the connection between values in messages and parameters in types, and we have not yet studied how the types that can be expressed in dependent systems fit into our hierarchy.

Connections between multiparty session types and communicating finite-state
automata have been explored by Deniélou and Yoshida~\cite{DBLP:conf/esop/DenielouY12} but the
investigation has not been extended to other classes of automata.

Solomon~\cite{DBLP:conf/popl/Solomon78} studies the connection between inductive type
equality for nested types and language equality for DPDAs and shows that the
equivalence problem for nested types is as hard as the equivalence problem for
DPDAs, an open problem at the time. We follow a similar approach but take type
equivalence coinductively, as a bisimulation, rather than as a problem of
language equivalence.

Many of the main results in this paper borrow from the theory of automata,
developed in the mid-20th century. Here our standard reference is the book by
Hopcroft and Ullman~\cite{hopcroftullman:1979}, where the notions of finite-state automata,
pushdown automata, and counter automata can be found. 1-counter automata were
studied in detail in Valiant's PhD thesis
\cite{valiant:1973:phdthesis}. To prove the equivalence between types and
automata, we need to convert automata into equivalent ones satisfying certain
properties; similar techniques have appeared in Kao \etal~\cite{kaoetal:2009} and Valiant and Paterson~\cite{valiantpaterson:1975:onecounterautomata}. Our
proofs of decidability of type equivalence make use of the corresponding results
for automata
\cite{DBLP:conf/stoc/BohmGJ13,hopcroftkarp:1971,rabinscott:1959,senizergues:1997:equivalencedpda,senizergues:2001:decidabilitycompleteformalsystems,valiantpaterson:1975:onecounterautomata};
we specifically mention Sénizergues' impressive result on the decidability of
equivalence for deterministic pushdown automata
\cite{senizergues:2001:decidabilitycompleteformalsystems}, a work which granted
him the Gödel Prize in 2002. Finally, the strict hierarchy results use textbook
pumping lemmas for regular languages (due to Rabin and Scott~\cite{rabinscott:1959}) and
context-free languages (due to
Bar-Hillel \etal \cite{barhilleletal:1961:formalpropertiesgrammars} and
Kreowski~\cite{kreowski:1978:pumpinglemma}), as well as a somewhat less known result for
1-counter automata (due to Boasson~\cite{boasson:1973}).

%%% Local Variables:
%%% mode: latex
%%% TeX-master: "main"
%%% End:

% !TeX root = main.tex
\section{Conclusion}
\label{sec:conclusion}

We introduce different classes of session types, some new, others from the
literature, under a uniform framework and place them in an hierarchy. We further
study different type-related problems---formation, equivalence and duality---and
show that these relations are all decidable up to and including pushdown types.

Much remains to be done. From the point of view of programming languages, one
should investigate whether decidability results translate into algorithms that
may be incorporated in compilers.
Even if subtyping is known to be undecidable for most systems ``above'' that of
recursive types, the problem remains open for 1-counter types, an interesting
avenue for further investigation.
Our study of classes of infinite types may have applications beyond session
types. One promising direction is that of non regular datatypes for functional
programming (or polymorphic recursion schemes
\cite{DBLP:conf/programm/Mycroft84}), such as nested datatypes
\cite{DBLP:conf/mpc/BirdM98}.

% We have shown that process typing is preserved under reduction, but
We have not
addressed the decidability of the type checking problem. % Taking the
% $\pi$-calculus or mild extensions of the $\lambda$-calculus for the term
% language, 
Type checking is known to be decidable for finite types, recursive, context-free
and nested session types.
%
% For any of the systems studied in this paper (1-counter and pushdown systems, in
% particular) one could envisage process calculi equipped with equations similar
% to those present at type level. Process contractivity should be similar to type
% contractiviy, hence decidable in polynomial time. Given that type equivalence
% and duality have been
% addressed, % before embarking on type checking proper we must
% % address process contractivity, 
% there remains the problem of type checking proper. 
%
Given that type checking for nested session types is incorporated in the RAST language
\cite{DBLP:conf/esop/DasDMP21}, a natural first step would be to investigate how to translate 1-counter and pushdown processes into that language.

% \begin{table}
% \begin{tabular}{c|ccc|c}
% Type class & Example & Contractiveness & Type equivalence / duality & Language model\\
% \hline
% Recursive & $\tloop$ & Polytime & Polytime & Regular languages\\
% 1-counter & $\tcounter$ & Polytime & Decidable & 1-counter automata\\
% Pushdown & $\tstack$ & Polytime & Decidable & DCFL\\
% 2-counter & $\titer$ & Undecidable & Undecidable & Decidable languages
% \end{tabular}
% \caption{Summary of the main results. The type classes form a strict hierarchy with each class lying strictly between finite types and arbitrary infinite types (\cref{sub:hierarchy}). The results for recursive types can be derived from \cite{DBLP:journals/iandc/Vasconcelos12}. The positive results for contractiveness are in \cref{sub:contractiveness}. The positive results for type equivalence are in \cref{sub:typeequivalence}. The positive results for type duality are in \cref{sub:typeduality}. The equivalences between type classes and language models are in \cref{sub:equivalencetypesautomata}. The negative results for 2-counter types are in \cref{sub:undecidability}.\label{table:mainresults}}
% \end{table}

%%% Local Variables:
%%% mode: latex
%%% TeX-master: "main"
%%% End:

% \paragraph{Acknowledgements}
% \begin{sloppypar}
%   Support for this research was provided by the Fundação para a Ciência e a
%   Tecnologia through project SafeSessions (PTDC/CCI-COM/6453/2020), and the
%   LASIGE Research Unit (UIDB/00408/2020 and UIDP/00408/2020).
% \end{sloppypar}

%% Bibliography
\bibliographystyle{splncs04}
\bibliography{references}

% \pagebreak

%% Appendix
\appendix
\section*{Appendix}
% !TeX root = main.tex

\section{Proof of \texorpdfstring{$\text{\cref{lem:embedding}}$}{Lemma 2.1} (Embedding from context-free to infinite session types)}
\label{sub:proof-embedding}

To prove the embedding theorem, we use the classical coinduction principle for set-based
coalgebras % and the fact that bisimilarity is the largest bisimulation
\cite{DBLP:journals/mscs/KozenS17}.

\begingroup
% Highly dangerous code: duplicated statement
\def\thetheorem{\ref{lem:embedding}}
\begin{theorem}[Embedding]~
\begin{enumerate}
  \item\label{lem:embed-1} If $\embeds TU$, then $\istypec T$ and $\istypei U$.
  \item\label{lem:embed-2} If $\istypec T$, then there exists $\UT$ with $\embeds TU$.
  \item\label{lem:embed-3} Suppose $\embeds TU$ and $\embeds VW$. Then $\isequiv TV$ iff $\isequiv UW$.
\end{enumerate}
\end{theorem}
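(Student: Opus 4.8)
The plan is to prove all three parts by coinduction, in each case exhibiting a relation that is backward closed under the relevant coinductively-interpreted rules and invoking the coinduction principle for set-based coalgebras. The single tool driving every argument is a \emph{head-normalization} observation: because a context-free type $\TT$ is contractive, any derivation of $\embeds TU$ consists of finitely many \emph{administrative} steps (the rules \embid, \embsemiskip, \embsemiid, \embsemisemi, none of which exposes a constructor of $\UT$) followed by exactly one \emph{constructor-revealing} step (\embskip, \embmsg, \embchoice, \embsemimsg, \embsemichoice). I would isolate this as a lemma proved by induction on the derivation of $\iscontrt T$ (rules \cskip, \cmsg, \cchoice, \csemid, \csemind, \ctid): every contractive $\TT$ has a well-defined \emph{head} $\mathrm{hd}(\TT)$, which is either $\End$, a message $\sharp$ with a data type and a continuation type, or a choice over some label set $L$, whose component types are again context-free types; and $\embeds TU$ forces $\UT$ to carry that same head, its components being the embeddings of the components of $\mathrm{hd}(\TT)$.

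First I would dispatch Parts 1 and 2. For Part 1, take $S=\{\UT\mid\exists\TT.\ \embeds TU\}$; head-normalization shows $S$ is backward closed under \wend, \wmsg, \wchoice, so every such $\UT$ is an infinite type, and a symmetric (routine) reading of the left projection, using the $\iscontrt{}$ premises carried by \embid and \embsemiid, shows each $\TT$ is a well-formed context-free type. For Part 2, I would build $\UT$ from $\TT$ by guarded corecursion: head-normalize $\TT$, emit the corresponding top constructor of $\UT$, and corecursively embed the component continuations; productivity is exactly the finiteness of head-normalization, and $\embeds TU$ then holds by construction. In coalgebraic terms, head-normalization equips contractive context-free types with a coalgebra structure for the functor whose final coalgebra is the space of infinite types, Part 2 is the existence of the unique morphism into that final coalgebra, and Parts 1 and 3 say this morphism is type-valued and fully abstract.

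Part 3 is the substance. The bridging lemma I would prove is a \emph{head characterization} of context-free equivalence: $\isequiv TV$ holds iff $\mathrm{hd}(\TT)$ and $\mathrm{hd}(\VT)$ carry the same top symbol and their components are pairwise context-free equivalent. Soundness (left to right) is the delicate direction: I would show that $\isequiv TV$ is preserved by the administrative simplifications feeding head-normalization, and then invert the constructor-revealing equivalence rules to read off head agreement. Completeness (right to left) is a coinduction that reassembles a context-free equivalence derivation from matching heads. The analogous and much easier characterization for infinite types is immediate from \eqend, \eqmsg, \eqchoice (equivalently \cref{prop:typeequivalence}). With both characterizations in hand, the forward direction of Part 3 follows by coinduction on $R=\{(\UT,\WT)\mid\exists\TT,\VT.\ \embeds TU,\ \embeds VW,\ \isequiv TV\}$: given a pair in $R$, the head characterization makes the heads of $\TT,\VT$ agree and their components equivalent, head-normalization transports this through the two embeddings to the heads of $\UT,\WT$, so $R$ is backward closed under \eqend, \eqmsg, \eqchoice. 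The reverse direction is the mirror-image coinduction on $R'=\{(\TT,\VT)\mid\exists\UT,\WT.\ \embeds TU,\ \embeds VW,\ \isequiv UW\}$.

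I expect the main obstacle to be the soundness half of the head-characterization lemma, \ie showing that the coinductive context-free equivalence respects head-normalization. The difficulty is the asymmetry and multiplicity of the sequencing rules (\eqneutl, \eqmsgskipl, \eqmsgsemil, \eqdistl, \eqassocl, \eqidsemil and their omitted right versions): the two sides $\TT$ and $\VT$ may require different administrative rewrites before their heads line up, and because the equivalence is coinductive one cannot freely appeal to transitivity while normalizing. I would control this by an induction on the combined size of the two head-normalization derivations (finite by contractiveness), performing inversion at each administrative step on which equivalence rule could justify $\isequiv TV$. The distributivity rule \eqdistl, which rewrites a leading choice-then-sequence into a choice of sequences, is the case needing the most care, since it reshapes what sits under the head.
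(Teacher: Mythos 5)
Your proposal is correct in substance but organizes the argument quite differently from the paper. The paper proves all three parts by direct coinduction on the embedding derivations themselves: Part 1 by case analysis on the last rule of $\embeds TU$, Part 2 on the structure of $\istypec T$, and Part 3 by a simultaneous coinduction over the pair of derivations $\embeds TU$ and $\embeds VW$ (eighty-one combinations of final rules, with inversion on the $\isequiv TV$ derivation inside each case, and ad hoc ``chains of equivalences'' to reassemble derivations in the converse direction). You instead factor everything through two explicit lemmas --- head-normalization of contractive context-free types and a head characterization of $\teq$ on context-free types --- and then run clean backward-closure coinductions on the relations $R$ and $R'$. This buys modularity and makes visible the real content (that contractive context-free types carry a coalgebra structure and the embedding is the anamorphism into infinite types), at the cost of having to prove the head-characterization soundness lemma, which absorbs essentially all of the case analysis the paper distributes over its eighty-one cases. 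One concrete repair you should make: your stated induction measure for that lemma, the combined size of the two head-normalization derivations, does \emph{not} strictly decrease under inversion of \eqdistl, since the derivation of $\embeds{\semit\choicet U}{\,\cdot\,}$ via \embsemichoice and the derivation of $\embeds{\choice\recordf \ell {\semit{T_\ell}{U}} L}{\,\cdot\,}$ via \embchoice have the same premises and hence the same size. You need either a lexicographic refinement (e.g.\ pairing the size with the number of sides whose top constructor is not yet revealed) or the observation that after \eqdistl no further left rule is applicable, so \eqdistl and its right version each occur at most once before the congruence rule. With that patch, and a nested induction on $\isdone T$ to discharge the administrative traversal of terminated left components of $\semit TU$, your route goes through.
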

\addtocounter{theorem}{-1}
\endgroup

\begin{proof}
For \cref{lem:embed-1}, we proceed coinductively on the structure of the proof of $\embeds TU$. We illustrate some relevant cases:
\begin{itemize}
  \item Suppose a proof for $\embeds TU$ ends with rule \embskip. Then $\TT$ is $\Skip$ and $\UT$ is $\End$. We know that $\istypec\Skip$ by \wskip and $\istypei\End$ by \wend.
  \item Suppose a proof for $\embeds TU$ ends with rule \embid. Then
    $\TT$ is $\XT$ with $\iseqt X{T'}$ and $\iscontrt {T'}$ and $\embeds{T'}U$. By coinduction, $\istypec{T'}$ and $\istypei U$. By rule \wid, $\istypec X$ as well.
  \item Suppose a proof for $\embeds TU$ ends with rule \embsemichoice. Then $\TT$ is $\semit{\choice\recordf \ell {T_\ell} L}{T'}$ and $\UT$ is $\choice\recordt \ell U L$. We also have that $\embeds{\semit{T_\ell}{T'}}{U_\ell}$ for all $\ell\in L$. By coinduction, $\istypec{\semit{T_\ell}{T'}}$ and $\istypei{U_\ell}$ for all $\ell\in L$. By rule \wchoice, $\istypei{\choice\recordt \ell U L}$ as well. 
  Moreover, the proof for $\istypec{\semit{T_\ell}{T'}}$ must end with rule $\wsemi$, so that $\istypec{T_\ell}$ for all $\ell\in L$ and $\istypec{T'}$. Therefore, by rules \wchoice and \wsemi, $\istypec{\semit{\choice\recordf \ell {T_\ell} L}{T'}}$ as well.
  \item Suppose a proof for $\embeds TU$ ends with rule \embsemisemi. Then $\TT$ is $\semit{(\semit{T_1}{T_2})}{T_3}$ and we have $\embeds {\semit{T_1}{(\semit{T_2}{T_3})}}U$. 
  By coinduction, $\istypec{\semit{T_1}{(\semit{T_2}{T_3})}}$ and $\istypei U$. 
  The proof for $\istypec{\semit{T_1}{(\semit{T_2}{T_3})}}$ must end with rule $\wsemi$, so that $\istypec{T_1}$ and $\istypec{\semit{T_2}{T_3}}$. 
  Similarly, we have that $\istypec{T_2}$ and $\istypec{T_3}$. By rule \wsemi, $\istypec{\semit{(\semit{T_1}{T_2})}{T_3}}$ as well.
\end{itemize}

For \cref{lem:embed-2}, we proceed coinductively on the structure of the proof of $\istypec T$. We illustrate some relevant cases.

\begin{itemize}
  \item Suppose $\TT$ is $\MSGn{T'}$. Then $\istypec{T'}$ and by coinduction, there exists $\typec{U'}$ with $\embeds{T'}{U'}$. Then, by rule \embmsg, we have $\embeds {\MSGn{T'}}{\MSG{U'}\End}$.
  \item Suppose $\TT$ is $\XT$, with $\iseqt X {T'}$ and $\iscontrt{T'}$ and $\istypec{T'}$. By coinduction, there exists $\UT$ with $\embeds {T'}U$. By rule \embid, we then have $\embeds XU$.
  \item Suppose $\TT$ is $\semit{T_1}{T_2}$. Here we have $\istypec{T_1}$, and again we proceed by coinduction on that proof (\ie $\typec{T_1}$ is either $\Skip$, message passing, choice, a variable, or sequential composition). Suppose $\typec{T_1}$ is $\choice\recordt \ell T L$. For each $\ell\in L$, we have $\istypec{T_\ell}$ and also (due to rule \embskip) $\istypec{\semit{T_\ell}{T_2}}$. By coinduction, there exists $\typec{U_\ell}$ such that $\embeds{\semit{T_\ell}{T_2}}{U_\ell}$ for each $\ell\in L$. We then have by rule \embsemichoice that $\embeds{\semit{\choice\recordt \ell T L}{T_2}}{\choice\recordt \ell U L}$.
  \item Suppose $\TT$ is $\semit{(\semit{T_1}{T_2})}{T_3}$. We have that $\istypec{T_1}$, $\istypec{T_2}$ and $\istypec{T_3}$, so that also $\istypec{\semit{T_1}{(\semit{T_2}{T_3})}}$. Taking $\UT$ such that $\embeds{\semit{T_1}{(\semit{T_2}{T_3})}}{U}$, we have by rule \embsemisemi that $\embeds{\semit{(\semit{T_1}{T_2})}{T_3}}U$.
\end{itemize}

For \cref{lem:embed-3}, we proceed coinductively on the structure of the proofs for $\embeds TU$ and $\embeds VW$. Note that there are nine possible rules for each proof, so there are eighty-one cases in total. We illustrate some relevant cases.

\begin{itemize}
  \item Suppose $\TT$ is $\semit{\Skip}{T'}$ and $\VT$ is $\Skip$. The proof for $\embeds TU$ must have ended with rule \embsemiskip, so that we have $\embeds {T'}U$. The proof for $\embeds VW$ must have ended with rule \embskip, so that $\WT$ is $\End$. In the forward direction, suppose $\isequiv TV$. Given the structure of $\TT$ and $\VT$, its proof must have ended with rule \eqneutronel, and so $\isequiv{T'}V$. Then, by coinduction, we get $\isequiv UW$. In the converse direction, suppose $\isequiv UW$. By coinduction, we have $\isequiv {T'}{V}$. Then, by rule \eqneutronel we get $\isequiv TV$.

  \item Suppose $\TT$ is $\semit{\choice\recordt \ell T L}{T'}$ and $\VT$ is $\semit{(\semit{V_1}{V_2})}{V_3}$. The proof for $\embeds TU$ must have ended with rule \embsemichoice, so that $\UT$ is $\choice\recordt \ell U L$ with $\embeds{\semit{T_\ell}{T'}}{U_\ell}$ for each $\ell\in L$. By rule \embchoice, this means that \linebreak$\embeds {\choice\recordf \ell {\semit{T_\ell}{T'}} L} {\choice\recordt \ell U L}$. The proof for $\embeds VW$ must have ended with rule \embsemisemi, so we must have $\embeds{\semit{V_1}{(\semit{V_2}{V_3})}}W$.

  In the forward direction, suppose $\isequiv TV$. Given the structure of $\TT$ and $\VT$, we must have ended that proof with either rule \eqsemi, or rules \eqdistl and \eqassocr. In the case that rule \eqsemi was used, we would get \linebreak$\isequiv {\choice\recordt \ell T L}{\semit{V_1}{V_2}}$ and $\isequiv {T'}{V_3}$. By a coinductive argument, we can derive that $\isequiv{\choice\recordf \ell {\semit{T_\ell}{T'}}L}{\semit{V_1}{(\semit{V_2}{V_3})}}$. On the other hand, if we used rules \eqdistl and \eqassocr, we would also arrive at $\isequiv{\choice\recordf \ell {\semit{T_\ell}{T'}}L}{\semit{V_1}{(\semit{V_2}{V_3})}}$. In either case, we would conclude by coinduction that $\isequiv{\choice\recordt \ell U L}W$, that is, $\isequiv UW$.

  In the converse direction, suppose $\isequiv UW$. Given the structure of $\UT$, we must have ended that proof with rule \eqchoice, which means that $\WT$ is $\choice\recordt \ell W L$ with $\isequiv{U_\ell}{W_\ell}$ for each $\ell\in L$. Thus $\embeds{\semit{V_1}{(\semit{V_2}{V_3})}}{\choice\recordt \ell W L}$. A coinductive argument shows that in this situation, there exist $\typec{V_\ell}$ for $\ell\in L$ with $\isequiv{\semit{V_1}{(\semit{V_2}{V_3})}}{\choice\recordt \ell V L}$ and $\embeds{V_\ell}{W_\ell}$. By coinduction, we then conclude that $\isequiv{\semit{T_\ell}{T'}}{V_\ell}$ for all $\ell\in L$. Then, by rule \eqchoice we have $\isequiv{\choice\recordf \ell{\semit{T_\ell}{T'}}L}{\choice\recordt \ell V L}$. We then have the chain of equivalences
  $$\semit{\choice\recordt \ell T L}{T'}\simeq \typec{\choice\recordf \ell{\semit{T_\ell}{T'}}L} \simeq \choice\recordt \ell V L \simeq \semit{V_1}{(\semit{V_2}{V_3})} \simeq \semit{(\semit{V_1}{V_2})}{V_3},$$
  so that $\isequiv TV$ as desired.

  \item Suppose $\TT$ is $\semit{\XT}{T''}$ and $\VT$ is $\YT$. The proof for $\embeds TU$ must have ended with rule \embsemiid, so we must have $\iseqt X {T'}$ and $\iscontrt{T'}$ and $\embeds{\semit{T'}{T''}}U$ for some $\typec{T'}$. The proof for $\embeds VW$ must have ended with rule \embid, so we must have $\iseqt Y {V'}$ and $\iscontrt{V'}$ and $\embeds{V'}W$ for some $\typec{V'}$.

  In the forward direction, suppose $\isequiv TV$. By examining the structure of $\TT$ and $\VT$, we have three possibilities for the last rule used. The first possibility is that rule \eqneutrtwol was used (\ie $\typec{T''}$ is $\Skip$). Then $\isequiv {T'}V$. We can prove (by coinduction) that if $\embeds{\semit{T'}{\Skip}}U$ then $\embeds{T'}U$. Therefore by coinduction, we have $\isequiv UW$. The second possibility is that rule \eqidsemil was used. In this case we would have $\isequiv{\semit{T'}{T''}}V$. Again by coinduction, we conclude $\isequiv UW$. The third possibility is that rule \eqidr was used. In this case we would have $\isequiv T{V'}$. Again by coinduction, we conclude that $\isequiv UW$.

  In the converse direction, suppose $\isequiv UW$. By coinduction, this implies that $\isequiv{\semit{T'}{T''}}{V'}$. Applying rules \eqidsemil and \eqidr would then enable us to conclude that $\isequiv TV$ as desired.

  \item Suppose $\TT$ is $\semit{(\semit{T_1}{T_2})}{T_3}$ and $\VT$ is $\semit{(\semit{V_1}{V_2})}{V_3}$. The proof for $\embeds TU$ must have ended with rule \embsemisemi, so we must have $\embeds{\semit{T_1}{(\semit{T_2}{T_3})}}U$. Similarly, we must have $\embeds{\semit{V_1}{(\semit{V_2}{V_3})}}W$.

  In the forward direction, suppose $\isequiv TV$. By examining the structure of $\TT$ and $\VT$, we have five possibilities for the last rule used. The first possibility is that rule \eqneutrtwol was used (\ie $\typec{T_3}$ is $\Skip$). Then $\isequiv{\semit{T_1}{T_2}}V$. We can prove (by coinduction) that $\embeds{\semit{T_1}{T_2}}U$. Therefore by coinduction, we have $\isequiv UW$. The second possibility is that rule \eqneutrtwor was used (\ie $\typec{V_3}$ is $\Skip$). The reasoning is analogous. The third possibility is that rule \eqsemi was used. In this case we would have $\isequiv{\semit{T_1}{T_2}}{\semit{V_1}{V_2}}$ and $\isequiv{T_3}{V_3}$. A coinductive argument shows that these two equivalences imply $\isequiv{\semit{T_1}{(\semit{T_2}{T_3})}}{\semit{V_1}{(\semit{V_2}{V_3})}}$. Therefore by coincudtion, we have $\isequiv UW$. The fourth possibility is that rule \eqassocl was used. In this case we get $\isequiv{\semit{T_1}{(\semit{T_2}{T_3})}}{\semit{(\semit{V_1}{V_2})}{V_3}}$. Again by coinduction, we conclude $\isequiv UW$. The fifth possibility is that rule \eqassocr was used. The reasoning is analogous.

  In the converse direction, suppose $\isequiv UW$. By coinduction, this implies $\isequiv{\semit{T_1}{(\semit{T_2}{T_3})}}{\semit{V_1}{(\semit{V_2}{V_3})}}$. We have the chain of equivalences
  $$\semit{(\semit{T_1}{T_2})}{T_3} \simeq \semit{T_1}{(\semit{T_2}{T_3})} \simeq \semit{V_1}{(\semit{V_2}{V_3})} \simeq \semit{(\semit{V_1}{V_2})}{V_3},$$
  so that $\isequiv TV$ as desired.%\qedhere
\end{itemize}
\end{proof}
% \newcommand{\isequiv}[2]{\isEquiv{#1}{\teq}{#2}}
%%% Local Variables:
%%% mode: latex
%%% TeX-master: "main"
%%% End:

% !TeX root = main.tex

% \input{from-types-to-automata}

%\section{Proof of \texorpdfstring{$\text{\cref{thm:equivalencetypesautomata}}$}{Theorem 3} (Types, traces and automata)}
\section{Proof of \texorpdfstring{$\text{\cref{thm:normalformautomaton}}$}{Theorem 3} (Normal form automata)}
\label{sub:proofs-automatatotypes}

\begingroup
% Highly dangerous code: duplicated statement
\def\thetheorem{\ref{thm:normalformautomaton}}
\begin{theorem}[Normal form automata]~
\begin{itemize}
	\item Any finite-state automaton can be converted into an equivalent normal form automaton.
	\item Any 1-counter automaton can be converted into an equivalent normal form automaton.
	\item Any pushdown automaton can be converted into an equivalent normal form automaton.
	\item Any decidable language is accepted by a 2-counter normal form automaton.
\end{itemize}
\end{theorem}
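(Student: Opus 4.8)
The plan is to treat the 2-counter case differently from the first three: rather than converting a \emph{given} automaton, I would build a normal-form automaton directly from a decider for $L$. Fix a finite alphabet $\Sigma$ and a decidable language $L\subseteq\Sigma^\ast$, and let $M$ be a Turing machine that halts on every input and accepts exactly the words of $L$ (such an $M$ exists since $L$ is decidable). The goal is a deterministic, total 2-counter automaton that is in normal form (\cref{def:normalformautomaton}) and accepts $L$.

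The key difficulty is \emph{immediate acceptance}: the configuration reached immediately after consuming the last symbol of $w$ must already be accepting iff $w\in L$, even though a single transition cannot run $M$. I would resolve this by a look-ahead trick. First describe a high-level multi-counter device $D$ that keeps the word $w$ read so far stored (e.g.\ as a base-$|\Sigma|$ numeral in a dedicated counter) and that, whenever it is ready to read, carries in its finite control the bit $[w\in L]$ together with the vector $(\chi_b)_{b\in\Sigma}$ where $\chi_b=[\,wb\in L\,]$; this vector is finite because $\Sigma$ is finite. The reading move for a symbol $b$ is a single, counter-preserving step that merely consumes $b$ and moves to a state declared accepting iff $\chi_b=1$. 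Then, using $\varepsilon$-moves only, $D$ appends $b$ to the stored word and recomputes the bit and vector for the new word $wb$ by simulating $M$ on each one-symbol extension; since $M$ halts, every such simulation, and hence this whole $\varepsilon$-phase, terminates, after which $D$ returns to a ready state. The initial configuration $c_0$ is set, at construction time, to the ready state for the empty word, with accepting status $[\varepsilon\in L]$ and vector $(\,[b\in L]\,)_{b\in\Sigma}$.

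I would then verify that $D$ is in normal form and accepts $L$. \emph{Guaranteed to read} holds because each inter-symbol $\varepsilon$-phase is a finite simulation of the halting machine $M$, so every $\varepsilon$-path terminates in a reading configuration. \emph{Immediate acceptance} holds because the state entered right after reading the last symbol $b$ of $w=w'b$ is, by construction, accepting iff $\chi_b$, and the ready state for $w'$ stored precisely $\chi_b=[\,w'b\in L\,]=[w\in L]$; this is the first configuration reached after $w$, and the case $w=\varepsilon$ is covered by the choice of $c_0$. Consequently $D$ accepts exactly $L$.

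Finally, I would compile $D$ into a genuine 2-counter automaton using the standard encoding of boundedly many counters into two counters via Gödel numbering (Hopcroft and Ullman~\cite{hopcroftullman:1979}, Chapter~7, already invoked in this theorem). The only care needed is to keep the partition between reading and $\varepsilon$-moves intact: each $\varepsilon$-step of $D$ becomes a finite block of 2-counter $\varepsilon$-steps, while the single reading move of $D$, which touches no counter, is preserved verbatim as a single reading move of the 2-counter automaton. Declaring a simulated state accepting exactly when the corresponding $D$-state is accepting transports both normal-form properties and the accepted language to the 2-counter automaton. The main obstacle throughout is immediate acceptance, which is exactly what the precomputed extension-vector is designed to secure; the compilation step is routine once the read/$\varepsilon$ separation is respected.
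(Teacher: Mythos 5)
Your construction for the fourth bullet is sound and is essentially the paper's own argument for that case: precompute, while in a ``ready'' state for the prefix $w'$, the acceptance bits of all one-symbol extensions $w'b$ by running the decider during the inter-symbol $\varepsilon$-phase, make the reading move itself a single counter-preserving step into a state whose acceptance status is the precomputed bit, and then compile the bounded collection of counters/tapes down to two counters while preserving the reading/$\varepsilon$ separation. (One small point to make explicit: since acceptance of $w$ means \emph{some} configuration reachable after reading $w$ is accepting, you must also declare the $\varepsilon$-phase and ready states non-accepting, or give them the same acceptance bit as the post-read state, so that a word outside $L$ cannot be accepted via a later configuration on the same $\varepsilon$-path.)

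The genuine gap is that the theorem has four bullets and you have proved one. The finite-state case is standard $\varepsilon$-elimination and the pushdown case can be cited from Hopcroft and Ullman, but the 1-counter case is the crux of this theorem and does not follow from anything in your proposal. It does not follow from the pushdown case, because the known normal-form construction for DPDAs enlarges the stack alphabet, so applying it to a one-counter automaton need not return a one-counter automaton. It also cannot be handled by your look-ahead trick, because that trick requires remembering the entire prefix read so far (to simulate the decider on its extensions), and a single counter cannot store the input word, let alone simulate an arbitrary halting computation. The paper's proof for this case is a bespoke analysis: it defines, for each reading state $q$, counter value $n$ and input symbol $a$, the function $F(q,n,a)$ recording whether the ensuing $\varepsilon$-path accepts and which reading configuration (if any) it lands in, proves that for $n$ beyond a computable threshold each section $F(q,\cdot,a)$ is either eventually constant modulo some period or satisfies $F(q,n,a)=F(q,K,a)+(n-K)$, and then uses this finite description to build an equivalent automaton that decides acceptance at the moment of reading. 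Without an argument of this kind (or some substitute), the 1-counter bullet --- and hence the theorem --- remains unproved.
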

\addtocounter{theorem}{-1}
\endgroup

\begin{proof}
  For finite-state automata, there are well-known techniques to convert any automaton into an
  automaton without $\varepsilon$-moves \cite[Section
  2.4]{hopcroftullman:1979}, which is trivially in normal form. For pushdown
  automata, this result is a consequence of a result in Hopcroft and Ullman~\cite[Section 10.3
  and Exercise 10.7]{hopcroftullman:1979}. The remaining two cases have not, to
  the best of our knowledge, been considered in the literature. Note in
  particular that the result for 1-counter automaton does not immediately follow
  from the result for pushdown automaton, since the construction of an
  equivalent pushdown automaton in normal form presented by Hopcroft and Ullman extends the stack alphabet with new symbols.

We begin with the case of decidable languages. Let $L$ be a decidable language. In other words, there is a Turing machine $M$ with two distinguished final states ($q_{\mathrm{accept}}$ and $q_{\mathrm{reject}}$) such that for any word $w$ written in the input tape of $M$,
\begin{center}
if $w\in L$, then $M$ terminates in state $q_{\mathrm{accept}}$; and if $w\not\in L$, then $M$ terminates in state $q_{\mathrm{reject}}$.
\end{center}
By using standard techniques in the theory of Turing machines, we can assume that:
\begin{itemize}
	\item the machine has a read-only input tape, whose head can only move in one direction;
	\item the machine has a single working tape.
\end{itemize}

From this we can construct a new machine $M'$ that `knows' (\ie by a suitable
encoding on its finite control) whether a word $w$ is in $L$ immediately after
reading the last symbol of $w$. The idea is that machine $M'$ stores in the
working tape the contents of the input word $w'$ read thus far; before reading
the next input symbol, $M'$ simulates $M$ for all possible immediate
continuations of $w'$, storing in the finite control which simulations resulted
in acceptance. Given this machine $M'$, we use the construction in Hopcroft and Ullman~\cite[Lemma
7.3]{hopcroftullman:1979} to obtain an equivalent two-stack machine. Because
$M'$ knows whether a word $w$ is in $L$ immediately after reading the last
symbol of $w$, the resulting two-stack machine can be ensured to be in normal
form, \ie it can immediately accept $w$ after reading its last symbol. Finally,
we apply the constructions described in Hopcroft and Ullman~\cite[Lemma 7.4 and Theorem
7.9]{hopcroftullman:1979} to convert this two-stack automaton into an equivalent
four-counter automaton and subsequently a two-counter automaton. All these
constructions essentially simulate a single stack or counter operation by a
sequence of counter operations, and thus they do not interfere with the
semantics of reading moves. In other words, the resulting automata are also
guaranteed to be in normal form.

The only case left is that of one-counter automata, and here the proof is more extensive. Let $A$ be a 1-counter automaton. Our proof will become simpler if we assume that, for every state $q$, $(q,\zerol)$ is a reading mode iff $(q,\succl)$ is a reading mode. $A$ can be converted in this form by creating additional states $q_\varepsilon$ for each state $q$ that can be a reading mode or an $\varepsilon$-mode depending on the value of the counter. Hence, from now on we assume this property of $A$; in particular, we can talk about reading states and $\varepsilon$-states instead of reading modes and $\varepsilon$-modes.

Now observe that each choice of reading state $q$, counter value $n$, and input symbol $a$, defines a unique path from the configuration $(q,n)$ that reads $a$ and either takes $\varepsilon$-moves forever or ends in some reading configuration $(q',n')$. In other words, we can define a function
$$F:Q\times\nbb\times\Sigma\rightarrow\{\mathrm{accept},\mathrm{reject}\}\times(\{\uparrow\}\cup Q\times\nbb)$$
such that $F(q,n,a)$ precisely captures the unique behaviour of the automaton from configuration $(q,n)$ after reading $a$. In particular, the first component of $F(q,n,a)$ is `accept' if the aforementioned path of $\varepsilon$-moves visits some accepting state, and `reject' otherwise. The second component of $F(q,n,a)$ is $\uparrow$ if the aforementioned path is infinite, or $(q',n')$ if it ends in that reading configuration. Notice also that $F$ essentially tells us all we need to know about the automaton, since it specifies how we move from a reading state into the next reading state.

For $(q,n)\in Q\times\nbb$ and $k\in\nbb$, let us use the notation $(q,n)+k$ to denote the configuration $(q,n+k)$. Similarly, when $z$ is either `accept' or `reject', let us use the notation $(z,(q,n))+k$ to denote $(z,(q,n+k))$. The first key idea of the proof is the following characterisation of the sections of $F$.

\begin{claim}
For each reading state $q$ and input symbol $a$, the function $F(q,\cdot,a)$ (with domain $\nbb$) must be one of the following two types:
\begin{itemize}
	\item there exist integers $K,K'$ such that $F(q,n,a)=F(q,n',a)$ for all $n,n'\geq K$ with $n= n'\mod K'$;
	\item there exists an integer $K$ such that $F(q,n,a)=F(q,n',a)+(n-n')$ for all $n\geq n'\geq K$.
\end{itemize}
\end{claim}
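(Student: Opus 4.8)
The plan is to reduce the claim to a structural analysis of the purely $\varepsilon$-move dynamics and then to iterate the (deterministic) positive-counter transition. First I would observe that reading $a$ from $(q,n)$ moves the automaton to a fixed state $q_0$ with counter $n + \delta_a$, for a fixed displacement $\delta_a \in \{-1,0,+1\}$, after which only $\varepsilon$-moves occur. Thus it suffices to analyse the function $G(p, m)$ that, from an $\varepsilon$-configuration $(p,m)$, follows $\varepsilon$-moves and returns both whether an accepting state is visited and the terminal reading configuration (or $\uparrow$); indeed $F(q, n, a)$ equals $G(q_0, n + \delta_a)$ up to OR-ing a fixed accept bit, and this operation preserves both candidate shapes, since a shift in the argument sends the shift type to the shift type and the periodic type to the periodic type.

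The key structural fact is that, while the counter is strictly positive, an $\varepsilon$-move depends only on the current state (the second argument of $\delta$ is $\succl$), so the successor state and the counter increment in $\{-1,0,+1\}$ are state-determined. This defines a functional graph on the finite set of $\varepsilon$-states, and iterating it from $q_0$ either (i) reaches a reading state along a simple path, or (ii) enters an $\varepsilon$-cycle $C$. In case (i), for $m$ large enough that the counter stays positive along this fixed finite path, $G(q_0, m) = (z, (q_r, m + D))$ with $z, q_r, D$ fixed; this is exactly the shift type.

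In case (ii) I would split on the net counter displacement $d$ of one traversal of $C$. If $d \ge 0$, then for $m$ large the counter never returns to $0$ (bounded away from $0$ when $d = 0$, growing when $d > 0$), so the path is infinite and $G(q_0, m) = (z, \uparrow)$ is constant in $m$, i.e.\ the periodic type with period $1$. The remaining case $d < 0$, with $c := -d > 0$, is the crux. Here the loop-entry counter decreases by $c$ per traversal, so the path must eventually reach counter $0$; let $(p^\dagger, 0)$ be the first configuration with counter $0$ (which is well defined, since the counter changes by at most one and so cannot skip $0$). The heart of the argument is that $p^\dagger$ is eventually periodic in $m$ with period $c$: increasing the initial counter by $c$ replaces each loop-entry value $v_j = v_0 - jc$ by the previous one $v_{j-1}$, shifting the descent by exactly one extra traversal of $C$ while leaving unchanged both the loop in which, and the intra-loop step at which, the counter first hits $0$, hence leaving $p^\dagger$ unchanged. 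After $(p^\dagger, 0)$ the continuation is determined by $(p^\dagger, 0)$ alone and is therefore independent of $m$, contributing a single value (a reading configuration of bounded counter, or $\uparrow$). The accept bit is likewise eventually periodic: once at least one full traversal of $C$ has occurred it records the fixed set of states on the pre-period and on $C$, OR-ed with the $m$-independent contribution of the continuation from $(p^\dagger,0)$. Thus $G(q_0, \cdot)$, and with it $F(q, \cdot, a)$, is eventually periodic.

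I expect the main obstacle to be precisely this negative-displacement subcase: making rigorous that $p^\dagger$, and hence the entire post-descent behaviour, is eventually periodic of period $c$. The clean way is the shift-by-one-loop observation above, formalised by writing the loop-entry counter as $v_j = v_0 - jc$ together with the fixed intra-loop offset profile $h_0 = 0, h_1, \dots, h_L = -c$, and checking that the first index pair $(j,i)$ with $v_j + h_i = 0$ has an $i$-component invariant under $v_0 \mapsto v_0 + c$. Care is also needed to confine all of this to sufficiently large $m$ (so that the fixed-length pre-period cannot reach counter $0$ first) and to record the finitely many thresholds $K, K'$ that the claim permits.
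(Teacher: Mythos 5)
Your proof is correct and follows essentially the same route as the paper's: both analyse the deterministic positive-counter $\varepsilon$-dynamics as a functional graph on states, split on whether the path from the post-read state reaches a reading state or enters a cycle of non-negative versus negative net displacement, and in the negative case obtain periodicity with period equal to the per-loop decrement by observing that adding that amount to the counter inserts exactly one extra loop traversal without changing the visited states or the eventual continuation. The paper phrases this last case as inserting $\ell$ extra loops between the first and second occurrence of the repeated $\varepsilon$-state rather than tracking the first zero-hit of the intra-loop offset profile, but the content is identical.
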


To prove this claim, consider the path that starts from state $q$, reading $a$, and continues with $\varepsilon$-moves while taking the branches associated with non-zero counter value.

\begin{itemize}
	\item Suppose this path reaches a reading state $q'$. Then there is a minimal value $K$ such that any configuration $(q,n)$ with $n\geq K$ would follow this path of moves. Hence, the decision between acceptance and rejection would be the same for all such $n$. Moreover, the difference in the counter values at $(q,n)$ and at the end of the path is the same for every such $n\geq K$. Hence, we fit into the second case of the claim.
	\item Suppose this path revisits an $\varepsilon$-state $q'$. Consider the first two occurrences of this state, and the change of the counter value between these two occurrences. If the change is non-negative, \ie the counter value on the second occurrence is not smaller, then all configurations $(q,n)$ with $n$ large enough will lead to infinite looping paths that visit the same set of states. Hence, the decision between acceptance and rejection would be the same, and we fit into the first case of the claim with $K'=1$ (since $n = n' \mod 1$ is trivially satisfied).
	\item Suppose again that this path revisits an $\varepsilon$-state $q'$, but now the change of the counter value between these two occurences is negative. Let $K'$ be the decrease associated with this loop of $\varepsilon$-moves, and let $K$ be a value such that any configuration $(q,n)$ with $n\geq K$ would follow this path of moves up to the second occurrence of $q'$. Then, for any $\ell\geq 0$, the path of moves starting with a configuration of the form $(q,n+\ell K')$ would be identical to the path associated to $(q,n)$, but with an additional intermediate sequence of $\ell$ loops, each of which decrements the counter value by $K'$. Hence, we fit into the first case of the claim.
\end{itemize}

Now that we have our claim proven, we extend the above characterisation in a way that the constants $K,K'$ do not depend on the reading state $q$ or the input symbol $a$. We can simply take $K$ to be the maximum of the corresponding values of $K$ obtained by the claim, and $K'$ to be the lowest common multiple of the corresponding values. Therefore, we have found constants $K,K'$ (that are fixed for a given automaton), such that for every reading state $q$ and input symbol $a$, the function $F(q,\cdot,a)$ fits into one of the above types.

Now that we have a global value of $K'$, we can convert our automaton into an equivalent one, but for which the value of $K'$ may be assumed to be $1$ (\ie the first case of the claim reduces to the statement that $F(q,n,a)=F(q,n',a)$ for all $n,n'\geq K$). To achieve this, we essentially create $K'$ copies of each state of the original automaton; each state is now of the form $(q,k)$, where $q$ encodes the original state and $k$ encodes the current equivalence class modulus $K'$ of the counter value. Any transition that increments or decrements the counter now also moves to the corresponding equivalence class, and for each combination of reading state $q$, input symbol $a$, and equivalence class $k\mod K'$, such that $F(q,\cdot,a)$ is of the first type (according to the original automaton), we change the transition function so that (for the new automaton) $F((q,k),n,a)=F((q,k),n',a)$ for every $n,n'\geq K$.

In summary, we can assume at this stage that our 1-counter automaton $A$ has the following property: there exists a constant $K$ such that, for every reading state $q$ and input symbol $a$, either: $F(q,n,a)=F(q,K,a)$ for every $n\geq K$; or $F(q,n,a)=F(q,K,a)+(n-K)$ for every $n\geq K$. From this assumption, we can now construct our automaton $A'$ in normal form as follows. Essentially, the new automaton $A'$ has a state $q$ for each of the reading states $q$ in $A$. We will create additional states and transitions such that $A'$ postpones the reading of the next symbol, while simulating the computation of $\varepsilon$-moves of $A$.

When reaching a state $q$ that was originally a reading state in $A$, the automaton $A'$ now proceeds by checking whether the counter value $n$ is one of $0,1,\ldots,K-1$, or greater than $K$ (this can be done with a sequence of $K$ states linked by $\varepsilon$-moves). For each case, we can immediately decide whether a given symbol $a$ should lead to an accepting or rejecting state, by looking at the first component of $F(q,\min(n,K),a)$. Moreover, the second component of $F(q,\min(n,K),a)$ is either $\uparrow$ or another configuration $(q',n')$. We can handle the case $\uparrow$ by including a transition to a fresh non-accepting state, for which all transitions are reading self-loop moves. This also makes the automaton guaranteed to read. If, on the other hand, the second component of $F(q,\min(n,K),a)$ is a configuration $(q',n')$, then we can handle this case by adding a sequence of states and $\varepsilon$-transitions that update the counter value accordingly. For each of the finitely many cases where $n<K$, we simply create a sequence of $\varepsilon$-transitions that increment the counter value. For the case that $n\geq K$ and $F(q,\cdot,a)$ is of the first type, we introduce an $\varepsilon$-self-loop that resets the counter to zero, and then a sequence of $\varepsilon$-transitions that end up at $(q',n')$. For the case that $n\geq K$ and $F(q,\cdot,a)$ is of the second type, we create a sequence of $\varepsilon$-transitions that increment the counter by $n'$ and then move to state $q'$. This ensures that the automaton transitions to the configuration $(q',n'+(n-K))$ as desired. Thus, the resulting automaton $A'$ is equivalent to $A$ and is in normal form.
\end{proof}

\section{Proof of \texorpdfstring{$\text{\cref{thm:nstnpstn}}$}{Theorem 8} (Equivalence of pushdown and nested session types)}
\label{sub:contextfreenested}

\begingroup
% Highly dangerous code: duplicated statement
\def\thetheorem{\ref{thm:nstnpstn}}
\begin{theorem}
Let $\TT$ be a nested session type using type constructors of arity at most $n$. Then, there is a pushdown session type representation for $\TT$ using only $n$ type variables.
\end{theorem}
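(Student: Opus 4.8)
The plan is to generalise the chain-shaped encoding of \cref{lem:nst1pst1} (the arity-one case) to a tree-shaped one, using the $n$ pushdown variables as \emph{slot trackers}. I work entirely at the level of generated trees, so that by \cref{prop:typeequivalence} it suffices to exhibit a pushdown system with $n$ variables whose associated type has the same tree as $\TT$. First I would put the nested system into a normal form in which every equation $\iseqt{\CALLNT{Y_c}{\overline\alpha}}{B_c}$ has a right-hand side $B_c$ that is a single type construct ($\End$, a message $\MSGn T$, or a choice $\choicet$), and in which every continuation occurring inside $B_c$ is either a bare parameter $\typec{\alpha_j}$ or a \emph{flat} application $\CALLNT Z{\alpha_{g(1)},\ldots,\alpha_{g(l)}}$ of a constructor to parameters. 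This is achieved as in the normalisation of recursive systems in \cref{sec:systemstoautomata}, by introducing finitely many auxiliary constructors to break apart compound right-hand sides; each auxiliary constructor still has arity at most $n$, and the generated tree is unchanged.

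The pushdown system then uses the $n$ variables $\typec{X^{(1)}},\ldots,\typec{X^{(n)}}$, where the index $i$ records \emph{which parameter slot of the current constructor is being evaluated}; the arity bound $n$ is exactly what guarantees that a single index $i\in\{1,\ldots,n\}$ suffices to name the active slot, so that $n$ variables are enough. The stack alphabet consists of symbols $\typec{\sigma_c}$, one for each (normalised) constructor $\typec{Y_c}$, enriched with the finitely many flat argument terms that occur in the normalised bodies, recording how each slot of $\typec{Y_c}$ is selected from, or built over, the enclosing frame. A stack word $\omega$ thereby encodes an \emph{environment}: a sequence of frames naming constructors together with their slot maps. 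Dually to the forward direction, whose arguments ``store all the ways to empty the stack'', I would define an interpretation $\llbracket\CALLT{X^{(i)}}\omega\rrbracket$ as the closed nested expression obtained by reading $\omega$ as this chain of substitutions and focusing on slot $i$, with the empty-stack base case yielding the constant expressions coming from the arity-zero constructors.

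Next I would read off the pushdown equation for $\CALLT{X^{(i)}}{\sigma_c S}$ directly from $B_c$, case-splitting on its head construct: a message or choice in $B_c$ becomes the corresponding reading move, and each continuation is handled by the dichotomy of the normal form. A flat continuation $\CALLNT Z{\overline\alpha}$ triggers a stack push recording $Z$ and its slot map, mirroring the forward clause $\CALLT{X^{(i)}}{\sigma\sigma' S}\mapsto\cdots$; a bare parameter $\typec{\alpha_j}$ triggers a pop that drops to the $j$-th slot, implemented as a (possibly cascading) sequence of $\varepsilon$-moves that resets the current variable index to the slot selected by the frame's map. Correctness then follows by coinduction on the tree: one shows $\treeof(\CALLT{X^{(i)}}\omega)=\treeof(\llbracket\CALLT{X^{(i)}}\omega\rrbracket)$ for every reachable configuration, so that the pushdown type obtained from the initial configuration is equivalent to $\TT$; together with the forward direction proved above this yields $\typesp^n=\typesn^n$. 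The $\End$, message and choice cases of this coinduction are routine, exactly as in \cref{lem:nst1pst1}.

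The main obstacle is the parameter-resolution case. Unlike the arity-one setting, where resolving the single parameter is a clean pop of the chain, here a parameter may itself have been built as a compound term over deeper frames, so dropping to $\typec{\alpha_j}$ can cascade through several pops before a constructor head is exposed. I must check that this cascade terminates --- which follows from contractiveness of the nested system, guaranteeing a constructor is reached after finitely many unfoldings --- that it lands in the correct configuration, and that no information about the other slots, which may still be needed later along a different branch of the tree, is destroyed. Verifying that the flat argument terms form a genuinely finite stack alphabet and that the slot maps compose correctly down the stack is where the bulk of the bookkeeping lies, and (consistently with the statement that this requires ``significant more effort'') is the technically delicate heart of the argument; the eventual equivalence of the resulting deterministic pushdown model with the nested one can also be cross-checked against \cref{thm:equivalencetypesautomata}.
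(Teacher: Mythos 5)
There is a genuine gap, and it sits in your very first step. The normal form you propose --- every continuation in a body is either a bare parameter $\typec{\alpha_j}$ or a \emph{flat} application $\CALLNT Z{\alpha_{g(1)},\ldots,\alpha_{g(l)}}$ of a constructor to \emph{parameters} --- is not achievable for nested session types, and the rest of your construction leans on it. If every call site only ever passes (projections of) the enclosing frame's formal parameters, then starting from a closed initial type $\CALLNT X{T_1,\ldots,T_k}$ the only closed instantiations reachable by unfolding are constructors applied to tuples drawn from the finitely many closed expressions already present in the initial type and the bodies; the resulting tree has finitely many distinct subtrees and the type is \emph{recursive}. Since $\typesn^1\supsetneq\typesr$ (e.g.\ $\tnested$), no such normal form can exist in general. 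Concretely, the flattening procedure does not terminate: to flatten a continuation like $\CALLT\Xout{\CALLT\Xout\alpha}$ you must introduce $\CALLT W\alpha\Eq\CALLT\Xout{\CALLT\Xout\alpha}$, and turning $W$'s right-hand side into a single type construct forces you to unfold $\Xout$'s body under the substitution $\typec{\alpha}\mapsto\CALLT\Xout\alpha$, which reintroduces nested arguments of strictly \emph{greater} depth (e.g.\ $\CALLT\Xout{\CALLT\Xout{\CALLT\Xout\alpha}}$), and so on forever. This unbounded growth of argument expressions under unfolding is precisely the phenomenon that makes the $n$-ary case hard, and it is exactly what your ``slot maps over parameters'' cannot record.

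The paper's proof keeps your overall architecture --- $n$ pushdown variables acting as slot indices, a push on each constructor call, a (cascading) pop on each parameter occurrence, correctness by coinduction on trees --- but it does \emph{not} flatten. Instead it takes as stack symbols whole nested expressions of depth at most $d$ (the maximal nesting depth occurring in the equations) together with $n$-tuples of such expressions: a symbol in $\E_d$ sits only at the top and names the expression currently being evaluated, while a symbol in $\E_d^n$ records, for each of the $n$ slots of the frame, the (possibly compound, depth-$\le d$) expression filling it relative to the frame below. An arbitrarily deep expression arising during unfolding is thereby decomposed into a stack of bounded-depth pieces, which is what makes the stack alphabet finite; the variable $\typec{X_1}$ does double duty of unfolding the top expression, and $\typec{X_j}$ selects the $j$-th tuple component on a pop. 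If you want to repair your argument, replace the flat normal form by this decomposition: your ``finitely many flat argument terms'' must be the depth-$\le d$ subexpressions occurring in the bodies (not applications to parameters), and your frames must carry $n$-tuples of them. The parameter-resolution cascade you flag then does go through (each pop strictly shrinks the stack, and contractiveness of the resulting system is inherited), but as written your proposal never establishes the finiteness of the stack alphabet nor the handling of compound slot contents, which is the actual content of the theorem.
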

\addtocounter{theorem}{-1}
\endgroup

\begin{proof}
Given \cref{lem:nst1pst1}, we can assume $n\geq 2$. The proof is significantly more elaborate than the proof of \cref{lem:nst1pst1}. The reason is that general nesting of $n$-ary operators gives rise to an evaluation tree, which is more complex than a sequential composition of operators. However, due to the restricted way in which nesting can occur, it is still possible to represent these evaluation trees using a single stack.

Formally, consider a representation of $\TT$ using the type constructors $\typec{X^{(1)}}$, \ldots, $\typec{X^{(m)}}$, where each type constructor has arity at most $n$. Again, without loss of generality we can assume that the arguments in the equation defining a $k$-ary type constructor are $\typec{\alpha_1},\ldots,\typec{\alpha_k}$ in that order. These equations may have arbitrarily nested expressions on their right-hand side, for example we could have an expression like $\CALLT{X^{(2)}}{\CALL{X^{(2)}}{\alpha_3,\alpha_2},\CALL{X^{(2)}}{X^{(3)},\alpha_1}}$. Let $d$ denote the highest depth of any nested expression appearing in any equation of the representation (the previous example has a depth of $2$).
Let $\E_d$ denote all possible nested expressions of depth at most $d$. This can also be seen as the space of rooted trees of depth at most $d$ where internal nodes (having $k$ children, for $0<k\leq n$) are labelled by a type constructor $\typec{X^{(i)}}$ (of arity $k$), and leaves are labelled either by a variable $\alpha_j$ or a type constructor $\typec{X^{(j)}}$ of arity $0$. Since the depth $d$, the maximum arity $n$ and the number of type constructors $m$ are all finite, the number of such expressions is finite (albeit exponentially large; later we will argue that at most polynomially many trees need to be considered).

Our stack alphabet is then defined to be $\Delta=\E_d\cup\E_d^n$, that is, the union of $\E_d$ with the space of $n$-tuples of expressions in $\E_d$, which is again a finite set. As it shall be seen in the proof, the intuition is that a stack symbol $\sigma\in\E_d$ captures the current, top-level expression, and it can appear only at the top of the stack; whereas a stack symbol $\sigma\in\E_d^n$ captures the possible continuations at the leaves of the tree, and it usually appears below the top level of the stack. Moreover, we will use (pushdown) type constructors $\typec{X_1},\ldots,\typec{X_n}$ in our representation, where $\typec{X_i}$ intuitively means that our evaluation continues with the $i$-th element of the tuple $\sigma$. Additionally, $\typec{X_1}$ has the double duty of unfolding the top-level expression.

Formally, we need to define the right-hand side of equation $\CALLT {X_1} {\sigma S}$ for the case that $\sigma\in\E_d$, as well as the right-hand side of equation $\CALLT {X_j} {\sigma S}$ for $j=1,\ldots,n$ and $\sigma\in \E_d^n$. The remaining cases ($\CALLT {X_j} {\sigma S}$ for $j\neq 1$ and $\sigma\in\E_d$, or $\CALLT {X_j} \varepsilon$) are not of concern, as they will not be reached by our construction; for completeness, we could define the right-hand sides of those cases to be $\End$.

Let us start with the case $\CALLT {X_1} {\sigma S}$ with $\sigma\in\E_d$. $\sigma$ is either a variable $\typec{\alpha_j}$ or a type constructor $\typec{X^{(j)}}$ applied with zero or more subexpressions. If $\sigma$ is a variable $\typec{\alpha_j}$, we pop our stack and continue with type variable $\typec{X_j}$. In other words, we have
$$\CALLT {X_1} {\alpha_j S} \Eq \CALLT {X_j} S.$$
Suppose now that $\sigma$ is a $k$-ary type constructor $\typec{X^{(j)}}$ applied to subexpressions $\sigma_1,\ldots,\sigma_k$. The nested session type representation includes an equation $\CALLT {X^{(j)}}{\alpha_1,\ldots,\alpha_k}\Eq \TT$. We will define
$$\CALLT {X_1} {\sigma S} \Eq \typec{\tilde{T}},$$
where $\typec{\tilde{T}}$ is obtained from $\TT$ by performing an appropriate replacement on all nested expressions appearing in $\TT$. Let $\bar{\sigma}=(\sigma_1,\ldots,\sigma_k,\varepsilon,\ldots,\varepsilon)\in\E_d^n$ denote the $n$-tuple whose first $k$ components are the subexpressions $\sigma_1,\ldots,\sigma_k$, and the remaining $(n-k)$ components are empty subexpressions. Then, every (maximal) nested expression $\sigma'$ appearing in $\TT$ is replaced by the type variable $\CALLT {X_1} {\sigma'\bar{\sigma}S}$. In this way, $\typec{\tilde{T}}$ is a valid right-hand side for pushdown session types representations.

The case $\CALLT {X_j} {\sigma S}$ where $j=1,\ldots,n$ and $\sigma\in \E_d^n$ is straightforward: if $\sigma_j$ is the $j$-th compontent of the tuple $\sigma$, then we define
$$\CALLT {X_j} {\sigma S} \Eq \CALLT {X_1} {\sigma_j S}.$$

Finally, suppose the initial type is given by $\TT = \typec\sigma$. Then, in our pushdown representation,\footnote{Here we are assuming that the depth of $\sigma$ is also at most $d$, the maximum depth of any expression appearing in the equational specification. This is without loss, as otherwise we could just apply the same construction but for a depth $d'$ which is the maximum between $d$ and the depth of $\sigma$.} the corresponding initial type is $\CALLT {X_1} {\sigma}$. To finish the proof, we need to show that the representation described above gives rise to the same type $T$. This can be proven coinductively by observing that both unfoldings follow the same type constructs, and that the stack encoding explained above is enough to express the expressions appearing while unfolding the nested session type representation.
\end{proof}

Although the proof above increases the number of equations by an superexponential multiplicative factor (there are $O((m+n)^{n^d})$ expressions in $\E_d$ and $O((m+n)^{n^{d+1}})$ stack symbols), we remark here that the construction can be altered to have only a polynomial overhead. The reasoning is that there are polynomially many expressions appearing in the original nested session type representation (in fact, linearly in the encoding size of the representation); each of the expressions has polynomially many subexpressions (in fact, linearly in its encoding size). We only need to consider the case that the stack symbol is one of these subexpressions, or the tuple corresponding to its direct children. Thus, at most linearly many equations need to be considered for each equation in the original nested session type representation.

We illustrate the above proof with an example of this simulation. Since the simulation creates a large number of equations, we will only show an initial fragment of the unfolding steps, and a subset of relevant equations as we go along. In our example there are four type constructors $\typec{X^{(1)}}$ (arity 3), $\typec{X^{(2)}}$ (arity 2) and $\typec{X^{(3)}}$ (arity 0). Therefore, the maximum arity is $n=3$. Suppose the nested session type equations are

\begin{align*}
\TT&= \CALLT {X^{(1)}}{X^{(3)},X^{(3)},X^{(3)}}\\
\CALLT {X^{(1)}}{\alpha_1,\alpha_2,\alpha_3}&\Eq \typec{\Oplus\{\keyword{stop}:\End, \keyword{go}: \CALL{X^{(2)}}{\CALL {X^{(2)}}{\alpha_3,\alpha_2},\CALL {X^{(2)}}{X^{(3)},\alpha_1}}\}}\\
\CALLT {X^{(2)}}{\alpha_1,\alpha_2}&\Eq \typec{\Oplus\{\keyword{left}:\alpha_1,\keyword{right}:\alpha_2\}} \\
\typec{X^{(3)}}&\Eq \CALLT {X^{(1)}}{X^{(3)},X^{(3)},X^{(3)}}
\end{align*}
One can see that the maximum depth in subexpressions is $d=2$. By looking at the initial type, our pushdown representation begins with\footnote{Pay close attention to the notation: here $\typec{X_1}$ is the type constructor, and $\CALLT {X^{(1)}}{X^{(3)},X^{(3)},X^{(3)}}$ is the stack symbol.}
$$\TT= \CALLT {X_1} {\CALL {X^{(1)}}{X^{(3)},X^{(3)},X^{(3)}}}.$$ Next, applying the technique in the proof of \cref{thm:nstnpstn}, we see that the pushdown representation would have an equation
\begin{align*}
\CALLT {X_1} {\CALL {X^{(1)}}{X^{(3)},X^{(3)},X^{(3)}} \cdot S} &\Eq \typec{\Oplus\{\keyword{stop}:\End,\keyword{go}:\CALL {X_1} {\sigma' \bar{\sigma} S}\}}\\
\text{with }\sigma' &= \CALLT{X^{(2)}}{\CALL{X^{(2)}}{\alpha_3,\alpha_2},\CALL{X^{(2)}}{X^{(3)},\alpha_1}}\\
\text{and }\bar{\sigma} &= (\typec{X^{(3)}},\typec{X^{(3)}},\typec{X^{(3)}})
\end{align*}

Suppose we unfold the type, and take the branch $\keyword{go}$, that is, we take the transition labeled $\Oplus\keyword{go}$. Then, we arrive at type $$\CALLT {X_1} {\CALL{X^{(2)}}{\CALL{X^{(2)}}{\alpha_3,\alpha_2},\CALL{X^{(2)}}{X^{(3)},\alpha_1}} \cdot (X^{(3)},X^{(3)},X^{(3)})}.$$
From here, we continue by looking at one of the equations corresponding to the type constructor $\typec{X^{(2)}}$. Namely, we observe that the pushdown representation must include the equation
\begin{align*}\CALLT {X_1} {\CALL{X^{(2)}}{\CALL{X^{(2)}}{\alpha_3,\alpha_2},\CALL{X^{(2)}}{X^{(4)},\alpha_1}} \cdot S}&\Eq \typec{\Oplus\{\keyword{left}: \CALL {X_1} {\alpha_1 \bar{\sigma} S}, \keyword{right}: \CALL {X_1} {\alpha_2 \bar{\sigma} S}\}}\\
\text{with }\bar{\sigma}&=(\CALLT{X^{(2)}}{\alpha_3,\alpha_2},\CALLT{X^{(2)}}{X^{(3)},\alpha_1},\varepsilon)
\end{align*}

Suppose we unfold the type, taking the transition $\Oplus\keyword{left}$. We would arrive at the type
$$\CALLT {X_1} {\alpha_1\cdot(\CALL {X^{(2)}}{\alpha_3,\alpha_2},\CALL{X^{(2)}}{X^{(3)},\alpha_1},\varepsilon)\cdot(X^{(3)},X^{(3)},X^{(3)})}.$$
From here the next two steps are straightforward: first, we take the equation $\CALLT {X_1} {\alpha_1 S} \Eq \CALLT {X_1} S$, arriving at the type
$$\CALLT {X_1} {(\CALL{X^{(2)}}{\alpha_3,\alpha_2},\CALL{X^{(2)}}{X^{(3)},\alpha_1},\varepsilon)\cdot(X^{(3)},X^{(3)},X^{(3)})}.$$
Afterwards, we take the equation
$$\CALLT {X_1} {(\CALL{X^{(2)}}{\alpha_3,\alpha_2},\CALL{X^{(2)}}{X^{(3)},\alpha_1},\varepsilon)\cdot S} \Eq \CALLT {X_1} {\CALL{X^{(2)}}{\alpha_3,\alpha_2}\cdot S}$$
to arrive at the type
$$\CALLT {X_1} {\CALL{X^{(2)}}{\alpha_3,\alpha_2}\cdot(X^{(3)},X^{(3)},X^{(3)})}.$$
At this point, we again look at one of the equations corresponding to the type constructor $\typec{X^{(2)}}$, namely
\begin{align*}\CALLT {X_1} {\CALL{X^{(2)}}{\alpha_3, \alpha_2}\cdot S} &\Eq \typec{\Oplus\{\keyword{left}:\CALL {X_1} {\alpha_1 \bar{\sigma} S}, \keyword{right}:\CALL {X_1} {\alpha_2\bar{\sigma} S}\}}\\
\text{with }\bar{\sigma}&=(\typec{\alpha_3}, \typec{\alpha_2},\varepsilon)\end{align*}
Suppose this time we take the transition $\Oplus\keyword{right}$, arriving at the type
$$\CALLT {X_1} {\alpha_2\cdot(\alpha_3, \alpha_2,\varepsilon)\cdot(X^{(3)},X^{(3)},X^{(3)})}$$
The next few steps are again straightforward: we move to
$$\CALLT {X_2} {(\alpha_3, \alpha_2,\varepsilon)\cdot(X^{(3)},X^{(3)},X^{(3)})}$$
and then
$$\CALLT {X_1} {\alpha_2\cdot(X^{(3)},X^{(3)},X^{(3)})}$$
and then
$$\CALLT {X_2} {(X^{(3)},X^{(3)},X^{(3)})}$$
and finally
$$\CALLT {X_1} {X^{(3)}}.$$
Let us take one more transition to conclude the example. Here we would look at an equation corresponding to the constructor $\typec{X^{(3)}}$, namely
$$\CALLT {X_1} {X^{(3)}\cdot S}\Eq \CALLT{X_1}{\CALL{X^{(1)}}{X^{(3)},X^{(3)},X^{(3)}}S}$$
which would take us back to
$$\CALLT{X_1}{\CALL{X^{(1)}}{X^{(3)},X^{(3)},X^{(3)}}}.$$
One can observe that the transitions taken in the pushdown session type representation match exactly the transitions valid for the nested session type representation, so that they are equivalent.

%%% Local Variables:
%%% mode: latex
%%% TeX-master: "main"
%%% End:

% !TeX root = main.tex

\section{Proof of \texorpdfstring{$\text{\cref{thm:contractivity}}$}{Theorem 9}
  (Decidability of contractivity)}
\label{sub:contractiveness}

\begingroup
% Highly dangerous code: duplicated statement
\def\thetheorem{\ref{thm:contractivity}}
\begin{theorem}
The following problems are decidable in polynomial time:
\begin{itemize}
\item Given a system $\mathrm{Sys}$ of recursive equations, is $\mathrm{Sys}$ contractive?
\item Given a system $\mathrm{Sys}$ of 1-counter equations, is $\mathrm{Sys}$ contractive?
\item Given a system $\mathrm{Sys}$ of pushdown equations, is $\mathrm{Sys}$ contractive?
\end{itemize}
\end{theorem}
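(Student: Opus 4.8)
The plan is to reduce contractiveness to a purely automata-theoretic property and then decide that property by a fixpoint computation. By \cref{lem:systemtoautomaton}, a system $\mathrm{Sys}$ (of recursive, $1$-counter, or pushdown equations) is contractive if and only if the automaton $A$ built from it in \cref{sec:systemstoautomata} is loop-free in the sense of \cref{def:loopfreeness}, i.e.\ from every configuration the sequence of $\varepsilon$-moves reaches a reading configuration. Since the construction of $A$ is polynomial in the size of $\mathrm{Sys}$, it suffices to show that loop-freeness is decidable in polynomial time for each of the three automata models. I would treat the three cases through the same lens: an $\varepsilon$-move is deterministic, so from each configuration there is at most one $\varepsilon$-successor, and loop-freeness fails exactly when some configuration begins an infinite $\varepsilon$-chain.

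For finite-state automata this is immediate. The $\varepsilon$-transitions define a functional graph on the states (every $\varepsilon$-state has a unique outgoing $\varepsilon$-edge, and reading states are sinks), so an infinite $\varepsilon$-chain exists iff this graph contains a cycle. Cycle detection in a functional graph is linear in $|Q|$, hence polynomial.

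The substance of the proof is the $1$-counter and pushdown cases, where the configurations $(q,n)$ or $(q,\omega)$ are infinitely many. The key observation is that the next $\varepsilon$-move depends only on the current state together with the zero/non-zero indicator (for $1$-counter) or the top stack symbol (for pushdown), and that each move changes the counter by at most one unit or the stack by at most one symbol. I would decide loop-freeness by a saturation (fixpoint) computation of a finite ``descending'' relation: for pushdown automata, the relation $R(p,\sigma,q)$ that holds when there is an $\varepsilon$-path from some configuration $(p,\sigma\omega)$ to $(q,\omega)$ popping $\sigma$ without ever touching $\omega$; for $1$-counter automata, the analogous relation recording that an $\varepsilon$-path decrements the counter by one and returns to a given state. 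This is exactly the ``pop'' relation at the heart of polynomial-time pushdown reachability, and it can be computed by closure under composition in polynomial time. Using $R$ together with the finite $\varepsilon$-transition table, I would classify each pair $(q,\sigma)$ according to whether its $\varepsilon$-chain reaches a reading configuration, pops the top symbol (so that behaviour continues below it), or loops; loop-freeness then reduces to detecting a cycle in a derived finite graph over these classified pairs, including the empty-stack / zero-counter boundary case. All steps are polynomial, and since $1$-counter automata are the single-symbol specialisation of pushdown automata, their case can also be read off from the pushdown construction with tighter bounds.

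The main obstacle I anticipate is precisely the analysis at the regime boundary in the $1$-counter case and at the empty stack in the pushdown case: an $\varepsilon$-cycle that strictly decreases the counter or stack does not by itself witness non-termination, because the chain may cross into the zero-counter (resp.\ empty-stack) regime and take different transitions there, whereas a cycle with non-negative net effect that never forces the counter or stack down does witness an infinite $\varepsilon$-chain. Getting the saturation relation and the subsequent cycle detection to separate these two situations correctly---so that the algorithm reports ``not loop-free'' exactly when a genuine infinite $\varepsilon$-chain exists from some configuration---is the delicate part, and is where I would spend most of the effort. The polynomial-time bound then follows because the saturation touches only polynomially many triples and the derived graph has polynomially many vertices.
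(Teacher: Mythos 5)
Your proposal is correct and follows essentially the same route as the paper: both reduce contractiveness to loop-freeness of the associated automaton via \cref{lem:systemtoautomaton} and then decide loop-freeness in polynomial time by tracking (state, top-symbol) heads, reporting non-contractiveness exactly when a head repeats along an $\varepsilon$-path that never pops below its starting level, and chasing strictly descending paths through a memoised ``pop'' relation down to the empty-stack/zero-counter boundary. The paper packages this as a direct simulation with shortcutting---its lookup table recording, for each $\CALLT X\sigma$, whether and to which state the stack eventually empties is precisely your saturation relation $R$---so the two presentations differ only in organisation, not in substance.
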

\addtocounter{theorem}{-1}
\endgroup

\begin{proof}
It should be clear that the construction in \cref{sec:systemstoautomata} can be implemented in polynomial time, for each of the three equation schemes. \cref{lem:systemtoautomaton} establishes that $\mathrm{Sys}$ is contractive iff the corresponding automaton is loop-free. The decision problem for loop-freeness is well-known to be decidable. For example, Hopcroft and Ullman~\cite[Section 2.4]{hopcroftullman:1979} describe a procedure to convert a finite-state automaton with $\varepsilon$-moves into an equivalent automaton without $\varepsilon$-moves. Similarly, Valiant~\cite[Lemma 2.4]{valiant:1973:phdthesis} shows how to convert a (deterministic) pushdown automaton into an equivalent loop-free automaton (which therefore works also for 1-counter automata). In both cases, the procedures detect infinite sequences of $\varepsilon$-moves if they exist, and thus, they can be used to decide whether a given automaton is loop-free. It can also be established from those proofs that the running time is polynomial in the description of the automaton.

However, it will be useful for \cref{thm:typeformation} to have algorithms with certain desirable properties; hence, we devote the rest of the proof to presenting an algorithm that decides if a system of pushdown equations is contractive in time $\ocal(|Q||\Delta|)$; since recursive equations and 1-counter equations can be seen as special cases, this also implies a polynomial time algorithm for those systems.

Our algorithm works as follows. Let us say that an equation is \emph{trivial} if its right-hand side is a type constructor, e.g.\ $\CALLT X \varepsilon \doteq \CALLT Y \alpha$ or $\CALLT X {\alpha S} \doteq \CALLT Y S$. Suppose the system is not contractive. Then there exists an infinite sequence $\CALLT{X_1}{\omega_1}$, $\CALLT{X_2}{\omega_2}$, \ldots of type identifiers obtained by following trivial equations. By looking at the stack lengths along this sequence, one of two properties must hold: either some stack length is repeated infinitely often, or each stack length value occurs only finitely many times in the sequence.

Suppose some value of the stack length is repeated infinitely often. Let $n$ be the minimum such value (for simplicity, let us assume $n>0$; the argumentation for $n=0$ is essentially the same). There is some order $N$ such that $|\omega_i|\geq n$ for every $i\geq N$. In particular, this implies that all such $\omega_i$ for $i\geq N$ coincide on the bottom $n-1$ stack symbols. Let $\omega$ be the bottom substack of size $n-1$ of these $\omega_i$. Now, consider all the type identifiers $\CALLT {X_i} {\omega_i}$ for which $|\omega_i|=n$ and $i\geq n$, which by assumption occur for infinitely many $i$. Notice that all such $\omega_i$ coincide except possibly on their topmost symbols. Since there are finitely many possible type variables in the system, and finitely many stack symbols, an immediate application of the pigeonhole principle yields that there must be a type constructor $\XT$ and a stack symbol $\sigma$ such that $\typec{X_i} = \XT$ and $\omega_i=\sigma\omega$ for at least two different $i$ (in fact, for infinitely many $i$). This means that, if we were to start a sequence of type identifiers with $\CALLT X \sigma$, we would eventually return to $\CALLT X \sigma$ (as we would never have to observe the contents of the stack below this position), and thus be stuck in an infinite loop. Similarly, if $n=0$ was the minimum such value, then some type identifier $\CALLT X \varepsilon$ is repeated at least twice (in fact, infinitely often), and $\CALLT X \varepsilon$ would accordingly be the start of an infinite sequence.

Now suppose each stack length occurs only finitely many times in the sequence. For each $n$, let $i_n$ denote the last time that the stack length is $n$. Notice that $i_n$ is defined for all $n$ sufficiently large. Via the same pigeonhole principle as before, there must exist some type constructor $\XT$ and stack symbol $\sigma$ such that $\typec{X_{i_n}}=\XT$ and $\sigma$ is the top symbol of $\omega_{i_n}$ for at least two different $n,n'$ (in fact, for infinitely many $n$). Assuming without loss that $n<n'$, and since $i_n$ was the last time that the stack length was $n$, there must be stack words $\omega,\omega'$ of size $n-1$, $n'-n$ respectively such that $\omega_{i_n}=\sigma w$ and $\omega_{i_{n'}}=\sigma\omega'\omega$. Then, if we were to start with $\CALLT X \sigma$, we would eventually arrive at $\CALLT X {\sigma\omega'}$ (without ever having looked at an empty stack) and thus we would also get an infinite sequence of type identifiers.

In conclusion, a system is non-contractive iff one of the following properties hold:
\begin{itemize}
\item there is a variable $\XT$ such that, when starting from $\CALLT X \varepsilon$ and following trivial rewriting rules, one returns to $\CALLT X \varepsilon$;
\item there is a variable $\XT$ and a stack symbol $\sigma$ such that, when starting from $\CALLT X \sigma$ and following trivial rewriting rules, one reaches $\CALLT X {\sigma\omega}$ for some (possibly empty) stack contents $\omega$, without ever observing an empty stack.
\end{itemize}

Next, we show that these properties can be decidable in polynomial time. Here we only sketch the algorithm for the case $\CALLT X \sigma$ as the case $\CALLT X \varepsilon$ is similar. Starting from $\CALLT X \sigma$, we construct a sequence of type identifiers $\CALLT {X_i} {\omega_i}$ by following the rewriting rules, and additionally applying some shortcutting which we describe in a moment. At stage $i$, we have produced the type identifier $\CALLT {X_i} {\omega_i}$; let $\sigma_i$ be the top stack symbol of $\omega_i$. We produce the next term in the sequence as follows.

\begin{itemize}
	\item Suppose $\sigma_i=\varepsilon$, \ie we have reached the empty stack. Then we can immediately end the procedure and exclude $\CALLT X \sigma$ from being the start of an infinite sequence that never observes an empty stack.
	\item Suppose $\CALLT {X_i} {\sigma_i}$ is the left-hand side of a non-trivial equation. Then we can immediately end the procedure, and correctly assert that no infinite sequence of rewriting rules could start with $\CALLT X \sigma$.
	\item Suppose $\CALLT {X_i} {\sigma_i}$ is the left-hand side of a trivial equation (with $\sigma_i\neq\varepsilon$), and that this is the first time in the sequence that we have observed this combination of variable and top stack symbol. Then we continue with $\CALLT {X_{i+1}} {\omega_{i+1}}$ by following the appropriate rewriting rule.
	\item Suppose $\CALLT {X_i} {\sigma_i}$ is the left-hand side of a trivial equation (with $\sigma_i\neq\varepsilon$), but it repeats a previous combination. That is, we have previously seen $\CALLT {X_{i'}} {\omega_{i'}}$ with $\typec{X_{i'}}=\typec{X_i}$ and $\sigma_i$ being the top stack symbol of $\omega_{i'}$. Let $i'$ be the last time this combination was observed. Now suppose that, between $i'$ and $i$, the stack length has never dropped below $|\omega_{i'}|$; in particular, this implies that $|\omega_{i'}|\leq|\omega_i|$. Then, we can immediately end the sequence and correctly assert that $\CALLT {X_i}{\sigma_i}$ (and hence, also $\CALLT X \sigma$) leads to an infinite sequence of rewriting rules.
	\item Suppose again that $\CALLT {X_i} {\sigma_i}$ is the left-hand side of a trivial equation (with $\sigma_i\neq\varepsilon$), whose combination was seen before for the last time at $\CALLT{X_{i'}}{\omega_{i'}}$. But now suppose that, at some point between $i'$ and $i$, the stack length has dropped below $|\omega_{i'}|$. Let $j$ be the very first time it did so, which means that $i'<j\leq i$ and in particular that $\omega_j$ is the stack obtained by popping the top symbol $\sigma$ from $\omega_{i'}$. Now, if we were to follow the rewriting rules from $\CALLT{X_i}{\omega_i}$, we would essentially repeat the same rules from $\CALLT {X_{i'}}{\omega_{i'}}$ to $\CALLT{X_j}{\omega_j}$. Hence, we can shortcut this part and produce the next term $\CALLT{X_{i+1}}{\omega_{i+1}}$ as $\typec{X_{i+1}}=\typec{X_j}$ and $\omega_{i+1}$ as $\omega_i$ without the top symbol $\sigma_i$. Notice that each time we apply this shortcut, the stack length necessarily decreases.
\end{itemize}

Next, we need to show that our procedure eventually ends after polynomially many iterations. We can terminate our procedure either by observing a non-trivial equation, or by observing an empty stack, or by repeating a previous combination of variable and top stack symbol that provably yields an infinite sequence. Notice that, each time a combination of variable and top stack symbol is repeated, either we terminate the procedure (asserting the existence of an infinite sequence) or the next term in the sequence has a smaller stack length. Hence, we can only increase the stack length at most $|Q||\Delta|-1$ times (once for each different combination of variable and top stack symbol, excluding the initial term). Therefore, if after $2|Q||\Delta|$ iterations we have not yet ended the procedure, then we must reach an empty stack. Thus the number of iterations is $\ocal(|Q||\Delta|)$. We also need to argue that each iteration can be done in constant time.
For most cases, this is obvious, as we merely have to lookup the right-hand side of the equation corresponding to the currently known symbol. The only case which is not obvious is when we apply the shorcutting rule. But here, we can use a lookup table that saves, for each combination $\CALLT X \sigma$, whether the sequence starting from $\CALLT X \sigma$ eventually drops to an empty stack, and if so, to which variable does it lead. We can do this by, at each iteration and for each $n$, keeping track of the set of configurations $\CALLT X \omega$, such that $\omega$ has size $n$ and the stack length has not dropped below $n$ since $\CALLT X \omega$ was last visited.

To conclude, we can decide whether a system is contractive by applying the above procedure for every choice of $\CALLT X\varepsilon$ and $\CALLT X \sigma$. As there are only $|Q|(|\Delta|+1)$ (\ie polynomially many) choices, the total running time would be $\ocal(|Q|^2|\Delta|^2)$. This can be further reduced to $\ocal(|Q||\Delta|)$ since, after following the procedure above for a given choice of $\CALLT X \sigma$, we can infer whether $\CALLT {X'}{\sigma'}$ leads to an infinite sequence for all the $\CALLT {X'}{\sigma'}$ that were visited, and thus exclude them from our subsequent analysis. In other words, we can keep a lookup table that says, for every $\CALLT X \sigma$, whether it has been discovered before, and prune our search the next time we revisit $\CALLT X \sigma$.
\end{proof}

\end{document}